\newcommand{\bxi}{\bm{\xi}}
\newcommand{\bpsi}{\bm{\psi}}
\newcommand{\bmu}{\bm{\mu}}
\newcommand{\bmeta}{\bm{\eta}}
\newcommand{\btheta}{\bm{\theta}}
\newcommand{\e}{\mathbb{E}}
\newcommand{\f}{\mathcal{F}}
\newcommand{\bmf}{\boldsymbol{\mathcal{F}}}
\newcommand{\w}{\bm{w}}
\newcommand{\m}{\bm{m}}
\newcommand{\bvart}{\bm{\upsilon}}
\newcommand{\dkl}{D_{\textup{KL}}}
\newcommand{\mcl}{\text{col}}
\newcommand{\bT}{\mathbb{T}}
\DeclareMathOperator{\T}{\mathsf{T}}
\DeclareMathOperator*{\argmax}{arg\,max}
\DeclareMathOperator{\rank}{rank}
\DeclareMathOperator{\range}{RAN}
\DeclareMathOperator{\mynull}{NULL}
\def\hlinewd#1{%
	\noalign{\ifnum0=`}\fi\hrule \@height #1 \futurelet
	\reserved@a\@xhline}
\renewcommand{\fnum@algorithm}{\fname@algorithm}
\newtheorem{theorem}{Theorem}
\newtheorem{lemma}{Lemma}
\newtheorem{corollary}{Corollary}
\newtheorem{assumption}{Assumption}
\newtheorem{remark}{Remark}
\begin{document}

\title{Distributed Bayesian Learning of Dynamic States}

\author{Mert Kayaalp, Virginia Bordignon, Stefan Vlaski, Vincenzo Matta, and Ali H. Sayed
       \thanks{M. Kayaalp, V. Bordignon and A. H. Sayed are with the Adaptive Systems Laboratory, \'{E}cole Polytechnique F\'{e}d\'{e}rale de Lausanne (EPFL), CH-1015, Switzerland. S. Vlaski is with the Department of Electrical and Electronic Engineering, Imperial College London, London SW7 2AZ, UK. V. Matta is with the Department of Information and Electrical Engineering and Applied Mathematics (DIEM), University of Salerno, via Giovanni Paolo II, I-84084, Fisciano (SA), Italy. Emails: \{mert.kayaalp, virginia.bordignon, ali.sayed\}@epfl.ch, s.vlaski@imperial.ac.uk , vmatta@unisa.it .}
        \thanks{This work was supported in part by SNSF grant 205121-184999. A short version of this work was presented in \cite{kayaalp2022dslw}.}}
 

\maketitle

\begin{abstract}
This work studies networked agents cooperating to track a dynamical state of nature under partial information. The proposed algorithm is a distributed Bayesian filtering algorithm for finite-state hidden Markov models (HMMs). It can be used for sequential state estimation tasks, as well as for modeling opinion formation over social networks under dynamic environments. We show that the disagreement with the optimal centralized solution is asymptotically bounded for the class of geometrically ergodic state transition models, which includes rapidly changing models. We also derive recursions for calculating the probability of error and establish convergence under Gaussian observation models. Simulations are provided to illustrate the theory and to compare against alternative approaches.
\end{abstract}

\begin{IEEEkeywords}
distributed learning, Bayesian state estimation, social learning, distributed hypothesis testing, Bayesian filtering.
\end{IEEEkeywords}

\section{Introduction}\label{sec:introduction}
\IEEEPARstart{D}{istributed} inference refers to the problem where a collection of agents works collaboratively in order to estimate a hidden variable of interest. This problem is of fundamental importance to the design of distributed systems, as well to the study of opinion formation over social networks. The goal in the first case is to devise communication/computation-efficient algorithms that approach centralized levels of performance by relying solely on localized agent interactions. In comparison, the goal in the second case is to reason about behavioral phenomena occurring during social learning processes. 

The hidden variable (also called state or hypothesis) that the agents are interested in tracking is time-varying in many scenarios, such as the position of a moving object, the concentration of air pollutants, and the product quality of a brand. In all situations, the agents will attempt to cooperatively track the dynamic state by using observations emitted by the underlying physical systems.

This setting is general enough and can be used in many engineering applications, including target tracking, environmental monitoring, and opinion formation over networks. For example, consider an economic network where the individual agents are trying to decide which currency (e.g., USD, EUR, CHF) is the best option to buy now. The optimal choice (true hypothesis) can be changing rapidly. Most of the literature on social learning ignores the dynamic nature of the truth, or assumes slow transition models.

In this work, we propose a networked filtering algorithm to track the state of a general hidden Markov model (HMM). We also analyze the performance and steady-state behavior of the resulting distributed strategy. In this process, we clarify questions about the benefit of cooperation and the nature of equilibria in social networks under dynamic environments. More specifically, the following is a list of the main contributions:
\begin{itemize}
    \item In Sec. \ref{sec:dif_HMM}, we propose an HMM filtering algorithm for multi-agent networks. The algorithm requires only one round of communication between agents per state change. Moreover, it utilizes the knowledge of the transition model, which allows it to track highly dynamic states. 
    \item In Sec. \ref{sec:optimality_gap}, we study the deviation of the proposed algorithm from the optimal centralized strategy defined in Sec. \ref{sec:centralized_HMM}. Geometric ergodicity is the only assumption on the transition model. The specialization of the results to the single-agent case is also another contribution to the Bayesian filtering literature. 
    \item In Sec. \ref{sec:probability_error}, we provide recursive expressions for the probability of error across the network for the binary hypothesis testing case. Furthermore, under Gaussian data distributions, we obtain an asymptotic convergence result in \emph{distribution}. The result implies that the agents attain steady-state probability of errors, which can vary across the agents depending on their centrality.
    \item In Sec. \ref{sec:simulations}, we support the theoretical results with simulations. Furthermore, we compare the proposed algorithm to alternative methods, which are described in Sec. \ref{sec:alternative_algorithms}.
\end{itemize}

\section{Related Work}

\subsection{Distributed State Estimation}

We first comment on works related to the tracking of the state evolution of dynamical models, which is one of the applications where our methodology will apply. One traditional approach to distributed state estimation includes distributed Kalman filters \cite{olfati_2007,khan_2008,cattivelli_2010,talebi2021,qian2022,moradi2022}. These are based on {\em linear} dynamical models, which we do not assume in the present work. For general transition and observation models, a common approach is the useful Bayesian methodology \cite{hlinka_2012, battistelli_2014, dedecius_2017, bandyopadhyay_2018}. Based on the observed data, agents recursively update distributions over the set of states, which are called \emph{beliefs}. This approach exploits the complete information of a distribution as opposed to some statistics of it such as the mean. Moreover, in the discrete finite state-space case (which is the focus of this work), maximizing the belief computed as the {\em exact} Bayes posterior is optimal in terms of error probability \cite{krishnamurthy_2016}. For the multi-agent case, the optimal solution requires collecting all data at a central location, which might be problematic in terms of robustness and privacy. Hence, we adopt a \emph{distributed} Bayesian point of view, and use the centralized strategy as a baseline.

The works \cite{hlinka_2012, battistelli_2014} consider distributed Bayesian state estimation; nevertheless, they require multiple rounds of communication and consensus between agents per state change. This might not be possible under highly dynamic environments. We show in the sequel that it is sufficient to communicate {\em once} per iteration for the algorithm proposed in this work in order to guarantee a bounded disagreement with the optimal centralized solution. Moreover, the works \cite{hlinka_2012, dedecius_2017} consider observation models that are restricted to the exponential family of distributions for computational tractability. In the present work, we assume that the true state can take values from a finite set. Consequently, confining to analytically well-behaved distributions only, or using sample-based filters such as particle filters \cite{ustebay11,papa2019,bordin2022}, will not be necessary. The work \cite{bandyopadhyay_2018} proposes a distributed Bayesian filtering (DBF) algorithm with one round of communication at each iteration. The agents combine their neighbors' likelihoods from previous time with the likelihood of their own fresh observation to perform the update. However, as their modeling assumptions suggest, this approach would work only if the rate of change in the likelihood of the data and in the state evolution are slow. In the present work, the proposed algorithm requires fusing the neighbors' previous beliefs, and then \emph{time-adjusting} them before combining with the likelihood of the newly arrived data. The time-adjustment exploits the transition model and allows the agents to track fast changes in the state. Accordingly, the optimality gap established in Theorem \ref{th:kl_without_net_assumption} is shown to remain bounded for a large class of transition models, including fast mixing Markov chains. In Section \ref{sec:simulations}, we support with simulation results that time-adjustment before combining with fresh information increases the performance of the proposed strategy in comparison to DBF \cite{bandyopadhyay_2018}. More recently, reference \cite{calvo2022distributed} extended \cite{bandyopadhyay_2018} by using selective information sharing in order to reduce the communication complexity. Other related works include the work \cite{bruno2018}, which provides a Bayesian interpretation for distributed Kalman filters, and the work \cite{hua2019}, which proposes a filtering algorithm for linear models with unknown covariances.

\subsection{Social Learning}

We next comment on works related to learning and tracking the state of nature over social networks, which is another application where our proposed strategy will be applicable. In the social learning context, agents form opinions (or beliefs) about the underlying state of the environment by using their private observations and by interacting with neighboring agents \cite{acemoglu_2011, krishnamurthy_2013, bordignon_2021, ping2022, jadbabaie_2012}. In \emph{locally} Bayesian social learning\footnote{These algorithms are also called non-Bayesian learning in the literature.}, agents repeatedly update their beliefs based on new observations and combine them with their neighbors' beliefs using consensus \cite{jadbabaie_2012,nedic_2017,hare2020} or diffusion \cite{zhao_2012,lalitha_2018,valentina2021,inan2022social} strategies.

All these works assume a fixed state of nature, although in many settings the state is evolving over time. For example, the works \cite{acemoglu_2008, frongillo_2011, shahrampour_2013,dasaratha2018learning} model the environment as a linear dynamical system. To address non-stationary environments where the state of nature evolves over time, reference \cite{bordignon_2021} proposes the \emph{adaptive social learning} (ASL) strategy. The algorithm incorporates a step-size parameter that infuses adaptation into its operation; the algorithm, however, does not exploit or assume any existing transition model for the state. In many applications, it happens that the current state makes some future states more likely to occur than others. For example, consider again the same economic network mentioned earlier dealing with a choice of currencies. In this application, the agents could exploit information about existing correlation between currency pairs, e.g., it is widely known that gold and the Swiss Franc are highly correlated. In Section \ref{sec:simulations}, we show by means of numerical simulations that making use of the transition model can significantly increase the tracking performance of the proposed strategy in comparison to ASL. \\

\noindent\textbf{Notation:} Boldface letters are reserved for random variables, e.g., \( \bm{x}_i \). All logarithms are natural logarithms. The symbol \( \mathds{1}_{K} \) denotes the all-ones vector of size \( K \). For a probability mass distribution \( \mu (\theta) \) over a finite set \( \theta \in \Theta \), the notation \( \mu (\theta) \propto \digamma (\theta)\) refers to the normalization:
\begin{align}
    \mu (\theta) = \frac{\digamma (\theta)}{\sum_{\theta^\prime \in \Theta }\digamma (\theta^\prime)}.
\end{align}
The notation \( \dkl(\mu_1 || \mu_2) \) represents the Kullback-Leibler (KL) divergence between the distributions \( \mu_1 \) and \( \mu_2 \).

\section{Problem Formulation}\label{sec:problem_formulation}

We consider a network of \( K \) agents, denoted by \( \mathcal{N}\), that are cooperating to track some \emph{dynamic} state of nature, denoted by \(\btheta_i^\circ\) at time \(i\). Agents exchange \emph{beliefs} with each other while respecting a graph topology restricting the communication. The belief \( \mu_{k,i} (\theta) \) at agent $k$ and time $i$ is a probability distribution over a finite set of \( H \) possible hypotheses, i.e., \(\theta\in\Theta = \{0,1,..,H-1\}\). The value \( \mu_{k,i}(\theta)\) represents the confidence level that agent \( k \) has at time \( i \) about \( \theta \) being the true hypothesis \( \btheta_i^\circ \) (which is also assumed to belong to the set \( \Theta \)). The true hypothesis is a random variable and it will be assumed to evolve according to some Markov chain, known to all agents. We use the following notation for the transition model:
\begin{align}
\bT (\theta_i | \theta_{i-1}) \triangleq \mathbb{P} (\btheta_i^\circ = \theta_i | \btheta_{i-1}^\circ = \theta_{i-1}).
\end{align}
The problem setting is as follows. At each time instant \( i \), each agent \( k \) receives a partially informative observation \( \bxi_{k,i}  \) about the true hypothesis \( \btheta_i^\circ \). Conditioned on \( \btheta_i^\circ \), the observation is distributed according to some likelihood function known to agent \( k \), and denoted by \( L_k (\bxi_{k,i} | \btheta_i^\circ)\). These agent-specific likelihoods can be probability density or mass functions depending on whether the observations are continuous or discrete. In the sequel, for ease of notation, we assume the observations are continuous. Nevertheless, our analysis is also valid for discrete observations with proper adjustments, e.g., by changing integrals to summations. Next, we state some common assumptions.

\begin{assumption}[{\bf Independent observations}\cite{hlinka_2012,shahrampour_2013}]\label{as:independence}
Conditioned on the true state, the observations are independent over space. More specifically, let \( \bxi_i \triangleq \{ \bxi_{k,i} \}_{k=1}^K \), collect all observations from across the agents at time \( i \). Then, the joint likelihood is given by, 
\begin{align}
L (\bxi_i | \btheta_i^\circ) = \prod_{k=1}^K L_k (\bxi_{k,i} | \btheta_i^\circ).
\end{align}
\end{assumption} \qed

In this work, agents will be required to communicate only once with their neighbors per iteration. The underlying communication topology is assumed to satisfy the following condition. 
\begin{assumption}[{\bf Strongly connected graph} \cite{hlinka_2012,shahrampour_2013,nedic_2017}]\label{as:network_top}
The graph topology is strongly connected \cite{sayed_2014}, which means that there exists a path between any pair of agents \( (k,\ell)\), and, moreover, there exists at least one agent \( k_{\circ}\) with a self-loop (i.e., \(a_{k_\circ,k_\circ}>0\)). Under these conditions, the combination matrix \( A = [a_{\ell k}]\) turns out to be primitive. Here, the entry \( a_{\ell k} \geq 0 \) is the weight that agent \( k \) uses to scale information sent by agent \( \ell \). This weight will be positive if, and only if, agent \(\ell\) is in the neighborhood of \(k\), written as \( \ell\in\mathcal{N}_k\)---see Figure~\ref{fig:network_problem} for a diagram representation. Moreover, we assume that $A$ is a doubly-stochastic and symmetric matrix, namely,
\begin{align}
A\mathds{1}_{K}=\mathds{1}_{K}, \quad A=A^{\T}.
\end{align}
\qed
\end{assumption} 
\noindent Strong connectivity causes the information to disperse throughout the entire network given sufficient iterations. When the true hypothesis is fixed (i.e., \( \btheta_i^\circ=\theta^\circ\)), this allows the agents to reach agreement and learn \(\theta^\circ\) almost surely \cite{jadbabaie_2012,zhao_2012,nedic_2017,lalitha_2018}. However, strong connectivity is not sufficient for network agreement if the true hypothesis is changing rapidly before local information reaches other agents, as in the current work. 

\begin{figure}[ht]
	\centering
	\includegraphics[width=2.2in]{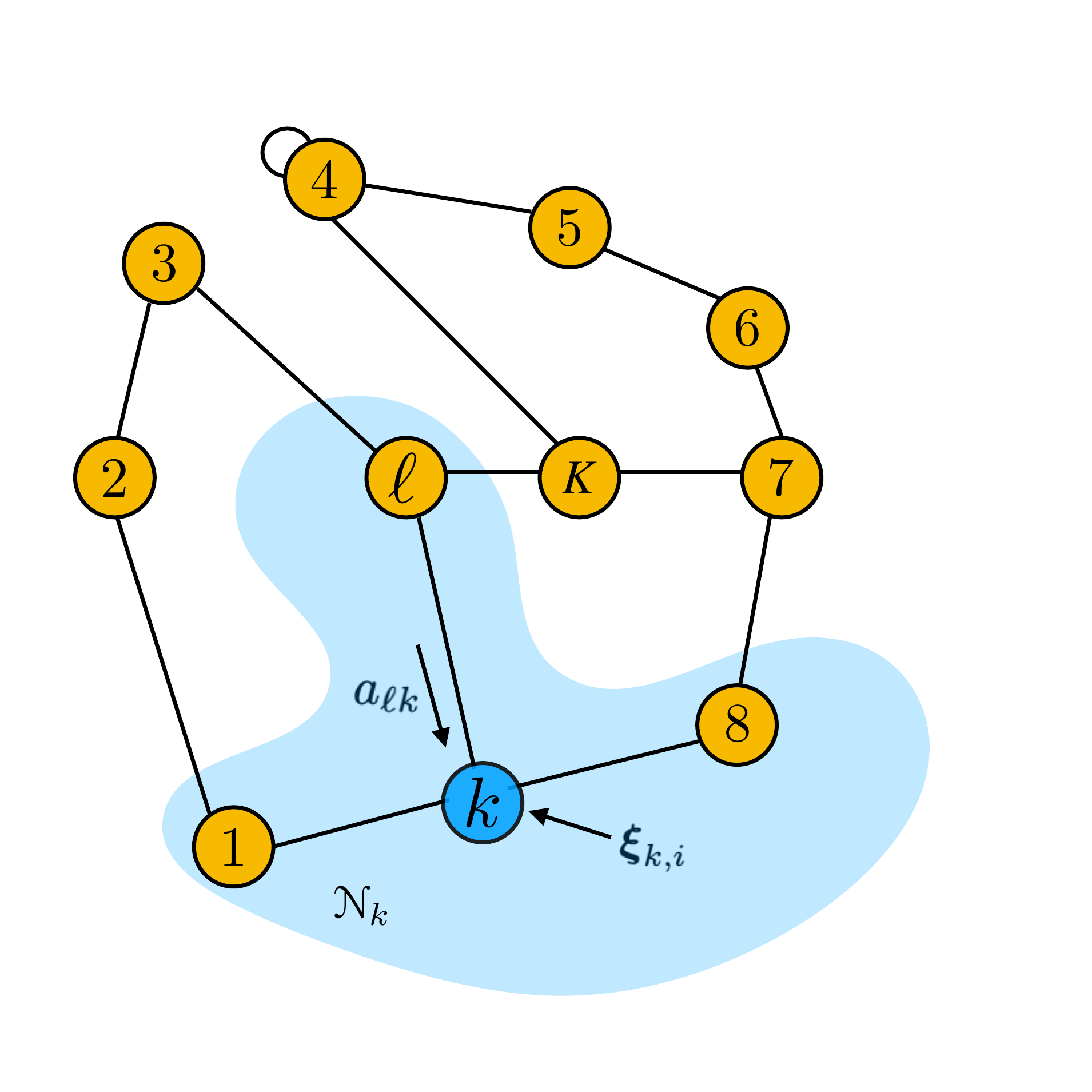}
	\caption{Agents are connected by a graph topology. The neighborhood of agent $k$ is highlighted in blue color.}
	\label{fig:network_problem}
\end{figure}

\subsection{Optimal Centralized Belief Recursion}\label{sec:centralized_HMM}
Let us denote the observation history of all agents across the network up to time \(i\) by \( \bmf_i \triangleq \{\bxi_j \}_{j=1}^i \). Likewise, let us denote the posterior distribution (or belief), which is a probability mass function (pmf) due to the assumed finite state-space model, by the notation:
\begin{align}\label{eq:true_posterior}
\bmu_i^\star (\theta_i) \triangleq \mathbb{P} (\btheta_i^\circ = \theta_i | \bmf_i).
\end{align}
It is known that the above distribution satisfies the optimal Bayesian filtering recursion \cite{krishnamurthy_2016}:
\begin{align}\label{eq:centralized_posterior}
 \bmu_i^\star (\theta_i) \propto L(\bxi_i|\theta_i) \bmeta_i^\star (\theta_i),
\end{align}
where \( \bmeta_i^\star (\theta_i) \) is the time-adjusted prior defined by
\begin{align}\label{eq:ck_cent}
\bmeta_i^\star (\theta_i) &\triangleq  \mathbb{P} (\btheta_i^\circ = \theta_i | \bmf_{i-1}) \notag \\
&=\sum_{\theta_{i-1}\in\Theta} \bT(\theta_{i}| \theta_{i-1}) \bmu_{i-1}^\star(\theta_{i-1}).
\end{align}
Once the posterior is updated by \eqref{eq:true_posterior}, the state estimator at time \( i \) is obtained from the maximum a-posteriori construction:
\begin{align}\label{eq:centralized_estimator}
    \widehat{\btheta}_{i}^{\star} \triangleq \argmax_{\theta_i\in \Theta} \bmu_i^\star (\theta_i) 
\end{align}
The main challenge with this solution method is that it requires a fusion center to gather all data from across time and agents. In Section \ref{sec:optimality_gap}, we will examine how close the beliefs generated by the proposed decentralized algorithm will get to the above centralized posterior given by \eqref{eq:centralized_posterior}--\eqref{eq:ck_cent}.
\begin{remark}[{\bf Sequence estimation}]
In this work, the focus is on estimating the \emph{current} state from past observations (i.e., causal estimation). If a sequence of hidden states, including both future or past states, is to be estimated, then single state-estimators can be combined with dynamic programming principles, such as in the Viterbi algorithm \cite{viterbi_1967,krishnamurthy_2016}.\qed
\end{remark}

\section{Decentralized Bayesian Filtering}
The centralized solution \eqref{eq:centralized_posterior}--\eqref{eq:ck_cent} can be disadvantageous for various reasons: (i) collecting all data at a single fusion location makes the system vulnerable with a single point of failure; (ii) the agents may be reluctant to share their raw data with a remote central processor for privacy or security reasons; and (iii) communications back and forth with a remote fusion center is costly. For these reasons, we pursue instead a decentralized solution that is able to approach the performance of the centralized solution. In the decentralized approach, agents will only share data with their immediate neighbors; actually, the agents will not be required to share their raw data but only their updated belief vectors. The resulting solution will be more robust to node or link failure and more communication efficient, and will lead to an effective solution method.

\subsection{Diffusion HMM Filtering}\label{sec:dif_HMM}
The streaming observations arriving at each agent are generally only partially informative about the true state of nature, \( \btheta_i^\circ\). For this reason, agents will need to cooperate with their neighbors, thus leading to a learning mechanism that allows information to diffuse through the network for enhanced performance. To do so, we propose a \emph{social} Bayesian filtering algorithm where cooperation among agents takes advantage of the notion of diffusion learning (see, e.g., \cite{sayed_2014,sayedProc}). 

Specifically, at  each time instant \( i \), every agent \( k \) first time-adjusts or evolves its belief from \( i-1 \), denoted by \(\bmu_{k,i-1}(\theta_{i-1})\),  via the Chapman-Kolmogorov equation \cite{krishnamurthy_2016} and generates an updated prior denoted by \(\bmeta_{k,i}(\theta_i)\):
\begin{align}\label{eq:dif_evolve_step_1}
\bmeta_{k,i} (\theta_{i}) &= \sum_{\theta_{i-1} \in \Theta} \bT(\theta_{i}| \theta_{i-1}) \bmu_{k,i-1}(\theta_{i-1}) \qquad \text{(Evolve)}
\end{align}
This relation is motivated by the optimal update \eqref{eq:ck_cent}, except that optimal beliefs are replaced by their local versions at agent \(k\). In the next step, agents seek to incorporate the information from their newly arrived private observations. This can be achieved by considering the following regularized optimization problem:
\begin{align}\label{eq:fusion_rule_objective}
\min_{\psi\in\Delta_H}\Big\{D_{\text{KL}}(\psi||\bmeta_{k,i})-\gamma \,\e_{\psi}\log L_k(\bxi_{k,i}|\theta_i)\Big\}
\end{align}
where $\Delta_H$ is the probability simplex of dimension $H$, and $\e_{\psi}$ is the expectation computed with respect to $\psi$, i.e.,
\begin{align}
    \e_{\psi} \log L_k(\bxi_{k,i}|\theta_i) \triangleq \sum_{\theta_i \in \Theta} \psi (\theta_i) \log L_k(\bxi_{k,i}|\theta_i).
\end{align}
The objective function in \eqref{eq:fusion_rule_objective} consists of two terms. The first term is the KL-divergence term that penalizes the disagreement with the time-adjusted prior \( \{ \bmeta_{k,i} \} \). The second term corresponds to the log-likelihood of the observation \( \bxi_{k,i} \) averaged over the hypotheses with respect to \( \psi \). The cost in \eqref{eq:fusion_rule_objective} then seeks to minimize disagreement with the prior while maximizing the likelihood of the observation; the two terms are coupled by a regularization parameter $\gamma>0$. As we show in the sequel, different special cases of the problem setting might necessitate different \( \gamma \) values, e.g., \( \gamma =K, \gamma=1 \). Therefore, we continue with a general parameter \( \gamma >0 \). The objective function in \eqref{eq:fusion_rule_objective} can be expanded as
\begin{align}\label{eq:adaptation_obj_expansion}
   \sum_{\theta_i \in \Theta}  \psi(\theta_i) &\Big ( \log \frac{\psi(\theta_i)}{\bmeta_{k,i}(\theta_i)} -\gamma \log L_k(\bxi_{k,i}|\theta_i) \Big) \notag \\
   &=\sum_{\theta_i \in \Theta}  \psi(\theta_i) \log \frac{\psi(\theta_i)}{\bmeta_{k,i}(\theta_i)  (L_k(\bxi_{k,i}|\theta_i))^\gamma}
\end{align}
The RHS of \eqref{eq:adaptation_obj_expansion} is a KL-divergence under a proper normalization. Minimizing it results in the following \emph{local} \( \gamma\)-scaled Bayesian adaptation step for each agent:
\begin{align}\label{eq:dif_adapt_step_1}
 \bpsi_{k,i} (\theta_{i}) &\propto (L_k(\bxi_{k,i} | \theta_{i}))^{\gamma}\bmeta_{k,i} (\theta_{i}) \qquad \text{(Adapt)}
\end{align}
where \( \gamma > 0 \) scales the likelihood of the new observation against prior information. After agents independently obtain their intermediate beliefs $\bpsi_{k,i}$ according to \eqref{eq:dif_evolve_step_1} and \eqref{eq:dif_adapt_step_1}, they exchange these beliefs with their neighbors. Each agent $k$ will then need to fuse the beliefs received from the neighbors, and one way to do so is to seek the belief vector $\mu$ that solves~\cite{koliander2022,heskes97}:
\begin{align}\label{eq:fusion_objective_2}
\min_{\mu\in\Delta_H}\left\{\sum_{\ell\in\mathcal{N}_k}a_{\ell k}D_{\text{KL}}(\mu||\bpsi_{\ell,i})\right\}.
\end{align}
This objective function penalizes the average disagreement with the neighbors' intermediate beliefs and it can be expanded as
\begin{align}\label{eq:fusion_obj_expansion_2}	
\sum_{\theta_i\in\Theta}\mu(\theta_i)\sum_{\ell\in\mathcal{N}_k}&a_{\ell k}\log\frac{\mu(\theta_i)}{\bpsi_{\ell,i}(\theta_i)}\notag\\
&=\sum_{\theta_i\in\Theta}\mu(\theta_i)\log\frac{\mu(\theta_i)}{\prod_{\ell\in\mathcal{N}_k}[\bpsi_{\ell,i}(\theta_i)]^{a_{\ell k}}}
\end{align}
where the term on the RHS of \eqref{eq:fusion_obj_expansion_2} can be seen as a KL divergence under proper normalization, whose minimizer is given by the following geometric-average combination:
\begin{align}\label{eq:dif_combine_step_1}
  \bmu_{k,i}(\theta_{i}) &\propto \prod_{\ell \in \mathcal{N}_k}  \big ( \bpsi_{\ell,i} (\theta_{i}) \big )^{a_{\ell k}}   \quad \text{(Combine)}.
\end{align}
Exchanging and combining the beliefs repeatedly allows the local information to diffuse through the network. The complete procedure leads to the {\em diffusion HMM strategy} (DHS), which is listed in \eqref{eq:dif_evolve_step}--\eqref{eq:dif_combine_step}.

\begin{algorithm}[]
 \caption{{Diffusion HMM strategy (DHS)}}
 \begin{algorithmic}[1]
  \item set initial beliefs $\mu_{k,0}(\theta)>0$, $\forall k\in \mathcal{N}$ and $\forall\theta\in\Theta$ 
  \item choose $\gamma>0$
 \While{$i\geq 1$}
 \FORA{each agent $k\in\mathcal{N}$}
\FORB{each hypothesis $\theta_i\in\Theta$ }
 \begin{equation}\label{eq:dif_evolve_step}
  \bmeta_{k,i} (\theta_{i}) = \sum_{\theta_{i-1} \in \Theta} \bT(\theta_{i}| \theta_{i-1}) \bmu_{k,i-1}(\theta_{i-1}) 
\end{equation}
\ENDFORB
\State agent $k$ observes $\bxi_{k,i}$
\FORC{each hypothesis $\theta_i\in\Theta$ }
\begin{equation}\label{eq:dif_adapt_step}
    \bpsi_{k,i} (\theta_{i}) \propto (L_k(\bxi_{k,i} | \theta_{i}))^{\gamma}\bmeta_{k,i} (\theta_{i})  
\end{equation}
\ENDFORC
\ENDFORA
\FORD{each agent $k\in\mathcal{N}$ and each hypothesis $\theta_i\in\Theta$}
\begin{equation}\label{eq:dif_combine_step}
    \bmu_{k,i}(\theta_{i}) \propto \prod_{\ell \in \mathcal{N}_k}  \big ( \bpsi_{\ell,i} (\theta_{i}) \big )^{a_{\ell k}}    
\end{equation}
\ENDFORD
\State \( i \leftarrow i+1 \)
\EndWhile
 \end{algorithmic} \label{algorithm1}
 \end{algorithm}

The proposed DHS algorithm can be seen as a generalization of the following special cases:
\begin{itemize}
    \item When the network consists of a single-agent, i.e., \( K = 1\), the strategy is equivalent to the traditional optimal Bayesian filtering algorithm \cite[Chapter 3]{krishnamurthy_2016} when the local updates are Bayesian, i.e., when \( \gamma=1 \).
    \item If \( \gamma=1 \), and the true hypothesis is \emph{fixed}, i.e.,
\begin{align}\label{eq:fixed_truth_model}
\bT(\theta_i | \theta_{i-1}) = \begin{cases} 
      1, & \theta_i=\theta_{i-1} \\
      0, & \theta_i\neq\theta_{i-1}
   \end{cases},
\end{align} 
then, the algorithm reduces to the canonical log-linear social learning algorithms \cite{nedic_2017, lalitha_2018,matta_2020}. 
\item The beliefs of agents will match the optimal centralized belief \eqref{eq:centralized_posterior} \emph{exactly}, if the network is fully-connected with \( a_{\ell k} = 1/K \) \(\forall \ell,k\in\mathcal{N}\), all initial priors are equal (\(\mu_0^\star = \mu_{k,0}\), \(\forall k \in \mathcal{N}\)), and \( \gamma = K \). This conclusion follows by induction. First, assume that for each agent \( k \), \( \bmu_{k,i-1}=\bmu_{i-1}^\star \). This would imply that \(\bmeta_{k,i}=\bmeta_{i}^\star \) by the equivalence of the time-adjustment steps \eqref{eq:ck_cent} and \eqref{eq:dif_evolve_step}. Combining the adapt \eqref{eq:dif_adapt_step} and combine \eqref{eq:dif_combine_step} steps, the belief of agent \( k \) at time \( i \) then becomes
\begin{align}\label{eq:matching_centralized}
    \bmu_{k,i}(\theta_{i}) &\propto \prod_{\ell \in \mathcal{N}_k}  \big (L_k(\bxi_{k,i} | \theta_{i})\big )^{\gamma a_{\ell k}}\big (\bmeta_{k,i} (\theta_{i})  \big )^{a_{\ell k}}  \notag \\
    &\propto \bmeta_{i}^\star (\theta_{i})   \prod_{\ell=1}^K  L_k(\bxi_{k,i} | \theta_{i})
\end{align}
which is equivalent to the centralized update \eqref{eq:centralized_posterior}. Since the base case \( \mu_0^\star = \mu_{k,0}\) also holds, we conclude by induction that, the beliefs at all iterations will match the centralized belief.
\end{itemize}
We continue with the general strategy \eqref{eq:dif_evolve_step}--\eqref{eq:dif_combine_step}. Being motivated by different \( \gamma \) values for different special cases, we use a general step-size \( \gamma > 0\). We will assume the following condition on the initial beliefs to make sure  that no states are discarded from consideration.
\begin{assumption}[{\bf Initial priors} \cite{bandyopadhyay_2018,nedic_2017}]\label{as:positive_initial_beliefs}
All initial beliefs are strictly positive at all hypotheses, i.e., for each hypothesis \( \theta \in \Theta \) and for each agent \( k \), \( \mu_{k,0} (\theta) > 0 , \mu_0^\star(\theta) > 0\).
\qed
\end{assumption}
Also, to avoid pathological cases, we assume a finite KL-divergence condition between likelihoods. Namely, for any agent \( k \) and hypotheses \( \theta, \theta^\prime \in \Theta\) it holds that
\begin{align}
    \dkl ( L_k (\cdot| \theta) || L_k (\cdot | \theta^\prime) ) < \infty.
\end{align}
\noindent This condition ensures that the likelihoods at every agent for all hypotheses share the same support.

\subsection{Alternative Algorithms}\label{sec:alternative_algorithms}
The strategy proposed in Section \ref{sec:dif_HMM} is a \emph{diffusion}-based algorithm with \emph{geometric} averaging (GA), a.k.a. logarithmic opinion pooling. It assumes knowledge of the transition kernel \( \bT \).  One can also consider the following variations:
\begin{itemize}
    \item \textbf{Diffusion-AA:} In this case, step \eqref{eq:dif_combine_step} is replaced by 
    \begin{align}\label{eq:dif_aa_combine_step}
   \bmu_{k,i}(\theta_{i}) &= \sum_{\ell \in \mathcal{N}_k} a_{\ell k} \bpsi_{\ell,i} (\theta_{i}) 
    \end{align}
    where \emph{arithmetic} averaging (AA) is used in place of GA. In \cite{kayaalp2022_aa_ga}, it is shown that GA outperforms AA in terms of convergence rate in the fixed hypothesis case. \\
    \item \textbf{Consensus-GA:} In diffusion, agents exchange and combine the \emph{updated} intermediate beliefs \(\{\bpsi_{\ell,i}\}\). In contrast, in consensus \cite{jadbabaie_2012}, agents combine their intermediate belief with the neighbors' belief {\em prior} to updating, such as replacing \eqref{eq:dif_combine_step} by
        \begin{align}\label{eq:consensus_ga}
   \bmu_{k,i}(\theta_{i})  \propto  \big( \bpsi_{k,i} (\theta_{i}) \big)^{a_{k k}}  \prod\limits_{\ell \in \mathcal{N}_k \setminus \{k\}}  \big ( \bmeta_{\ell,i} (\theta_{i}) \big )^{a_{\ell k}}.
\end{align}
    \item \textbf{Adaptive social learning} (ASL)  \cite{bordignon_2021}: The combination step in ASL is also given by \eqref{eq:dif_combine_step}. However, agents do not utilize the transition model. Specifically, there is no evolution step \eqref{eq:dif_evolve_step} and the adaptation step \eqref{eq:dif_adapt_step} is modified to
    \begin{align}\label{eq:asl_adapt_step}
 \bpsi_{k,i} (\theta_{i}) \propto L_k(\bxi_{k,i} | \theta_{i})(\bmu_{k,i-1} (\theta_{i}))^{1-\delta},
\end{align}
where \( 0< \delta < 1 \) is a design parameter. The purpose of the parameter \( 1-\delta \) is to act as a forgetting factor to endow the ASL algorithm with the ability to track drifts in the state under non-stationary conditions. In contrast, in the proposed DHS strategy, tracking is done via the evolution step \eqref{eq:dif_evolve_step}. 
\end{itemize}

We compare these algorithms against the proposed strategy \eqref{eq:dif_evolve_step}--\eqref{eq:dif_combine_step} in Section \ref{sec:simulations}.

\section{Optimality Gap}\label{sec:optimality_gap}

In this section, we analyze the disagreement between the diffusion HMM strategy \eqref{eq:dif_evolve_step}--\eqref{eq:dif_combine_step} and the centralized solution \eqref{eq:centralized_posterior}--\eqref{eq:ck_cent}. For this section alone, we assume a regularity condition on the likelihood functions for technical reasons (similar to what was done in \cite{shahrampour_2016}).
\begin{assumption}[{\bf Regularity condition}]\label{as:likelihood_functions}
The absolute log-likelihood functions are uniformly bounded over their support for all agents:
\begin{align}
   \big |\log L_k (\xi | \theta) \big | \leq C_L, \qquad \forall k \in \mathcal{N}, \theta \in \Theta .
\end{align}
\qed
\end{assumption}
\noindent Assumption~\ref{as:likelihood_functions} implies that the likelihood functions do not get arbitrarily close to zero or arbitrarily large in their support. This ensures that each private signal \( \xi\) has bounded informativeness. For example, discrete signal space models or truncated Gaussian likelihoods satisfy this assumption.

Next, we introduce conditions on the transition model, which will play a crucial part in the analysis.

\subsection{Transition Model}

We assume that the transition Markov chain is \emph{irreducible} and \emph{aperiodic} \cite[Chapter 2]{resnick2002}. This means that there exists a constant integer \( n > 0\) such that for any two hypotheses \( \theta, \theta^\prime \in \Theta \):
\begin{align}
\bT^n (\theta | \theta^\prime) > 0,    
\end{align}
where \( \bT^n \) is the \(n\)-fold application of the transition kernel. This condition also implies that the Markov chain is \emph{ergodic} because the number of hypotheses \( H \) is finite \cite[Chapter 2]{resnick2002}. In other words, repeated application of the transition kernel \( \bT \) will converge to a limiting distribution regardless of the initial input distribution. More formally, for any input distribution \( \mu \in \Delta_H\),
    \begin{align}
    \lim_{n \to \infty} \sum_{\theta^\prime \in \Theta} \bT^n(\theta| \theta^\prime) \mu(\theta^\prime) = \pi (\theta), 
\end{align}
where \( \pi \) is the Perron vector of the \( H\times H\) transition matrix \( T \triangleq [\bT(\theta|\theta^\prime)]\). In this work, we consider the \emph{geometrically ergodic} \cite[Chapter~2]{krishnamurthy_2016} subclass of transition models. In the following, we define these models using the strong-data processing inequality (SDPI) \cite{polyanskiy_2017}. \\

\noindent \textbf{Strong-data processing inequality (SDPI)}\cite{polyanskiy_2017}: Consider any two discrete distributions over \( \Theta \), \( \mu^a \) and \( \mu^b \),  satisfying \( 0 < \dkl (\mu^a || \mu^b) < \infty \), and introduce their time-adjusted versions according to the Chapman-Kolmogorov equation as in \eqref{eq:dif_evolve_step}:
\begin{align}
    \eta^a (\theta_{i}) &= \sum_{\theta_{i-1} \in \Theta} \bT(\theta_{i}| \theta_{i-1}) \mu^a(\theta_{i-1}) 
\end{align}
and similarly for \( \eta^b \). Then, the SDPI states that:
\begin{align}
    \dkl (\eta^a || \eta^b ) \leq \kappa_{\text{KL}} (\bT) \dkl (\mu^a || \mu^b)
\end{align}
where \( \kappa_{\text{KL}} (\bT) \in [0,1] \) is a contraction coefficient defined as
\begin{align}
    \kappa_{\text{KL}} (\bT) \triangleq \sup_{\mu^a,\mu^b} \frac{\dkl (\eta^a || \eta^b ) }{\dkl (\mu^a || \mu^b )}.
\end{align}
Observe that the coefficient is only dependent on the transition model, and is not a function of the input distributions. An upper bound on \( \kappa_{\text{KL}} (\bT) \) is given by the Dobrushin's contraction coefficient \cite{dobrushin_1956,polyanskiy_2017,krishnamurthy_2016}, which is defined by 
\begin{align}
    \kappa (\bT) \triangleq \sup_{\theta^\prime,\theta^{\prime\prime} \in \Theta} \frac{1}{2} \sum_{\theta \in \Theta} \Big | \bT(\theta |\theta^\prime) - \bT(\theta |\theta^{\prime\prime}) \Big | \quad \in [0,1],
\end{align}
It is known that \cite{polyanskiy_2017}:
\begin{align}
     \kappa_{\text{KL}} (\bT) \leq \kappa (\bT), \qquad  \kappa_{\text{KL}} (\bT) = 1 \Longleftrightarrow \kappa (\bT) = 1 .
\end{align}
For example,
\begin{itemize}
    \item If the transition model is a binary symmetric channel:
\begin{equation}
\bT(\theta_i | \theta_{i-1}) = \begin{cases} 
      1-\alpha, & \theta_i=\theta_{i-1} \\
      \alpha, & \theta_i\neq\theta_{i-1}
   \end{cases},
\end{equation}
then \( \kappa(\bT) = | 1-2\alpha |\) \cite{polyanskiy_2017}. Notice that this is a symmetric function around the transition probability \( \alpha = 0.5 \), e.g., \( \alpha = 0.2 \) and \( \alpha = 0.8\) yield the same coefficient.
    \item The coefficient \( \kappa (\bT) = 0 \) if, and only if \cite[Ch. 2]{krishnamurthy_2016}, 
    \begin{equation}
        \bT(\theta_i | \theta_{i-1}) = \pi(\theta_i).
    \end{equation} 
    Note that this implies, for any \( \mu \in \Delta_H\),
    \begin{align}\label{eq:pi_transition}
    \sum_{\theta_{i-1} \in \Theta} \bT(\theta_{i}| \theta_{i-1}) \mu(\theta_{i-1}) &= \sum_{\theta_{i-1} \in \Theta} \pi(\theta_i) \mu(\theta_{i-1}) \notag \\ &=\pi(\theta_i)  \end{align}
    In other words, the transition kernel will output the same distribution \( \pi(\theta) \) regardless of the input distribution \( \mu(\theta)\). Observe that the rapidly mixing binary symmetric channel with transition probability \( \alpha = 0.5\) is an example of this case.
\end{itemize}
In general, for two input distributions \( \mu^a \) and \( \mu^b \), the output distributions resulting from an \(n\)-fold application of the transition kernel, i.e., 
\begin{align}
    \eta_n^a (\theta_{i}) &= \sum_{\theta_{i-n} \in \Theta} \bT^n(\theta_{i}| \theta_{i-n}) \mu^a(\theta_{i-n}),
\end{align}
and similarly for \( \eta_n^b \), satisfy the following SDPI:
\begin{align}\label{eq:sdpi_n_fold}
    \dkl (\eta_n^a || \eta_n^b ) \leq (\kappa (\bT))^n \dkl (\mu^a || \mu^b)
\end{align}
It is clear that if \( \kappa (\bT) < 1 \), then the disagreement between any two input distributions will approach 0 exponentially fast. Transition models for which \( \kappa (\bT) < 1 \) are said to be geometrically ergodic \cite[Ch. 2]{krishnamurthy_2016}. It is seen from \eqref{eq:sdpi_n_fold} that the coefficient \( \kappa (\bT)\) is a measure of how rapidly the initial conditions are forgotten. In particular, as \(\kappa (\bT) \to 0 \), forgetting is faster.
\begin{assumption}[{\bf Transition model}]\label{assumption:geometrically_ergodic}
The transition model \( \mathbb{T} \) is assumed to be geometrically ergodic, i.e., \( \kappa (\bT) < 1\). \qed
\end{assumption}

The class of geometric ergodic transition models comprises a large group of transition models. For instance, non-deterministic binary symmetric channels, i.e., with a transition probability \( \alpha \in (0,1)\), are geometrically ergodic. Moreover, transition matrices with all positive elements, and in general, those that satisfy a minorization condition \cite[Theorem 2.7.4]{krishnamurthy_2016} are examples of geometrically ergodic transition models. However, the geometrically ergodic class excludes some transition models such as the fixed hypothesis case, where \(\kappa (\bT) = 1\). We elaborate more on this issue in the sequel.

\subsection{Disagreement with the Centralized Strategy}

We introduce the following time-varying risks to compare the performance of the diffusion HMM strategy \eqref{eq:dif_evolve_step}--\eqref{eq:dif_combine_step} with the centralized solution \eqref{eq:centralized_posterior}--\eqref{eq:ck_cent}:
\begin{align}\label{eq:filter_risk}
J_{k,i}&\triangleq\e_{\f_i} \dkl(\bmu_i^\star || \bmu_{k,i}) 
\end{align}
and
\begin{align}\label{eq:prior_risk}
\widetilde{J}_{k,i}&\triangleq\e_{\f_{i-1}} \dkl(\bmeta_i^\star || \bmeta_{k,i}) 
\end{align}
where  \(\e_{\f_i}\) represents expectation over the distribution of \( \f_i \), which collects all observations from across the network until time $i$. Notice that the risks in \eqref{eq:filter_risk}--\eqref{eq:prior_risk} are not random variables, since the corresponding KL-divergences are averaged over all possible realizations of observations. The \emph{posterior} risk \( J_{k,i} \) in \eqref{eq:filter_risk} is the disagreement between the belief of agent \( k \) and the centralized belief at time \( i \), \emph{after} the observations in \( \bxi_i \) have been  emitted from that hypothesis. In comparison, the risk \( \widetilde{J}_{k,i} \) in \eqref{eq:prior_risk} is the divergence of time-adjusted \emph{priors}, which measures the disagreement \emph{before} the observations have been emitted. 


Our first result establishes that the disagreement between the centralized and distributed solutions is asymptotically bounded for all agents in the network.
\begin{theorem}[{\bf Asymptotic bounds}]\label{th:kl_without_net_assumption}For each agent \( k \),
under Assumptions \ref{as:independence}--\ref{assumption:geometrically_ergodic}, the risks \eqref{eq:filter_risk} and \eqref{eq:prior_risk} are asymptotically bounded, namely,
\begin{align}\label{eq:th1_risk}
\limsup_{i \rightarrow \infty} J_{k,i} \leq  \frac{2\sqrt{K} \gamma  \lambda C_L}{1-\kappa (\bT)}   
\end{align}
and
\begin{align}\label{eq:th1_prior}
\limsup_{i \rightarrow \infty} \widetilde{J}_{k,i} \leq  \frac{2\kappa(\bT)\sqrt{K} \gamma  \lambda C_L}{1-\kappa (\bT)}  
\end{align}
where \( \lambda \triangleq \max \{ |1-\frac{K}{\gamma}|, \rho_2 \} \), and \( \rho_2 \) is the second largest modulus eigenvalue of \( A \). 
\end{theorem}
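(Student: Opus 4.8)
The plan is to separate each iteration of the diffusion HMM strategy into its two distinct phases: the probability-domain \emph{evolve} step, which is naturally controlled by the strong data-processing inequality, and the log-linear \emph{adapt} and \emph{combine} steps, which are controlled by the spectral structure of the doubly-stochastic symmetric matrix $A$ together with the boundedness in Assumption~\ref{as:likelihood_functions}. The first clean fact is that the evolve step contracts the KL disagreement: since $\bmeta_i^\star$ and $\bmeta_{k,i}$ are the Chapman--Kolmogorov images of $\bmu_{i-1}^\star$ and $\bmu_{k,i-1}$ under the \emph{same} kernel $\bT$, the SDPI (with $\kappa_{\text{KL}}(\bT)\le\kappa(\bT)$) gives $\dkl(\bmeta_i^\star\|\bmeta_{k,i})\le\kappa(\bT)\dkl(\bmu_{i-1}^\star\|\bmu_{k,i-1})$, and hence after taking expectations $\widetilde{J}_{k,i}\le\kappa(\bT)\,J_{k,i-1}$. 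This relation already shows that the prior bound \eqref{eq:th1_prior} is exactly $\kappa(\bT)$ times the posterior bound \eqref{eq:th1_risk}, so it suffices to establish \eqref{eq:th1_risk} and then substitute.

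For the adapt-plus-combine phase I would work in the log-domain modulo $\theta$-independent normalization. Writing $\log\bmu_{k,i}=\sum_{\ell}a_{\ell k}\log\bpsi_{\ell,i}$ (mod const) and using $A\mathds{1}=\mathds{1}$, the per-hypothesis log-error $e_{k,i}(\theta)=\log(\bmu_i^\star(\theta)/\bmu_{k,i}(\theta))$ obeys $e_{k,i}=\big[(\mathds{1}\mathds{1}^{\T}-\gamma A)\mathbf{L}_i\big]_k+\sum_\ell a_{\ell k}\,e^{\eta}_{\ell,i}$ (mod const), where $\mathbf{L}_i(\theta)$ stacks the log-likelihoods $\log L_\ell(\bxi_{\ell,i}|\theta)$ and $e^{\eta}_{\ell,i}$ is the corresponding prior log-error. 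Averaging against $\bmu_i^\star$ turns the $e$'s back into KL divergences; the geometric-mean identity for the combine step (whose sub-normalizer is $\le1$ by weighted AM--GM, so its log is $\le0$) yields the convexity bound $J_{k,i}\le\sum_\ell a_{\ell k}\,\e_{\f_i}\dkl(\bmu_i^\star\|\bpsi_{\ell,i})$, and each summand is to be split into a prior-disagreement part (to be matched to $\widetilde{J}_{\ell,i}$) and a likelihood-mismatch driving part.

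The driving term is where the coefficient $\lambda$ appears. Decomposing $A=\tfrac1K\mathds{1}\mathds{1}^{\T}+A_\perp$, where $A_\perp=A-\tfrac1K\mathds{1}\mathds{1}^{\T}$ has operator norm $\rho_2$ on $\mathds{1}^{\perp}$, gives $\mathds{1}\mathds{1}^{\T}-\gamma A=(1-\tfrac{\gamma}{K})\mathds{1}\mathds{1}^{\T}-\gamma A_\perp$. The first (consensus-direction) piece scales the aggregate log-likelihood by $|1-\tfrac{\gamma}{K}|$, contributing at most a multiple of $\gamma|1-\tfrac{K}{\gamma}|C_L$ after using $|\log L_\ell|\le C_L$; the second piece contributes at most a multiple of $\gamma\rho_2\sqrt{K}\,C_L$ via $\|\mathbf{L}_i\|_2\le\sqrt{K}\,C_L$. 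Collecting both contributions and bounding both coefficients by $\lambda=\max\{|1-\tfrac{K}{\gamma}|,\rho_2\}$ produces the per-step driving bound $2\sqrt{K}\,\gamma\,\lambda\,C_L$. Combining with the SDPI relation and taking the network-wide maximum $\bar{J}_i=\max_k J_{k,i}$ (preserved because $\sum_\ell a_{\ell k}=1$) yields the scalar recursion $\bar{J}_i\le\kappa(\bT)\,\bar{J}_{i-1}+2\sqrt{K}\,\gamma\,\lambda\,C_L$; summing the resulting geometric series as $i\to\infty$ gives \eqref{eq:th1_risk}, and then \eqref{eq:th1_prior} follows from $\widetilde{J}_{k,i}\le\kappa(\bT)J_{k,i-1}$.

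The main obstacle I anticipate is the adapt step itself: after the convexity bound, the prior-disagreement portion of $\dkl(\bmu_i^\star\|\bpsi_{\ell,i})$ appears \emph{reweighted by the posterior} $\bmu_i^\star$ rather than by the time-adjusted prior $\bmeta_i^\star$, so it is not literally $\widetilde{J}_{\ell,i}$ and cannot be matched by a naive change of measure, since the prior log-ratio is not a priori bounded. Resolving this---showing that the posterior reweighting perturbs the prior KL only by a quantity controllable through the bounded-likelihood Assumption~\ref{as:likelihood_functions}, while \emph{simultaneously} extracting the network contraction $\rho_2$ and the scaling mismatch $|1-K/\gamma|$ so that the single coefficient $\lambda$ emerges---is the crux, and is precisely the point where reconciling the KL/SDPI (probability-domain) and spectral (log-domain) contractions inside one recursion demands care.
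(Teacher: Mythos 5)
Your overall architecture matches the paper's proof: the evolve step is handled by SDPI (giving $\widetilde{J}_{k,i}\le\kappa(\bT)J_{k,i-1}$, so the prior bound follows from the posterior one), the driving term is controlled by the spectral coefficient $\lambda$ (your split $A=\tfrac1K\mathds{1}_K\mathds{1}_K^{\T}+A_\perp$ is equivalent to the paper's bound $\sum_\ell|\gamma a_{\ell k}-1|\le\gamma\sqrt{K}\,\|A-\tfrac1\gamma\mathds{1}_K\mathds{1}_K^{\T}\|_2=\gamma\sqrt{K}\lambda$ via simultaneous diagonalization), and a geometric-series recursion in $\kappa(\bT)$ closes the argument. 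However, the obstacle you flag in your final paragraph --- and explicitly leave unresolved --- is exactly the step on which the entire recursion hinges, so the proposal has a genuine gap: without converting the posterior-weighted prior-disagreement term into $\widetilde{J}_{\ell,i}$, you cannot invoke SDPI, and the recursion $\bar{J}_i\le\kappa(\bT)\bar{J}_{i-1}+2\sqrt{K}\gamma\lambda C_L$ never materializes. Worse, the resolution you gesture at (a perturbation argument showing the posterior reweighting distorts the prior KL by an amount controlled through Assumption~\ref{as:likelihood_functions}) is the wrong direction; it would at best produce an additive error term that pollutes the contraction, not the clean $\kappa(\bT)\sum_\ell a_{\ell k}J_{\ell,i-1}$ your recursion requires.

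The missing idea is that the posterior reweighting disappears \emph{exactly}, with no approximation and no boundedness assumption, by the tower property --- precisely because $\bmu_i^\star$ is the \emph{true} Bayes posterior and the prior log-ratio is $\f_{i-1}$-measurable. Since $\bmu_i^\star(\theta)=\mathbb{P}(\btheta_i^\circ=\theta\,|\,\bmf_i)$, averaging against $\bmu_i^\star$ and then over $\f_i$ is the same as averaging over the joint law of $(\btheta_i^\circ,\bmf_i)$; conditioning instead on $\f_{i-1}$ (the ratio $\log\tfrac{\bmeta_i^\star(\theta)}{\bmeta_{\ell,i}(\theta)}$ is deterministic given $\f_{i-1}$, so the inner expectation over $\bxi_i$ drops out) re-weights $\theta$ by $\mathbb{P}(\btheta_i^\circ=\theta\,|\,\bmf_{i-1})=\bmeta_i^\star(\theta)$. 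Hence
\begin{align}
\e_{\f_i}\Big[\sum_{\theta\in\Theta}\bmu_i^\star(\theta)\log\frac{\bmeta_i^\star(\theta)}{\bmeta_{\ell,i}(\theta)}\Big]
=\e_{\f_{i-1}}\big[\dkl(\bmeta_i^\star\|\bmeta_{\ell,i})\big]=\widetilde{J}_{\ell,i},
\end{align}
an exact identity, after which SDPI gives the needed $\kappa(\bT)J_{\ell,i-1}$ term. This is how the paper proceeds, and it is also why the paper can then handle the remaining pieces separately --- the normalization and marginal-likelihood terms via weighted AM--GM, nonnegativity of a KL divergence between the data marginals, and the mean-value theorem applied to the LogSumExp function, each bounded by $\sqrt{K}\gamma\lambda C_L$ under Assumption~\ref{as:likelihood_functions} --- steps your outline treats only schematically (``mod const'') but which would go through once the identity above is in place.
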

\begin{proof}
See Appendix \ref{appendix:risk_theorem}.
\end{proof}

Swift and random changes in the environment can prevent a strongly-connected network from approaching the performance of the centralized solution close enough---especially when the network is sparse and it takes more time for the information to diffuse to all agents than the rate at which the state is changing. Therefore, the bounds in Theorem~\ref{th:kl_without_net_assumption} are not generally close to 0. The following remarks are now in place. Simulations that support these observations appear in Sec.\ref{sec:simulations}.
\begin{itemize}
\item \textbf{Matching the centralized strategy}: The bounds \eqref{eq:th1_risk} and \eqref{eq:th1_prior} will be tight when the distributed solution reduces to the centralized implementation, which happens with a uniformly weighted fully-connected network (\( a_{\ell k} = 1/K \) \(\forall \ell,k\in\mathcal{N}\), \( \rho_2 = 0 \)), same initial priors (\(\mu_0^\star = \mu_{k,0}\),\(\forall k \in \mathcal{N}\)), and \( \gamma = K \) as shown in \eqref{eq:matching_centralized}. In this case, the upper bounds will become zero as expected.
\item \textbf{Stability}: The bounds \eqref{eq:th1_risk} and \eqref{eq:th1_prior} are independent of the initial beliefs as long as Assumption \ref{as:positive_initial_beliefs} is satisfied. Indeed, geometric ergodicity is sufficient to asymptotically forget the initial conditions. In this way, the filter is robust to incorrect initializations.
\item \textbf{Network connectivity}: Note that as \( \gamma \to K \), the bounds become proportional to \( \rho_2 \), the mixing rate of the graph. This aspect emphasizes the benefit of cooperation. Highly-connected graphs, with small \( \rho_2 \), will be closer to the centralized solution while sparse networks or non-cooperative agents will have higher deviation.
\item \textbf{Network size}: The bounds are also proportional to the number of agents. The disagreement between the centralized solution and the individual agents increases with the square-root of the network size. Note that this does not mean agents would perform worse if new agents join the network. It is the \emph{relative} performance compared to the optimal solution, which has access to data from all agents, that could decrease.
\item \textbf{Ergodicity}: Notice that when \( \kappa (\bT) \to 0 \), from the bound  \eqref{eq:th1_prior}, it is obvious that \( \widetilde{J}_{k,i} \to 0 \). This is anticipated because if the coefficient \( \kappa (\bT) = 0\), the time-adjustment steps of the centralized and decentralized strategies \eqref{eq:ck_cent} and \eqref{eq:dif_evolve_step} become
\begin{align}\label{eq:two_time_adj_priors}
    \eta_i^\star (\theta_i) &=\!\!\!\sum_{\theta_{i-1}\in\Theta}\!\!\! \bT(\theta_{i}| \theta_{i-1}) \bmu_{i-1}^\star(\theta_{i-1}) \stackrel{\eqref{eq:pi_transition}}{=} \pi(\theta_i),  \notag \\
\eta_{k,i} (\theta_{i}) &=\!\!\! \sum_{\theta_{i-1} \in \Theta} \!\!\!\bT(\theta_{i}| \theta_{i-1}) \bmu_{k,i-1}(\theta_{i-1}) \stackrel{\eqref{eq:pi_transition}}{=} \pi(\theta_i), 
\end{align}
regardless of the input distributions \( \bmu_{i-1}^\star(\theta_{i-1}) \) and \( \bmu_{k,i-1}(\theta_{i-1})\). As a result,
\begin{align}
    \widetilde{J}_{k,i}&\stackrel{\eqref{eq:prior_risk}}{=}\e_{\f_{i-1}} \dkl(\bmeta_i^\star || \bmeta_{k,i}) 
    \stackrel{\eqref{eq:two_time_adj_priors}}{=}0.
\end{align}
Therefore, the bound \eqref{eq:th1_prior} captures the effect of the ergodicity of the transition model via the ergodicity coefficient \( \kappa (\bT)\) accurately. In particular, the results imply that the bound is tight for rapidly changing binary symmetric transition models where \( \kappa (\bT) \to 0 \).


\item \textbf{Informativeness of observations}: Nonetheless, the bounds fall short in capturing the effect of the observations. For example, if the true hypothesis is fixed, then the transition model \( \bT \) satisfies \eqref{eq:fixed_truth_model}. In this case, for two distributions \( \mu^a, \mu^b \in \Delta_H \),
    \begin{align}
    \eta^a (\theta_{i}) &= \sum_{\theta_{i-1} \in \Theta} \bT(\theta_{i}| \theta_{i-1}) \mu^a(\theta_{i-1}) \stackrel{\eqref{eq:fixed_truth_model}}{=}\mu^a(\theta_{i})
\end{align}
and similarly for \( \eta^b \). Then,
\begin{align}
    \dkl (\eta^a || \eta^b ) = \dkl (\mu^a || \mu^b),
\end{align}
and consequently, \( \kappa (\bT) = 1 \). Since this transition model is not geometrically ergodic, the bounds do not cover this case. However, it is known from the standard social learning literature \cite{jadbabaie_2012, zhao_2012,nedic_2017,lalitha_2018} that when the observations of the agents are informative enough, that is to say, the observations provide sufficient information about the underlying state\footnote{For several works, e.g., \cite{nedic_2017,lalitha_2018,bordignon_2021,kayaalp2022_aa_ga}, it is required that for each wrong hypothesis \( \theta \) there exists at least one clear-sighted agent \( k \) that has a positive KL-divergence between its likelihoods of the true hypothesis \( \theta^\circ\) and \( \theta \), i.e., \( \dkl (L_k (\xi | \theta^\circ) || L_k (\xi | \theta)) > 0 \).}, all agents will learn the true hypothesis eventually. In other words, beliefs of agents, as well as the belief of the centralized strategy, on wrong hypotheses go to 0. This means that log-beliefs on wrong hypotheses become degenerate, and hence, it is not clear how the KL-divergences in the risks \eqref{eq:filter_risk}--\eqref{eq:prior_risk} would behave in this situation, or whether the KL-divergence is a meaningful metric here. We leave examining this interesting regime of \( \kappa (\bT) \to 1 \), where the dominant factor is the informativeness of observations rather than the ergodicity, to future work.
\item \textbf{Single-agent case}: In fact, the distinction between the ergodicity of the transition model and the informativeness of the observations arises in the analysis of single-agent strategies as well. The stability of the algorithm would refer to the conditions under which a wrongly initialized belief converges to the true posterior distribution in HMMs. In the notation of this work, this is the special case of \( \gamma = K = 1 \) and \( \rho_2 = 0 \). Remember that in this special case the proposed diffusion HMM strategy is equivalent to a traditional optimal Bayes filter \cite[Ch. 3]{krishnamurthy_2016}. One criterion that the literature on the single-agent case uses is whether the total variation distance between the true posterior and the agent's belief vanishes in the mean asymptotically, i.e., whether (setting the single-agent index \( k=1\)),
\begin{equation}\label{eq:tv_stability_cond}
    \lim_{i \to \infty} \e_{\f_i} \Big [ \sum_{\theta \in \Theta} \big |\bmu_i^\star(\theta) - \bmu_{1,i}(\theta) \big | \Big ] \stackrel{?}{=} 0,
\end{equation}
when the initial belief of the agent is not accurate, i.e., when
\begin{equation}
    \mathbb{P} (\btheta_0^\circ = \theta ) = \mu_0^\star(\theta) \neq \mu_{1,0}(\theta).
\end{equation}
The total variation distance in \eqref{eq:tv_stability_cond} can vanish because of two mechanisms: either the observations are sufficiently informative about the true state, or the transition model is sufficiently ergodic. Even though there are some works that focus on the informativeness of observations \cite{chigansky2009intrinsic}, most of the results in the literature rely on ergodicity to establish stability \cite{shue98,mcdonald2020}. Similarly, Theorem~\ref{th:kl_without_net_assumption} also depends on the ergodicity of the transition model. In particular, reference \cite{mcdonald2020} studies the stability via the Dobrushin coefficient \( \kappa (\bT) \) and concludes that as long as \( \kappa (\bT) < 1/2 \), regardless of the observation model, \eqref{eq:tv_stability_cond} is satisfied. In comparison, our results are in terms of KL-divergences, but they can be expressed in terms of total variation distances if we use Pinsker's inequality \cite[Chapter 3]{csiszar11}. For the single-agent case, we get: 
\begin{align}
    \sum_{\theta \in \Theta} \big |\bmu_i^\star(\theta) - \bmu_{1,i}(\theta) \big | \leq& \big ( 2 \dkl (\bmu_i^\star || \bmu_{1,i}) \big )^{\frac{1}{2}} .
    \end{align}
   If we take the expectations, this relation implies
    \begin{align}
     \e_{\f_i} \Big [ \sum_{\theta \in \Theta} \big |\bmu_i^\star(\theta) - \bmu_{1,i}(\theta) \big | \Big ] &\leq \! \e_{\f_i} \big ( 2 \dkl (\bmu_i^\star || \bmu_{1,i})\! \big )^{\frac{1}{2}} \notag \\
      &\stackrel{(a)}{\leq}  \big ( 2 J_{1,i} \big )^{\frac{1}{2}} ,
      \end{align}
      where \( (a) \) follows from Jensen's inequality. Finally, taking the limit of both sides, we arrive at
      \begin{align}
\lim_{i \to \infty}\!\e_{\f_i} \Big [\!\sum_{\theta \in \Theta} \!\big |\bmu_i^\star(\theta) - \bmu_{1,i}(\theta) \big |\! \Big ]\!\!&\leq \sqrt{2} \lim_{i \to \infty}  \big ( J_{1,i} \big )^{\frac{1}{2}} \notag \\
&\stackrel{(b)}{=}\;0,
\end{align}
where \( (b) \) follows from the fact that the bound \eqref{eq:th1_risk} is equal to zero if \( \gamma = K = 1\) and \( \rho_2 = 0 \). This is a more general result, namely that the single-agent Bayesian filter is stable whenever \( \kappa (\bT) < 1\), as long as the informativeness of observations are bounded (Assumption~\ref{as:likelihood_functions}).



\end{itemize}

\section{Probability of Error and Convergence}\label{sec:probability_error}

The previous section analyzed the closeness of the diffusion HMM strategy to the optimal centralized solution. In this section, we study the error probability for each agent. There are a few results for the probability of error of HMM filtering, even in the single agent case. One notable result is \cite{shue_2001} where error recursions are obtained for the single-agent binary hypothesis setting case. In a similar spirit, we will obtain recursive equations for the probability density functions (pdfs) that capture the stochastic behavior of the underlying system. This will not only provide recursive formulas for the error probabilities, but will also allow us to deduce asymptotic properties for the diffusion HMM strategy. Similar to \cite{shue_2001}, for this section, we shift our focus to the binary hypothesis setting, i.e., throughout this section, we set \( H=2 \) and \( \Theta = \{ 0,1 \}\). We do not need to restrict ourselves to bounded log-likelihood signal models in this section, as was the case in Assumption \ref{as:likelihood_functions}. In this setting, the MAP-classifier at agent \( k \) at time instant \( i \) becomes
\begin{align}
    \widehat{\btheta}_{k,i} = \begin{cases}
    1,  \quad \text{if} \quad \bmu_{k,i} (1) > \bmu_{k,i} (0)\\
    0,  \quad \text{if} \quad \bmu_{k,i} (0) \geq \bmu_{k,i} (1)
    \end{cases}.
\end{align}
This estimator is equivalent to
\begin{align}
        \widehat{\btheta}_{k,i} = \begin{cases}
    1,  \quad \text{if} \quad \w_{k,i}  > 0\\
    0,  \quad \text{if} \quad \w_{k,i} \leq  0 
    \end{cases}
\end{align}
in terms of the log-belief ratio \( \w_{k,i} \) defined as
\begin{align}\label{eq:wki_definition}
    \w_{k,i} \triangleq \log \frac{\bmu_{k,i} (1)}{\bmu_{k,i} (0)}.
\end{align}
As such, the probability of error for agent \( k \) at time instant \( i \) is given by:
\begin{align}\label{eq:error_agent_def}
    p_{k,i} \triangleq \mathbb{P}(\btheta_i^\circ = 1,\w_{k,i} \leq 0)+\mathbb{P}(\btheta_i^\circ = 0, \w_{k,i} >0).
\end{align}
Let $f_{k,i}(\theta,{w}_k)$ denote the probability density function of the joint variable $\{\btheta_i^\circ, \w_{k,i}\}$ for agent $k$ at time $i$. Note that the joint variable $\{\btheta_i^\circ, \w_{k,i}\}$ mixes a discrete and a continuous random variable. Here, we use the general definition of density, i.e., density is the Radon-Nikodym derivative with respect to a measure. The corresponding measure for $\{\btheta_i^\circ, \w_{k,i}\}$ is the product measure of the counting measure for $\btheta_i^\circ$ and Lebesgue measure for $\w_{k,i}$ \cite{tao2011introduction}. For example, we can evaluate the probability that \( \btheta_i^\circ = \theta\) and $\w_{k,i}$ lies within the infinitesimal interval $({w}_k, {w}_k+d{w}_k)$ by computing
\begin{align}\label{eq:error_agent_dens}
    f_{k,i}(\theta,w_k)dw_k = \mathbb{P}(\btheta_{i}^\circ =\theta, \w_{k,i} \in (w_k,w_k+dw_k))
\end{align}
In this way, the probability of error \eqref{eq:error_agent_def} is given by
\begin{align}\label{eq:error_density}
     p_{k,i} &=\int_{w_k=- \infty}^{0} f_{k,i} (1,w_k)dw_k + \int_{w_k=0}^{\infty} f_{k,i} (0,w_k)dw_k.
\end{align}
We further consider the probability density function involving the log-beliefs across all agents, namely,
\begin{align}\label{eq:error_joint_dens}
     f_{i} (\theta,w)dw_1dw_2\cdots dw_K \triangleq \mathbb{P}(\btheta_{i}^\circ=\theta, \w_{i} \in (w,w+dw)),
\end{align}
in terms of the aggregate variables:
\begin{align}
     \w_{i} \triangleq \text{col}\{ \w_{\ell,i}\}_{\ell=1}^K,  \quad  w \triangleq \text{col}\{w_\ell\}_{\ell=1}^K .
\end{align}
If we integrate \eqref{eq:error_joint_dens} over all agents with the exception of agent \( k \) we can determine the marginal density for agent $k$, namely,
\begin{align}\label{eq:marginalization_agents}
     f_{k,i} (\theta,w_k) =\int\!\! ... \!\!\int  f_{i} (\theta,w)dw_1\cdots dw_{k-1}dw_{k+1}\cdots dw_K.
\end{align}
In what follows, we will derive a temporal recursion for the joint density given by \eqref{eq:error_joint_dens}, from which agent-specific densities can then be deduced. To this end, first observe that the diffusion equations \eqref{eq:dif_evolve_step}-\eqref{eq:dif_combine_step} can be written compactly in terms of the log-belief ratio:
\begin{align}\label{eq:logbelief_agent_rec}
    \w_{k,i} = \sum_{\ell \in \mathcal{N}_k} a_{\ell k} &\gamma \log \frac{L_{\ell}(\bxi_{\ell,i} | 1) }{L_{\ell}(\bxi_{\ell,i} | 0)} + \sum_{\ell \in \mathcal{N}_k} a_{\ell k} \log \frac{\bmeta_{\ell,i}(1)}{\bmeta_{\ell,i}(0)}\notag \\
    =\sum_{\ell \in \mathcal{N}_k} a_{\ell k} &\gamma \log \frac{L_{\ell}(\bxi_{\ell,i} | 1) }{L_{\ell}(\bxi_{\ell,i} | 0)}  \notag \\ &+ \sum_{\ell \in \mathcal{N}_k} a_{\ell k} \log \frac{\bT(1|0)+\bT(1|1) \text{exp}\{\w_{\ell,i-1}\}}{\bT(0|0)+\bT(0|1)\text{exp}\{\w_{\ell,i-1}\}}.
\end{align}
In matrix form, equation \eqref{eq:logbelief_agent_rec} leads to
\begin{align}\label{eq:diffusion_matrix_recursion}
 \w_{i} =  A^{\T} \bm{\nu}_{i} + A^{\T} \bm{\chi}_{i} \quad \text{(diffusion HMM)},
\end{align}
where we are introducing the vector of \(\gamma\)-scaled log-likelihood ratios (LLR) across the network:
\begin{align}
\bm{\nu}_{i} \triangleq \text{col} \Big \{ \gamma \frac{ L_{\ell}(\bxi_{\ell,i} | 1)}{L_{\ell}(\bxi_{\ell,i} | 0)} \Big \}_{\ell=1}^K,
\end{align}
and the vector of time-adjusted prior belief log-ratios across the network:
\begin{align}\label{eq:chi_def}
\bm{\chi}_{i} \triangleq \text{col} \Bigg \{  \log \frac{\bT(1|0)+\bT(1|1) \text{exp}\{\w_{\ell,i-1}\}}{\bT(0|0)+\bT(0|1)\text{exp}\{\w_{\ell,i-1}\} } \Bigg \}_{\ell=1}^K.
\end{align}
If the underlying distributed strategy is instead the consensus algorithm from \eqref{eq:consensus_ga} in lieu of the diffusion strategy \eqref{eq:dif_evolve_step}--\eqref{eq:dif_combine_step}, then \eqref{eq:diffusion_matrix_recursion} would be replaced by
\begin{align}\label{eq:consensus_matrix_recursion}
    \w_{i} = \bm{\nu}_{i} + A^{\T} \bm{\chi}_{i} \quad \text{(consensus HMM)}.
\end{align}
 Observe that the joint density over two consecutive time instants of  \( \{\btheta_{i-1}^\circ ,\w_{i-1},\btheta_{i}^\circ ,\w_{i} \} \) satisfies
\begin{align}\label{eq:joint_time_density}
   \mathbb{P}(\btheta_{i}^\circ=\theta, \w_{i} &\in (w,w+dw), \btheta_{i-1}^\circ =\theta^\prime,\notag \\& \qquad \qquad \qquad  \w_{i-1} \in (w^{\prime},w^{\prime}+dw^{\prime})) \notag\\ &=S_{i}^{(\theta)} (w,w^{\prime}) \bT(\theta| \theta^\prime) f_{i-1}(\theta^\prime,w^{\prime})  dW dW^{\prime} 
\end{align}
where we are using \( dW \triangleq dw_1dw_2\cdots dw_K \) and \( dW^{\prime} \triangleq  dw_1^{\prime}dw_2^{\prime}\cdots dw_K^{\prime}\) for notational brevity, and where we are introducing the conditional probability 
\begin{align}\label{eq:si_definition}
    &S_{i}^{(\theta)} (w,w^{\prime})dW \notag \\ &\qquad \triangleq \mathbb{P}(\w_{i} \in (w,w+dw)|\btheta_{i}^\circ =\theta , \btheta_{i-1}^\circ =\theta^\prime, \w_{i-1} = w^{\prime}) \notag \\
  &\qquad \stackrel{(a)}{=}\mathbb{P}(\w_{i} \in (w,w+dw)|\btheta_{i}^\circ =\theta , \w_{i-1} = w^{\prime})
\end{align}
where \( (a) \) follows from the fact that \( \w_{i} \) is a function of \( \bxi_i\) only, once \( \w_{i-1} \) and \( \btheta_{i}^\circ \) are given. Therefore, the log-belief ratio \( \w_{i} \) is conditionally independent of \( \btheta_{i-1}^\circ \) --- see \eqref{eq:diffusion_matrix_recursion} for diffusion and \eqref{eq:consensus_matrix_recursion} for consensus. Note that, for diffusion algorithms, in general, even under the independence Assumption \ref{as:independence}:
\begin{align}
      &S_{i}^{(\theta)} (w,w^{\prime})dW \notag \\
      &\neq \prod_{\ell=1}^K \mathbb{P}(\w_{\ell,i} \in (w_\ell,w_\ell+dw_\ell)|\btheta_{i}^\circ =\theta , \w_{i-1} = w^{\prime}) 
\end{align}
because the newly arrived data \( \bxi_{k,i}\) is utilized by agent \( k \) as well as by its neighbors in the same iteration. On the other hand, for consensus, under Assumption \ref{as:independence},
\begin{align}\label{eq:si_independence_consensus}
      &S_{i}^{(\theta)} (w,w^{\prime})dW \notag \\
      &= \prod_{\ell=1}^K \mathbb{P}(\w_{\ell,i} \in (w_\ell,w_\ell+dw_\ell)|\btheta_{i}^\circ =\theta , \w_{i-1} = w^{\prime}) 
\end{align}
since the fresh data is used by the observing agent only. In other words, in the consensus implementation \eqref{eq:consensus_ga}, each log-belief $\w_{k,i}$ is a function of that agent's observation $\bm{\xi}_{k,i}$ only (given $\btheta_{i}^\circ$ and $\w_{i-1}$) and does not depend on the observations at the other agents at that time. This distinction between diffusion and consensus can be seen from \eqref{eq:diffusion_matrix_recursion} and \eqref{eq:consensus_matrix_recursion}. 

Now, marginalizing \eqref{eq:joint_time_density} with respect to the state $\theta^\prime$ and the log-belief ratio $w^\prime$ at the earlier time instant yields the following temporal recursion for the joint density in \eqref{eq:error_joint_dens}:

\begin{align}
 &f_{i} (\theta,w) dW = \sum_{\theta^\prime} \bT(\theta |\theta^\prime) \Bigg [ \int_{w^{\prime}}    \mathbb{P}(\btheta_{i}^\circ=\theta, \w_{i} \in (w,w+dw), \notag \\& \qquad \qquad \quad \qquad  \btheta_{i-1}^\circ =\theta^\prime, \w_{i-1} \in (w^{\prime},w^{\prime}+dw^{\prime})) \Bigg ],
 \end{align}
 which implies that
 \begin{align}\label{eq:fialpha_recursion}
 &f_{i} (\theta,w) \! =\!\sum_{\theta^\prime} \bT(\theta| \theta^\prime) \Bigg [ \!\!\int_{w^{\prime}}   S_{i}^{(\theta)} (w,w^{\prime}) f_{i-1} (\theta^\prime,w^{\prime})dW^\prime \Bigg ].
\end{align}

To summarize, the probability of error at each time instant \( i \) can be computed by (i) using \eqref{eq:fialpha_recursion} to find the joint density at time \( i \), (ii) marginalizing the joint density to find the agent-specific density by \eqref{eq:marginalization_agents}, and finally (iii) integrating the agent density as in \eqref{eq:error_density}.

\begin{remark}[{\bf Evaluation of integrals}]
Finding closed-form expressions to the integral expressions (e.g., \eqref{eq:fialpha_recursion}, \eqref{eq:marginalization_agents}) might not be feasible. One can apply numerical integration methods such as Monte Carlo techniques \cite{kalos2009monte} to compute the desired integrals. \qed
\end{remark}
The analysis until here holds for general transition and likelihood models. However, the kernel \( S_{i}^{(\theta)} (w,w^{\prime}) \) might not be obtained in closed form in general. Therefore, for stronger results, we focus on Gaussian likelihood models in the next section.
\subsection{Gaussian Likelihoods}\label{sec:gaussian_lklhd}
Thus, let us consider now Gaussian models of the form:
\begin{equation}
\begin{aligned}\label{eq:gaussian_obs_models}
    L_k(\xi_{k,i}|\btheta_i^\circ = 1)&=\frac{1}{\sqrt{2\pi \sigma_k^2}}\text{exp} \Big \{ - \frac{(\xi_{k,i}-\zeta(1))^2}{2 \sigma_k^2} \Big \} \\
    L_k(\xi_{k,i}|\btheta_i^\circ = 0)&=\frac{1}{\sqrt{2\pi \sigma_k^2}}\text{exp} \Big \{ - \frac{(\xi_{k,i}-\zeta(0))^2}{2 \sigma_k^2} \Big \} 
\end{aligned}
\end{equation}
where the means are assumed to satisfy \( \zeta(0) = -\zeta(1) = \zeta \) for a constant value \( \zeta \neq 0  \), and where the agent-specific variances satisfy \( \sigma_k^2 > 0 \). The corresponding log-likelihood ratio appearing in \eqref{eq:logbelief_agent_rec} is then given by
\begin{align}
    \log \frac{L_k(\xi_{k,i}|\btheta_i^\circ = 1)}{L_k(\xi_{k,i}|\btheta_i^\circ = 0)} = \frac{-2 \zeta \xi_{k,i} }{\sigma_k^2},
\end{align}
and accordingly, the vector $\bm{\nu}_i$ of \( \gamma\)-scaled LLRs across agents is given by
\begin{align}
\bm{\nu}_{i}  = \text{col} \Big \{ \frac{-2 \gamma \zeta }{\sigma_{\ell}^2} \bxi_{\ell,i} \Big \}_{\ell=1}^K.
\end{align}
By Assumption \ref{as:independence}, the \( \{ \bxi_{\ell,i} \}_{\ell=1}^K\) are independent random variables conditioned on \( \btheta_{i}^\circ \). This implies that \( \bm{\nu}_{i} \) is a multivariate Gaussian random variable conditioned on the true hypothesis \( \btheta_{i}^\circ \) at time instant \( i \), 
\begin{align}\label{eq:v_gaussian}
    \bm{\nu}_{i} \big |_{\btheta_{i}^\circ} \sim \mathcal{G} \Big ( \bm{\beta}^{(\theta_{i}^\circ)} , \Sigma \Big )
\end{align}
with mean
\begin{align}\label{eq:v_gaussian_mean}
    \bm{\beta}^{(\theta_{i}^\circ)} \triangleq \text{col} \Big \{ \frac{-2 (-1)^{\btheta_{i}^\circ}  \gamma \zeta^2 }{\sigma_\ell^2}\Big \}_{\ell=1}^K,
\end{align}
and covariance matrix
\begin{align}\label{eq:covar_definition}
    \Sigma \triangleq \text{diag} \Big \{\frac{4\gamma^2 \zeta^2}{\sigma_\ell^2} \Big \}_{\ell=1}^K.
\end{align}
Next, we treat the consensus and diffusion cases separately. The consensus case is straightforward and useful to understand the diffusion case.
\subsubsection{Consensus}
Using \eqref{eq:consensus_matrix_recursion} and the distribution \eqref{eq:v_gaussian} for \( \bm{\nu}_{i} \), we conclude that the conditional pdf of $\w_i$ given the current state and the prior log-belief ratio vector $\w_{i-1}$ is also Gaussian and equal to
\begin{align}\label{eq:si_consensus_expression}
    &S_{i}^{(\theta)} (w,w^{\prime}) \notag \\ &= \frac{\text{exp} \Big \{ -\frac{1}{2} (w - \rho^{(\theta)}(w^{\prime}) )^{\T}  \Sigma^{-1} (w- \rho^{(\theta)}(w^{\prime}) )\Big \}}{\sqrt{(2\pi)^{K} \text{det}( \Sigma  ) }}  
\end{align}
where the mean is defined by
\begin{align}
   \rho^{(\theta)} (w^{\prime}) \triangleq  \bm{\beta}^{(\theta_{i}^\circ)} + A^{\T} \bm{\chi}_{i} \Big |_{\btheta_{i}^\circ = \theta, \w_{i-1} = w^{\prime}}.
\end{align}
Observe that since \( \Sigma \) is a diagonal matrix, Eq. \eqref{eq:si_consensus_expression} can also be written as the multiplication of individual conditional densities, as already suggested by \eqref{eq:si_independence_consensus}.

\subsubsection{Diffusion}

The covariance matrix \( \Sigma \) in \eqref{eq:covar_definition} is non-singular since it is diagonal with positive diagonal entries. Consequently, \( \bm{\nu}_i \) in \eqref{eq:v_gaussian} is a non-degenerate random variable. In the consensus implementation \eqref{eq:consensus_matrix_recursion}, the variable \( \bm{w}_i \)  is an additive shift of \(\bm{\nu}_i \) conditioned on \( \bm{w}_{i-1} \). Therefore, \( \bm{w}_i \) is also a non-degenerate random variable and it admits the conditional density \eqref{eq:si_consensus_expression}.

In diffusion, however, \( \bm{w}_i \) is an affine transformation of \(\bm{\nu}_i \)---see \eqref{eq:diffusion_matrix_recursion}. The combination matrix \( A \) need not be invertible and hence, \( \bm{w}_i \) might not admit a density in \( \mathbb{R}^K \) in general. In Appendix \ref{appendix:diffusion_recursion}, we show that by representing \( \bm{w}_i \) in an \(r\)-dimensional subspace, where \(r\) is the rank of \(A\), no information is lost and the analysis and conclusions can be adjusted accordingly.

\begin{remark}[{\bf Difference from \cite{shue_2001}}]
 In \cite{shue_2001}, the probability of error recursions are studied for the \emph{single-agent} case only. Moreover, the recursions are based on belief differences instead of log-belief ratios. In that case, transition kernels are not Gaussian even under Gaussian observation models, as opposed to \eqref{eq:si_consensus_expression} and \eqref{eq:si_q_definition}. \qed
\end{remark}

\subsection{Asymptotic Convergence}
In addition to providing a way for calculating the error probabilities, the density evolution recursion \eqref{eq:fialpha_recursion} also allows us to show that agents exhibit a regular behavior in the limit. In particular, the distributions of the beliefs \(\bmu_{k,i}\) and log-belief ratios \(\w_{k,i}\) will converge to the distribution of some time-independent random variables. That is to say, they will converge \emph{in distribution} \cite[Chapter 17]{williams_1991}. A sequence (over time index \( i \)) of random variables \( \bm{x}_i \) converges to a limiting random variable \( \bm{x} \) in distribution if it holds that
\begin{align}\label{eq:def_in_dist_conv}
    \lim_{i \to \infty}  \mathbb{P}(\bm{x}_{i}  \in \mathcal{X} ) = \mathbb{P}(\bm{x} \in \mathcal{X} )
\end{align}
for a set \( \mathcal{X} \) of \(  \bm{x} \), whose boundary has zero probability under the limiting distribution. We denote this by writing
\begin{align}
    \bm{x}_{i} \stackrel{d}{\rightsquigarrow} \bm{x}.
\end{align}
Although beliefs can demonstrate random behavior with fluctuations in the limit, convergence in distribution implies the existence of limiting statistics such as steady-state probability of errors, as shown in the following result. 

\begin{theorem}[{\bf Asymptotic probability of error}]\label{theorem:asymptotic_probability}
The diffusion and consensus HMM strategies \eqref{eq:dif_evolve_step}--\eqref{eq:dif_combine_step} and \eqref{eq:consensus_ga} are asymptotically stable (in the sense of \cite{lasota1998chaos}) under binary classification, Gaussian likelihood models \eqref{eq:gaussian_obs_models}, and non-deterministic transition models (i.e., \( \bT(\theta |\theta^\prime) < 1\) \( \forall \theta,\theta^\prime \in \Theta \)). That is, the density function \( f_{i} \) satisfy
\begin{align}
    & \lim_{i \to \infty} \| f_{i}  - f_{\infty}  \|_{\textup{TV}} = 0,
\end{align}
 where \( f_{\infty}\) is a unique stationary density, and \( \| \cdot \|_{\textup{TV}} \) is the total variation norm defined with respect to the product measure, i.e., for any two densities \( f \) and \( g\)
 \begin{equation}
     \| f  - g  \|_{\textup{TV}} \triangleq \frac{1}{2} \sum_{\theta \in \Theta} \int  \big | f (\theta, w) - g (\theta, w)\big | dW.
 \end{equation}
 This result implies the convergence of the distribution of the log-belief ratios, and as a special case, the agent-specific probability of errors converge as well:
\begin{align}
    \w_{k,i} \stackrel{d}{\rightsquigarrow} \w_{k,\infty}, \qquad \lim_{i \to \infty} p_{k,i} = p_{k,\infty}.
\end{align}
\end{theorem}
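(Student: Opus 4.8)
The plan is to recognize the recursion \eqref{eq:fialpha_recursion} as the action of a single, \emph{time-homogeneous} Markov (transfer) operator $P$ on the space of densities over $\Theta\times\mathbb{R}^K$, write $f_i=P^i f_0$, and then establish a uniform Doeblin-type minorization that makes $P$ a strict contraction in the total-variation norm. First I would verify time-homogeneity: under the Gaussian model \eqref{eq:gaussian_obs_models}, the conditional law $S_i^{(\theta)}(w,w')$ of $\w_i$ given $\btheta_i^\circ=\theta$ and $\w_{i-1}=w'$ is Gaussian with a mean that is an affine image of $\bm{\beta}^{(\theta)}$ plus $A^{\T}\bm{\chi}_i$ and a covariance that does not depend on $i$ --- see \eqref{eq:si_consensus_expression} for consensus and the reduced-subspace version in Appendix \ref{appendix:diffusion_recursion} for diffusion. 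Since the transition matrix and the Gaussian likelihood statistics are fixed in time, the kernel and hence $P$ carry no dependence on $i$, so that $f_i=Pf_{i-1}$. Checking that $P$ is a Markov operator (nonnegativity preserved, total mass preserved) is immediate from $\sum_\theta \bT(\theta|\theta')=1$ and $\int S^{(\theta)}(w,w')\,dW=1$.

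The decisive step is a one-step minorization. The key structural fact, which I would isolate as a lemma, is that the mean function $w'\mapsto\rho^{(\theta)}(w')$ is \emph{uniformly bounded} over all $w'\in\mathbb{R}^K$. This follows from the definition of $\bm{\chi}_i$ in \eqref{eq:chi_def}: each coordinate is a logarithm of a ratio whose numerator and denominator are affine in $e^{w'_\ell}$ with strictly positive coefficients $\bT(\cdot|\cdot)$; under the non-deterministic assumption $\bT(\theta|\theta')<1$ (which, together with the rows summing to one, forces every transition entry to be strictly positive in the binary case), each such coordinate is continuous in $w'_\ell$ and tends to finite limits as $w'_\ell\to\pm\infty$, hence stays in a bounded set $B$. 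Consequently, for any fixed compact ball $C\subset\mathbb{R}^K$, the distance $\|w-\rho^{(\theta)}(w')\|$ is bounded by a constant uniformly over $w\in C$ and $w'\in\mathbb{R}^K$, so the fixed nondegenerate Gaussian density $S^{(\theta)}(w,w')$ is bounded below by a positive constant $c>0$ on $\Theta\times C$, uniformly in $w'$. Combining this with $\bT(\theta|\theta')\ge\tau_{\min}:=\min_{\theta,\theta'}\bT(\theta|\theta')>0$ gives, for every density $f$ and every $(\theta,w)\in\Theta\times C$,
\begin{align}
Pf(\theta,w)\;\ge\;\tau_{\min}\,c\sum_{\theta'}\int_{w'} f(\theta',w')\,dW'\;=\;\tau_{\min}\,c,
\end{align}
i.e.\ $Pf\ge\delta\,g$ with $g$ the normalized uniform density on $\Theta\times C$ and $\delta=\tau_{\min}\,c\,|\Theta\times C|>0$. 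This uniform lower bound is exactly the hypothesis of the asymptotic-stability criterion for Markov operators in \cite{lasota1998chaos}: it yields a unique stationary density $f_{\infty}$ and the contraction estimate $\|f_i-f_{\infty}\|_{\textup{TV}}\le(1-\delta)^i\|f_0-f_{\infty}\|_{\textup{TV}}\to 0$, establishing the main claim (in fact geometrically).

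It remains to transfer the statement about the joint density to the per-agent conclusions. Marginalization as in \eqref{eq:marginalization_agents} is a contraction in total variation, so $\|f_{k,i}-f_{k,\infty}\|_{\textup{TV}}\to 0$, and total-variation convergence of the law of $\w_{k,i}$ implies convergence in distribution, $\w_{k,i}\stackrel{d}{\rightsquigarrow}\w_{k,\infty}$. Since the error probability $p_{k,i}$ in \eqref{eq:error_density} is an integral of $f_{k,i}$ over two half-lines, total-variation convergence gives $p_{k,i}\to p_{k,\infty}$ at once. For the diffusion case I would run the identical argument after the coordinate reduction of Appendix \ref{appendix:diffusion_recursion}: since $\w_i$ is the affine image $A^{\T}(\bm{\nu}_i+\bm{\chi}_i)$ of a nondegenerate Gaussian, $A^{\T}\Sigma A$ may be singular, so $\w_i$ admits no density on $\mathbb{R}^K$; representing $\w_i$ on the $r$-dimensional range of $A^{\T}$ (with $r=\rank A$) restores a nondegenerate reduced covariance, and the bounded-mean and strictly-positive-transition structure carries over verbatim. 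The main obstacle is the minorization lemma, and specifically the boundedness of $\rho^{(\theta)}$: this is precisely where the non-deterministic transition hypothesis is indispensable, for if some $\bT(\theta|\theta')$ were allowed to equal one (e.g.\ the fixed-hypothesis model, $\kappa(\bT)=1$) the log-ratios in \eqref{eq:chi_def} would diverge as $w'_\ell\to\pm\infty$, the Gaussian mean could escape to infinity, and no uniform lower bound on a fixed compact set --- hence no stationary density --- could be guaranteed.
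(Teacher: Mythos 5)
Your proof is correct, and it reaches the paper's conclusion by a genuinely different mechanism. Both arguments pivot on the same structural fact—under the non-deterministic (hence, in the binary case, strictly positive) transition model, the conditional mean \( \rho^{(\theta)}(w^\prime) \) (equivalently \( \bm{\chi}_i \) in \eqref{eq:chi_def}) stays in a bounded set uniformly over \( w^\prime \)—but you exploit it in the opposite direction. The paper uses this boundedness to construct an integrable \emph{upper} envelope \( \widetilde{S}(w) \) for the Gaussian kernel (its peak value inside the ball \( \mathcal{R} = \{w : \|w\|_{\Sigma^{-1}} \leq \widetilde{\rho}\} \), a Gaussian tail outside, via a projection argument), and then, together with strict positivity of the kernel \( \bT(\theta|\theta^\prime) S_i^{(\theta)}(w,w^\prime) \), invokes the asymptotic-stability theorem for integral Markov operators \cite[Theorem 5.7.4]{lasota1998chaos}. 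You instead use the boundedness to obtain a uniform \emph{lower} bound on a fixed compact set, i.e., a one-step Doeblin minorization \( Pf \geq \delta g \), and then run the classical contraction argument (splitting \( Pf = \delta g + (1-\delta)Qf \) and applying the Banach fixed point theorem). Your route is self-contained—it delivers existence, uniqueness, and convergence of \( f_\infty \) without appealing to the ergodic-theoretic reference—and it buys something the paper's qualitative statement does not: an explicit geometric rate \( \|f_i - f_\infty\|_{\textup{TV}} \leq (1-\delta)^i \|f_0 - f_\infty\|_{\textup{TV}} \). The paper's route is shorter given the citation, since positivity plus integrable domination are nearly immediate once \( \rho^{(\theta)} \) is bounded. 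Your treatment of the two remaining issues matches the paper's: the diffusion case is handled by the same rank-\(r\) reduction of Appendix \ref{appendix:diffusion_recursion} (restoring a nondegenerate covariance where \( A^{\T}\Sigma A \) may be singular), and the per-agent conclusions follow because marginalization is a total-variation contraction, so total-variation convergence of the joint density yields \( \w_{k,i} \stackrel{d}{\rightsquigarrow} \w_{k,\infty} \) and \( p_{k,i} \to p_{k,\infty} \).
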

\begin{proof}
See Appendix \ref{appendix:asymptotic_probability}.
\end{proof}
In order to establish this result, we employ in Appendix \ref{appendix:asymptotic_probability} a known result from ergodic theory \cite[Theorem 5.7.4]{lasota1998chaos}. As is common in ergodic theory, even though we can affirm that there exist limiting distributions; we do not know exactly what these distributions are. This task might be of formidable complexity in general.

Note that this result is in contrast to fixed-hypothesis social learning, where log-belief ratios do not converge, i.e., \(  \w_{k,i} = \log \frac{\bmu_{k,i} (1)}{\bmu_{k,i} (0)} \to -\infty \)  (assuming w.l.o.g. that \( \theta = 0 \) is the fixed true hypothesis).

Theorem \ref{theorem:asymptotic_probability} states that the agents' probability of error will approach a steady-state value despite the fact that the belief vectors can fluctuate randomly in the limit.

\begin{corollary}[{\bf Asymptotic beliefs}]\label{cor:beliefs_risks}
Theorem~\ref{theorem:asymptotic_probability}  implies that beliefs of the agents converge in distribution. More formally,
\begin{align}
    \bmu_{k,i} \stackrel{d}{\rightsquigarrow} \bmu_{k,\infty},
\end{align}
for a time-independent random variable \( \bmu_{k,\infty} \).
\end{corollary}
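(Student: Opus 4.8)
The plan is to transport the convergence in distribution of the log-belief ratios, already established in Theorem~\ref{theorem:asymptotic_probability}, onto the belief vectors through a fixed continuous map, invoking the continuous mapping theorem. The starting observation is that, because $H=2$, the belief $\bmu_{k,i}$ lives in the one-dimensional simplex $\Delta_2$ and is thus completely specified by the single component $\bmu_{k,i}(1)$, with $\bmu_{k,i}(0)=1-\bmu_{k,i}(1)$. Inverting the definition \eqref{eq:wki_definition} of the log-belief ratio $\w_{k,i}$ gives the explicit reparametrization
\begin{align}
\bmu_{k,i}(1)=\frac{e^{\w_{k,i}}}{1+e^{\w_{k,i}}},\qquad \bmu_{k,i}(0)=\frac{1}{1+e^{\w_{k,i}}},
\end{align}
so that $\bmu_{k,i}=g(\w_{k,i})$ for the fixed map $g:\mathbb{R}\to\Delta_2$ defined by $g(w)\triangleq\big((1+e^{w})^{-1},\,e^{w}(1+e^{w})^{-1}\big)$. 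This $g$ is a smooth, one-to-one map (a homeomorphism onto the interior of $\Delta_2$), and in particular it is continuous on all of $\mathbb{R}$.

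Next, I would recall from Theorem~\ref{theorem:asymptotic_probability} that the agent-specific log-belief ratios converge in distribution, $\w_{k,i}\stackrel{d}{\rightsquigarrow}\w_{k,\infty}$. Since $g$ is continuous everywhere, its set of discontinuities is empty and therefore has zero probability under the law of $\w_{k,\infty}$, so the hypotheses of the continuous mapping theorem \cite{williams_1991} are satisfied. Applying it yields $g(\w_{k,i})\stackrel{d}{\rightsquigarrow}g(\w_{k,\infty})$, which is precisely $\bmu_{k,i}\stackrel{d}{\rightsquigarrow}\bmu_{k,\infty}$ upon defining the time-independent limiting random variable $\bmu_{k,\infty}\triangleq g(\w_{k,\infty})$. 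This holds for each agent $k$, establishing the claim.

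This argument is essentially immediate once Theorem~\ref{theorem:asymptotic_probability} is available, so there is no substantive obstacle; the corollary is really a restatement of the theorem through an invertible change of coordinates. The only point that warrants care is verifying that the reparametrization $g$ is continuous on the whole line so that no discontinuity set must be excised when applying the continuous mapping theorem---this is automatic here because $g$ is the logistic link of the binary simplex, which is smooth everywhere.
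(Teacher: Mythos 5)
Your proof is correct and follows essentially the same route as the paper: both invert the log-belief ratio definition \eqref{eq:wki_definition} to write $\bmu_{k,i}$ as the logistic transformation of $\w_{k,i}$, and then apply the continuous mapping theorem to the convergence in distribution supplied by Theorem~\ref{theorem:asymptotic_probability}. Your write-up merely spells out the continuity and bijectivity of the map $g$ in more detail than the paper does.
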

\begin{proof}
By definition \eqref{eq:wki_definition}, 
\begin{align}
    \w_{k,i} = \log \frac{1-\bmu_{k,i}(0)}{\bmu_{k,i}(0)} \Longleftrightarrow \bmu_{k,i}(0) = \frac{1}{1+ \text{exp}\{ \w_{k,i}\}}.
\end{align}
Since this is a continuous and non-degenerate transformation, by Theorem~\ref{theorem:asymptotic_probability} and the continuous mapping theorem \cite{mann1943} it holds that \( \bmu_{k,i} \stackrel{d}{\rightsquigarrow} \bmu_{k,\infty},\) for some time-independent random variable \(\bmu_{k,\infty}\). This implies that the statistics of the belief distribution also converge. 
\end{proof}
Corollary~\ref{cor:beliefs_risks} suggests that in general, the beliefs of agents will have random characteristics and will fluctuate in the long-run. This is in contrast to conventional social learning models where beliefs on the true \emph{fixed} hypothesis converge to one almost surely \cite{jadbabaie_2012,nedic_2017,zhao_2012,lalitha_2018}. In other words, all agents come to an agreement on the truth eventually. In comparison, in the current \emph{dynamic} hypothesis scenario, agents do not even come to an agreement as shown in the next result.

\begin{lemma}[{\bf Network disagreement}]\label{prop:network_disagreement} 
In general, the agents' beliefs do not converge to the same random variable in distribution. Namely, for any agent pair \( (\ell,k )\), the limiting variables \( \bmu_{k,\infty} \) and \( \bmu_{\ell,\infty} \) are not necessarily distributed according to the same distribution. Moreover, agents will have different performance in the long run, namely:
\begin{align}
    p_{k,\infty} \neq p_{\ell, \infty}, \qquad J_{k,\infty} \neq J_{\ell,\infty}.
\end{align}
\end{lemma}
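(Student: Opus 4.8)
The statement is an existence claim — it asserts that equality of the limiting laws and of the performance metrics fails \emph{in general} — so it suffices to exhibit one instance of the model in which the agents provably disagree. The plan is to pick a transition model that collapses the nonlinear prior term $\bm{\chi}_i$ in \eqref{eq:chi_def}, making the stationary law fully explicit and comparable agent by agent. Concretely, I would take the binary symmetric channel with crossover $\alpha=1/2$, i.e. $\bT(\theta\,|\,\theta')=1/2$ for all $\theta,\theta'$. This is non-deterministic, so Theorem~\ref{theorem:asymptotic_probability} applies and a unique $f_\infty$ exists; it has $\kappa(\bT)=0$ with uniform Perron vector $\pi$, so by \eqref{eq:pi_transition} the time-adjusted prior is uniform.

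First I would substitute this kernel into \eqref{eq:chi_def}: every entry of $\bm{\chi}_i$ equals $\log\frac{1/2+(1/2)e^{w_{\ell,i-1}}}{1/2+(1/2)e^{w_{\ell,i-1}}}=0$, hence $\bm{\chi}_i=\bm{0}$ at every iteration. The diffusion recursion \eqref{eq:diffusion_matrix_recursion} then reduces to $\w_i=A^{\T}\bm{\nu}_i$ and the consensus recursion \eqref{eq:consensus_matrix_recursion} to $\w_i=\bm{\nu}_i$. Since $\bm{\nu}_i\,|\,\btheta_i^\circ\sim\mathcal{G}(\bm{\beta}^{(\theta)},\Sigma)$ by \eqref{eq:v_gaussian}, the vector $\w_i$ is, conditioned on $\btheta_i^\circ=\theta$, \emph{exactly} Gaussian at every $i\ge1$ and carries no dependence on the past; with $\btheta_i^\circ$ already stationary, the resulting Gaussian mixture $\pi(\theta)\,\mathcal{G}(\,\cdot\,;A^{\T}\bm{\beta}^{(\theta)},A^{\T}\Sigma A)$ \emph{is} the stationary density $f_\infty$. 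In particular the marginal of $\w_{k,\infty}$ is $\mathcal{G}\big([A^{\T}\bm{\beta}^{(\theta)}]_k,\,[A^{\T}\Sigma A]_{kk}\big)$ under each state, whose parameters depend on $k$ through $m_k\triangleq\sum_\ell a_{\ell k}/\sigma_\ell^2$ and $v_k\triangleq\sum_\ell a_{\ell k}^2/\sigma_\ell^2$ (for consensus they depend on $\sigma_k$ alone).

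Next I would break the symmetry so these marginals genuinely differ, which is achieved simply by taking two agents with unequal observation variances $\sigma_k\neq\sigma_\ell$: for diffusion with $K=2$ and self-weight $a\neq1/2$ one checks $m_k-m_\ell=(2a-1)(1/\sigma_k^2-1/\sigma_\ell^2)\neq0$, while for consensus the marginal depends on $\sigma_k$ directly. Carrying the Gaussian marginal through \eqref{eq:error_density} yields the closed form $p_{k,\infty}=\Phi(-\zeta\,m_k/\sqrt{v_k})$ for diffusion (and $\Phi(-\zeta/\sigma_k)$ for consensus), with $\Phi$ the standard normal cumulative distribution function; this is strictly monotone in the per-agent ratio $m_k/\sqrt{v_k}$ (resp. $1/\sigma_k$) and hence unequal across the two chosen agents. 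Because $\bmu_{k,\infty}(0)=1/(1+e^{\w_{k,\infty}})$ is a deterministic monotone transform of $\w_{k,\infty}$, distinct marginals of $\w$ force distinct laws of $\bmu_{k,\infty}$; this establishes the first assertion together with $p_{k,\infty}\neq p_{\ell,\infty}$.

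It remains to argue $J_{k,\infty}\neq J_{\ell,\infty}$. Here I would note that with $\kappa(\bT)=0$ the centralized posterior ratio is $w^\star=\mathds{1}_K^{\T}\bm{\nu}_i/\gamma$, which is jointly Gaussian with each $\w_{k,\infty}=[A^{\T}\bm{\nu}_i]_k$ conditioned on the state; consequently $J_{k,\infty}=\e\,\dkl(\bmu^\star_\infty\|\bmu_{k,\infty})$ in \eqref{eq:filter_risk} is the expectation of a fixed KL functional of the bivariate Gaussian pair $(\w^\star_\infty,\w_{k,\infty})$, whose second-order statistics again vary with $k$. I expect this to be the main obstacle: the KL-risk functional has no elementary closed form, so instead of a direct formula one argues by generic inequality — the explicit bivariate Gaussians differ across agents, so the two risks can coincide only on a measure-zero set of the parameters $(\sigma_1,\sigma_2,a,\zeta,\gamma)$ — or one simply fixes numerical values and verifies that the two integrals differ, which is all that the phrase \emph{in general} requires.
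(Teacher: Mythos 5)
Your core idea is exactly the paper's: set the transition kernel to the binary symmetric channel with crossover $\alpha=1/2$, so that every entry of $\bm{\chi}_i$ in \eqref{eq:chi_def} vanishes, the recursion collapses to $\w_i = A^{\T}\bm{\nu}_i$ (resp.\ $\bm{\nu}_i$ for consensus), and the log-belief ratios become exactly Gaussian with time-independent parameters. Where you differ is in how the symmetry is broken: the paper keeps all likelihoods identical ($\sigma_k=\sigma$) and uses a $3$-agent path network, so that the peripheral agent has $\w_{1,i}=\tfrac{2}{3}\bm{\nu}_{1,i}+\tfrac{1}{3}\bm{\nu}_{2,i}$ while the central agent has $\w_{2,i}=\tfrac{1}{3}(\bm{\nu}_{1,i}+\bm{\nu}_{2,i}+\bm{\nu}_{3,i})$ --- identical means, different variances, hence different laws and different error probabilities with no further computation. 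Your $K=2$ example with $\sigma_k\neq\sigma_\ell$ instead perturbs both the mean $m_k$ and the variance $v_k$, and here your argument has a small logical gap: you establish $m_k\neq m_\ell$ and then invoke monotonicity of $\Phi(-\zeta\, m_k/\sqrt{v_k})$, but monotonicity converts inequality of the \emph{ratios} $m_k/\sqrt{v_k}$ into inequality of the $p$'s; inequality of the $m$'s alone does not suffice (indeed, in the limit $\sigma_\ell\to\infty$ the two ratios coincide even though $m_k\neq m_\ell$). This is easily repaired by evaluating the ratios at a concrete parameter choice, or by adopting the paper's equal-variance/unequal-centrality construction, which sidesteps the issue entirely. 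On the risk claim $J_{k,\infty}\neq J_{\ell,\infty}$, your genericity-or-numerics argument is left incomplete (a ``coincidence only on a measure-zero set'' claim itself requires showing the difference is not identically zero); to be fair, the paper's own proof is silent on $J$ and implicitly relies on the same ``different limiting laws yield different functionals'' reasoning, so on this point you are at least as explicit as the paper.
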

\begin{proof}
We prove this by a counter-example in Appendix~\ref{sec:disagreement_example}.
\end{proof}
Lemma~\ref{prop:network_disagreement} implies that rapidly changing states prevent learning the truth with full confidence, as well as eventual network agreement, even under strongly-connected networks where information can flow thoroughly in all directions. Moreover, when the true state of nature is changing, agents can have different and non-vanishing asymptotic error probabilities. In traditional social learning, agents can have different and non-zero error probabilities in \emph{finite-time}. But as time goes by, all probabilities of error vanish, i.e., they all become 0. So, unlike the traditional setting, the dynamic truth model gives rise to an equilibrium of \emph{wise} and \emph{unwise} agents in asymptotics. That is to say, some agents will be more successful in predicting the truth than others in steady-state. The agents' error probabilities will be dependent on their observations' informativeness and their relative location in the network. Indeed, this ``wise agent phenomenon'' is more in line with what we observe in real-world, as against to eventual agreement of agents on the correct hypothesis that traditional social learning literature concludes. This observation shows the importance of incorporating the changing behavior of the state of nature into social learning models.

As discussed before, most of the literature on learning over strongly-connected social networks conclude consensus across agents, although there are exceptions. The works \cite{acemoglu13} and \cite{yildiz13} show that when there are \emph{stubborn} agents in the network that never change their opinion, the beliefs in the long-run can fluctuate and vary, as in the current work. Moreover, if agents tend to communicate with other agents that think alike \cite{axelrod97,blondel09}; or if they tend to use their own beliefs as substitutes for others' beliefs in the case of limited communication \cite{kayaalp2022random}, then opinion clusters can emerge. The current work gives evidence for another reason of disagreement, namely, the rapidly changing truth.  



\section{Simulation Results}\label{sec:simulations}
We consider the $10-$agent network displayed in Fig.~\ref{fig:net}. The combination weights are given by the Metropolis rule \cite{  metropolis_1953,sayed_2014}, which results in a doubly-stochastic and symmetric matrix with the mixing rate $\rho_2 = 0.86$, satisfying Assumption \ref{as:network_top}. 
\begin{figure}[ht]
	\centering
	\includegraphics[width=2.5in]{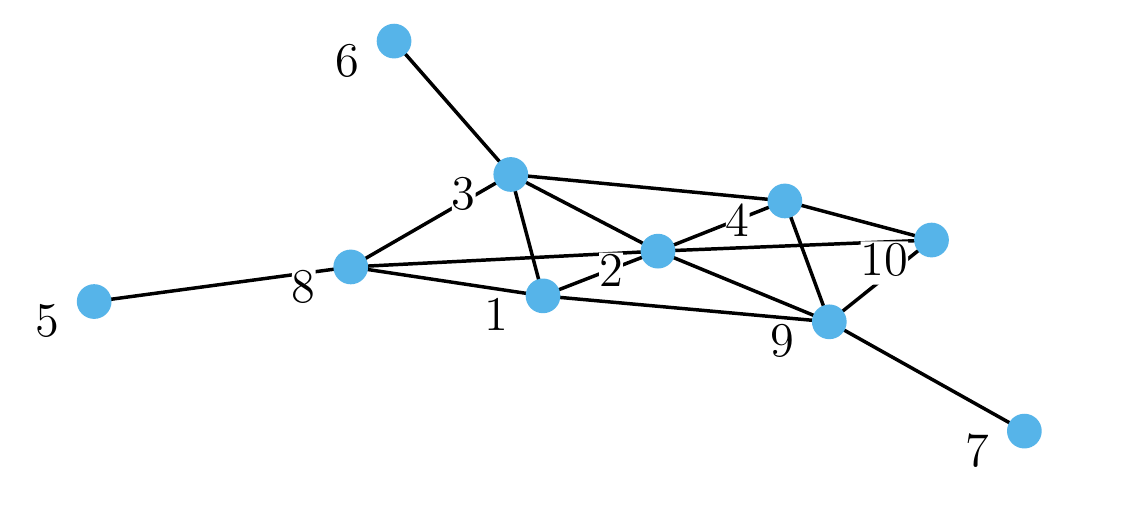}
	\caption{The network topology.}
	\label{fig:net}
\end{figure}

The agents over the network aim to track the true state of nature from a set of two hypotheses,  $\Theta=\{0,1\}$. For the initial simulations, all agents possess the same family of truncated Gaussian likelihoods, satisfying Assumption \ref{as:likelihood_functions}:
\[ L_k(\xi|\theta)  = \begin{dcases} 
       \frac{1}{Z_{\theta}}\frac{1}{\sqrt{2\pi}}\exp\left\{-\frac{(\xi - (1.5\times\theta)^2}{2}\right\},\!\!\!\!\!&-1\leq \xi \leq 2 \\
      \qquad \qquad 0, &  \text{otherwise}
   \end{dcases}
\]
for each agent $k\in \mathcal{N}$, where \( Z_{\theta} \) is a normalization constant:
\begin{align}
    Z_{\theta} \triangleq \int_{-1}^2 \frac{1}{\sqrt{2\pi}}\exp\left\{-\frac{(\xi - (1.5\times\theta))^2}{2}\right\} d\xi
\end{align}
The observations are independent conditioned on the true state, satisying Assumption \ref{as:independence}. The hidden state is changing with respect to the following transition model:
\begin{align} \bT(\theta_i | \theta_{i-1}) = \begin{cases} 
      0.9, & \theta_i=\theta_{i-1} \\
      0.1, & \theta_i\neq\theta_{i-1}
   \end{cases}
\end{align}
for which the Dobrushin coefficient is given by \( \kappa(\bT) = 0.8 \).

\begin{figure}[ht]
	\centering
	\includegraphics[width=2.7in]{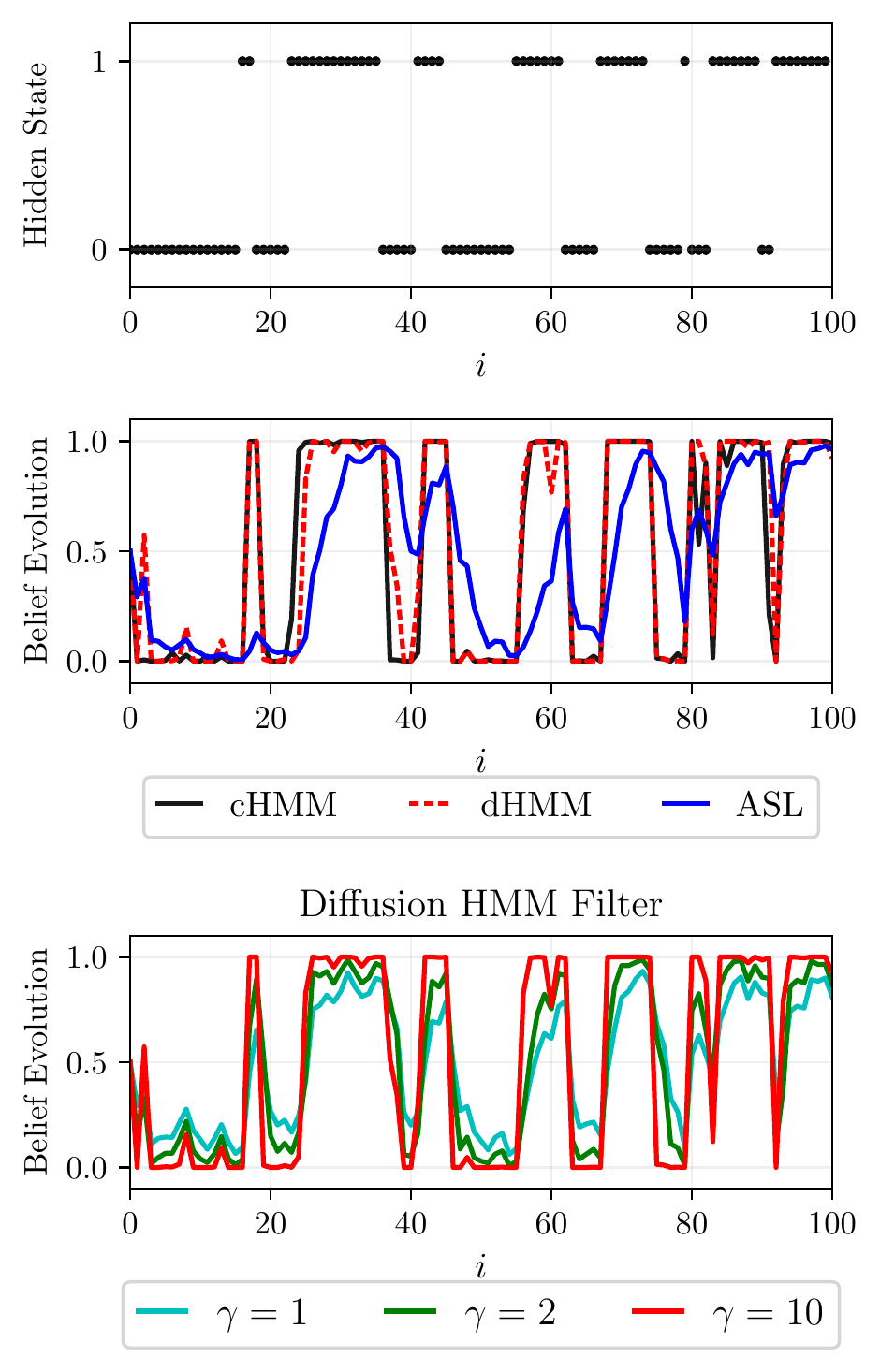}
	\caption{{\em Top panel}: A realization of the true hidden state. {\em Middle panel}: Belief evolution over time for different algorithms (cHMM, dHMM, and ASL \cite{bordignon_2021}). {\em Bottom panel}: Belief evolution over time for different \( \gamma \).}
	\label{fig:one_realization}
\end{figure}

The top panel of Fig.~\ref{fig:one_realization} demonstrates a particular realization of hidden states $\bm{\theta}^{\circ}_{i}$. In the middle panel, the belief evolution under this realization is shown for the following algorithms: the proposed diffusion HMM filter (dHMM) with the choice $\gamma=K$, the centralized HMM filter (cHMM), and ASL \cite{bordignon_2021} with the choice of \( \delta=\alpha=0.1\). Notice that dHMM and cHMM behave similarly with a remarkable performance for tracking the abrupt changes in the true state. They are faster in responding to state changes compared to ASL, which does not utilize knowledge of the transition model.


The bottom panel of Fig.~\ref{fig:one_realization} provides the belief evolution over time for different choices of \( \gamma \) in the diffusion HMM filter. As \(\gamma\) gets closer to \(K=10\), we can see that the tracking capacity of the algorithm increases, approaching the centralized algorithm.

The evolution of different agents' risk functions \( J_{k,i} \) over time is provided in Fig.~\ref{fig:risks_dif_agents}. Although they all exhibit a regular and bounded behavior as suggested by Theorem \ref{th:kl_without_net_assumption}, they are different with respect to different agents. More central agents have less divergence from the optimal centralized solution as expected, whereas marginal agents, such as agents 5, 6 and 7 present higher divergence.

\begin{figure}[ht]
	\centering
	\includegraphics[width=0.8\linewidth]{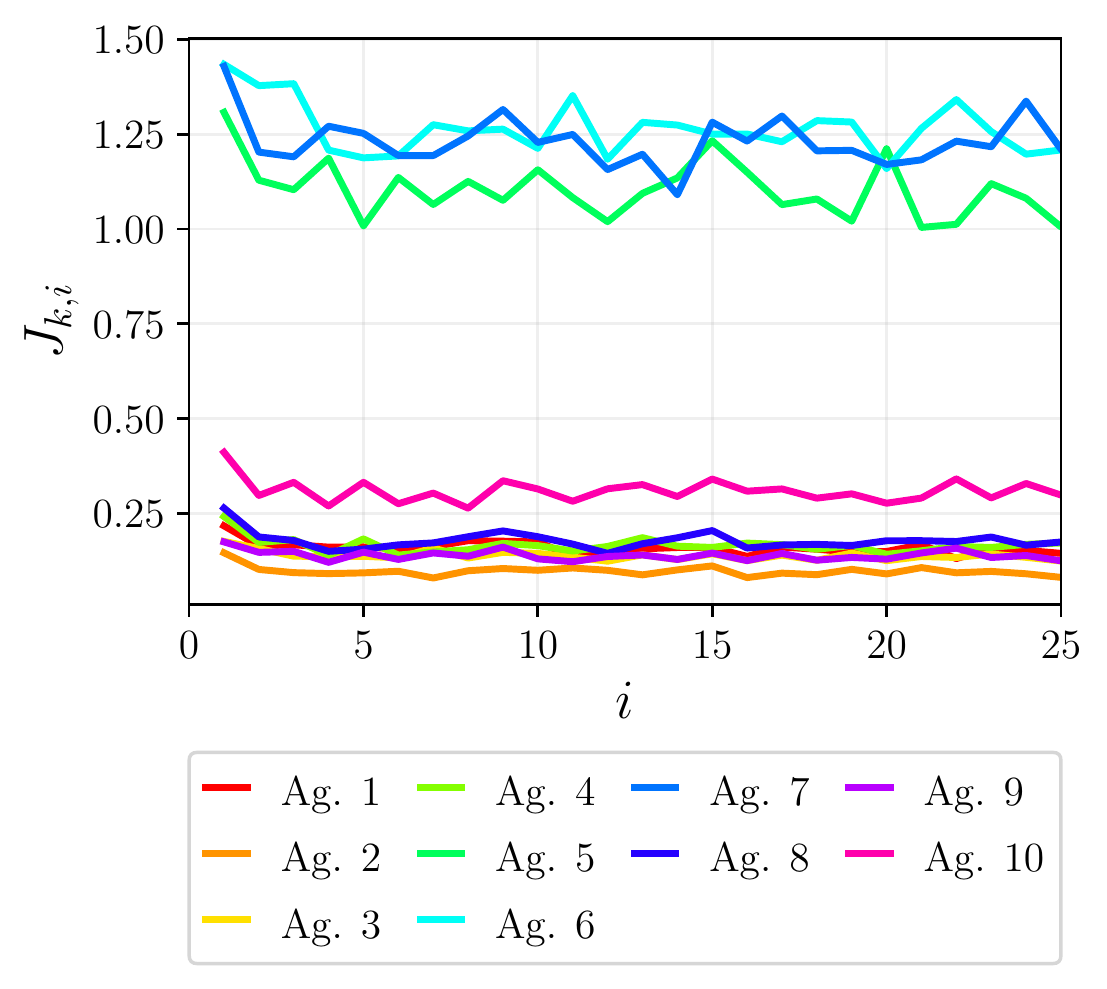}
	\caption{Risk functions over time belonging to different agents.}
	\label{fig:risks_dif_agents}
\end{figure}

\begin{figure*} 
    \centering
  \subfloat[\label{fig:risk_rho2}]{%
       \includegraphics[width=0.3\linewidth]{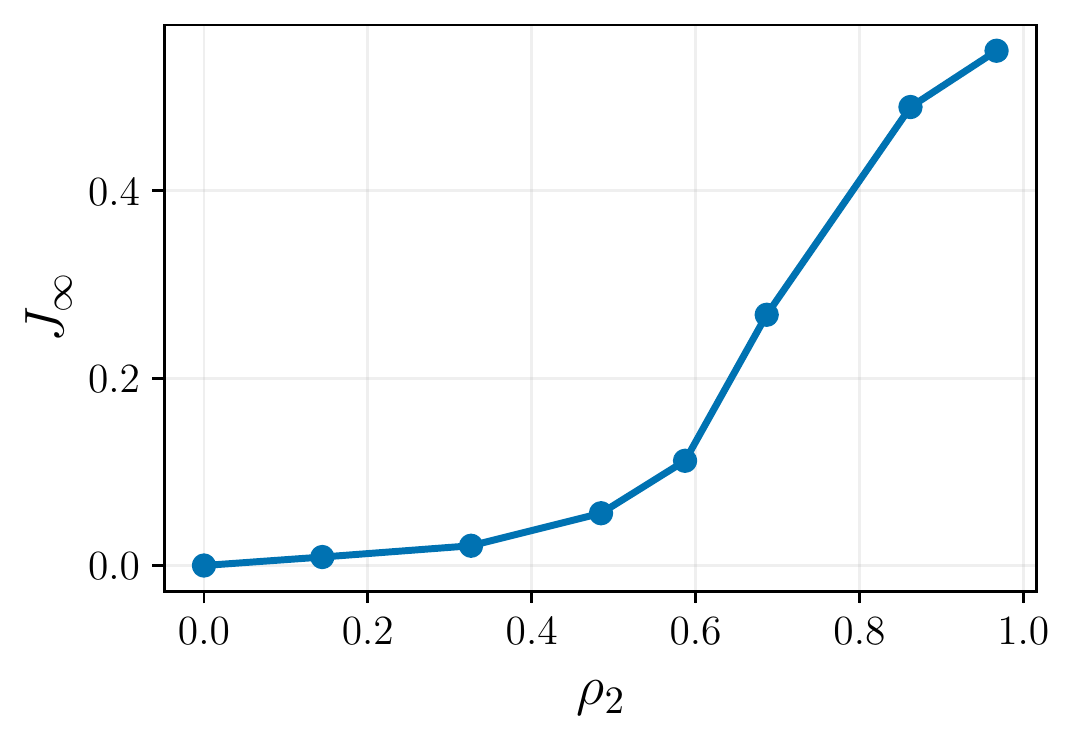}}
    \hfill
  \subfloat[\label{fig:risk_kappa}]{%
        \includegraphics[width=0.3\linewidth]{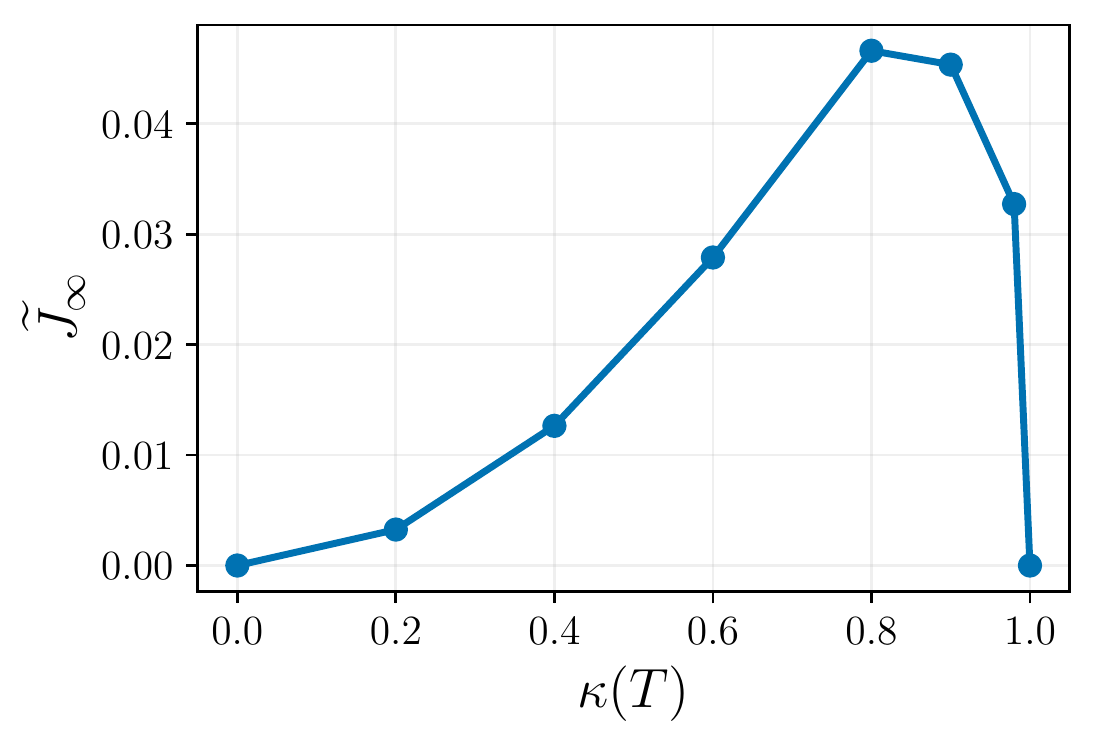}}
        \hfill
  \subfloat[\label{fig:risk_comb}]{%
       \includegraphics[width=0.3\linewidth]{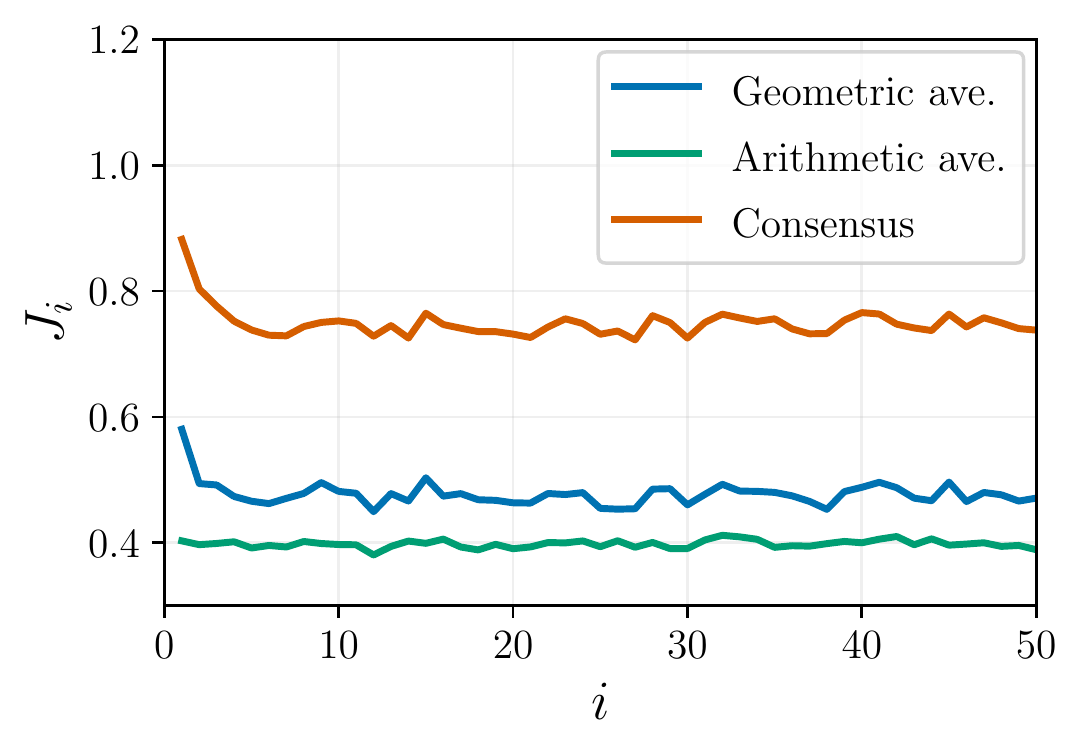}}
  \caption{(a) Average asymptotic risk function and the mixing rate of the network, (b) Average asymptotic time-adjusted prior risk and the Dobrushin coefficient of the transition model, (c) Average risk over time for Diffusion-GA, Diffusion-AA and Consensus-GA.}
  \label{fig:risk} 
\end{figure*}

Fig.~\ref{fig:risk_rho2} illustrates the network average \( J_i\) of asymptotic agent-specific risks $J_{k,i}$ over different network topologies. The sparse networks are associated with higher \( \rho_2 \) values, whereas smaller values arise in dense networks. The risks were approximated by averaging 2000 Monte Carlo simulations with \( \gamma = K \). It can be seen that the average risk is increasing with increasing \( \rho_2\). In other words, the average deviation from the centralized solution decreases with increasing network connectivity. This observation supports Theorem \ref{th:kl_without_net_assumption}. Specifically, when the network is fully-connected, the risk vanishes as expected since the filter is stable (i.e., corrects wrong initialization) as argued in Section \ref{sec:optimality_gap}.

\newcommand{\ra}[1]{\renewcommand{\arraystretch}{#1}}
\newcolumntype{C}{ >{\centering\arraybackslash} m{2cm} }
\newcolumntype{D}{ >{\centering\arraybackslash} m{1.75cm} }
\begin{table}[ht]
\caption{Number of agents and average asymptotic risks across agents \( \frac{1}{K}\sum_{k=1}^K J_{k,\infty} \)}
\vspace{1em}
\small
\ra{1.5}
\begin{tabular}{  D D D D}
  \( K \) & dHMM & DBF \cite{bandyopadhyay_2018} &\( \rho_2 \)\\
 \cline{1-4}
 10   & 0.49& 2.59 &0.86 \\
 
 20    & 0.53&  5.64 &0.82\\
 
  30   & 0.67 &  8.63 &0.81\\
  
  40   & 0.98& 11.88 &0.80 \\
 
 70    & 1.23& 21.29  &0.77\\
 
\end{tabular}
\label{tab:num_agents_risks}
\end{table}

In Table \ref{tab:num_agents_risks}, we compare the average asymptotic risks of networks with different sizes. From the bottom panel of Fig.~\ref{fig:one_realization} we know that choosing \( \gamma \to K \) boosts performance, so we set \( \gamma = K \) for all cases. From Fig.~\ref{fig:risk_rho2} we observe that increasing the network connectivity, i.e., \( \rho_2 \to 0 \), boosts performance. Hence, for a fair comparison, we choose smaller \( \rho_2 \) for larger networks---it is a challenging task to get different-sized graphs with exactly the same \( \rho_2 \). Despite this advantage, larger networks have higher risk values, in other words, higher disagreement with the optimal centralized solution, supporting Theorem \ref{th:kl_without_net_assumption}. We also provide the average risk values for the DBF strategy \cite{bandyopadhyay_2018}. The risk values are significantly higher compared to the proposed dHMM algorithm. Moreover, dHMM is more scalable in the sense that the growth of the risk values with network size is worse in the DBF case.

The effect of \( \kappa (\bT) \) on the average time-adjusted prior divergence \( \widetilde{J}_{\infty} \triangleq  \frac{1}{K}\sum_{k=1}^K \widetilde{J}_{k,\infty} \) can be examined from Fig.~\ref{fig:risk_kappa}. Remember that \( \kappa (\bT)\) is closer to 0 for rapidly mixing transition models. Theorem \ref{th:kl_without_net_assumption} suggests that the risks should increase with increasing \( \kappa (\bT)\). It is visible that this is the case for \( \kappa (\bT) \leq 0.8\), and even more, the risk is equal to 0 for \( \kappa (\bT) = 0\) as revealed by Theorem \ref{th:kl_without_net_assumption}. However, when the informativeness of the observations starts to dominate the ergodicity of the transition model, i.e., \( \kappa (\bT)~\to~1\), the setting gets closer to traditional social learning setup and the risk vanishes, which is unfortunately not explainable with the analysis of the present work. Also note that binary symmetric channels (BSCs) with the same \( \kappa (\bT) \) result in the same divergence. For example, \( \kappa (\bT) = 0.8 \) represents both BSC with change probability \( 0.1 \) and change probability \( 0.9 \). 

In Fig. \ref{fig:risk_comb}, we compare the average risk values of the analyzed diffusion with geometric averaging (GA) to (i) consensus with GA and (ii) diffusion with arithmetic averaging (AA). The age of the utilized information is critical for highly dynamic state transitions. Since diffusion-based strategies use the neighbors' updated information, they outperform the consensus strategy, which can be seen from  Fig. \ref{fig:risk_comb}. Also, diffusion-AA has smaller deviation from the optimal solution compared to the GA-based strategy. However, this observation is not directly transferable to probability of error comparison, as we discuss in the sequel. 


For simulations on probability of error, we consider Gaussian likelihoods, as in Section \ref{sec:gaussian_lklhd}:
\[ L_k(\xi|\theta)  = \begin{dcases} 
       \frac{1}{\sqrt{2\pi}}\exp\left\{-\frac{(\xi+1)^2}{2}\right\},&\theta=0 \\
      \frac{1}{\sqrt{2\pi}}\exp\left\{-\frac{(\xi-1)^2}{2}\right\}, & \theta=1
   \end{dcases}
\]
\begin{figure}[ht]
	\centering
	\includegraphics[width=.8\linewidth]{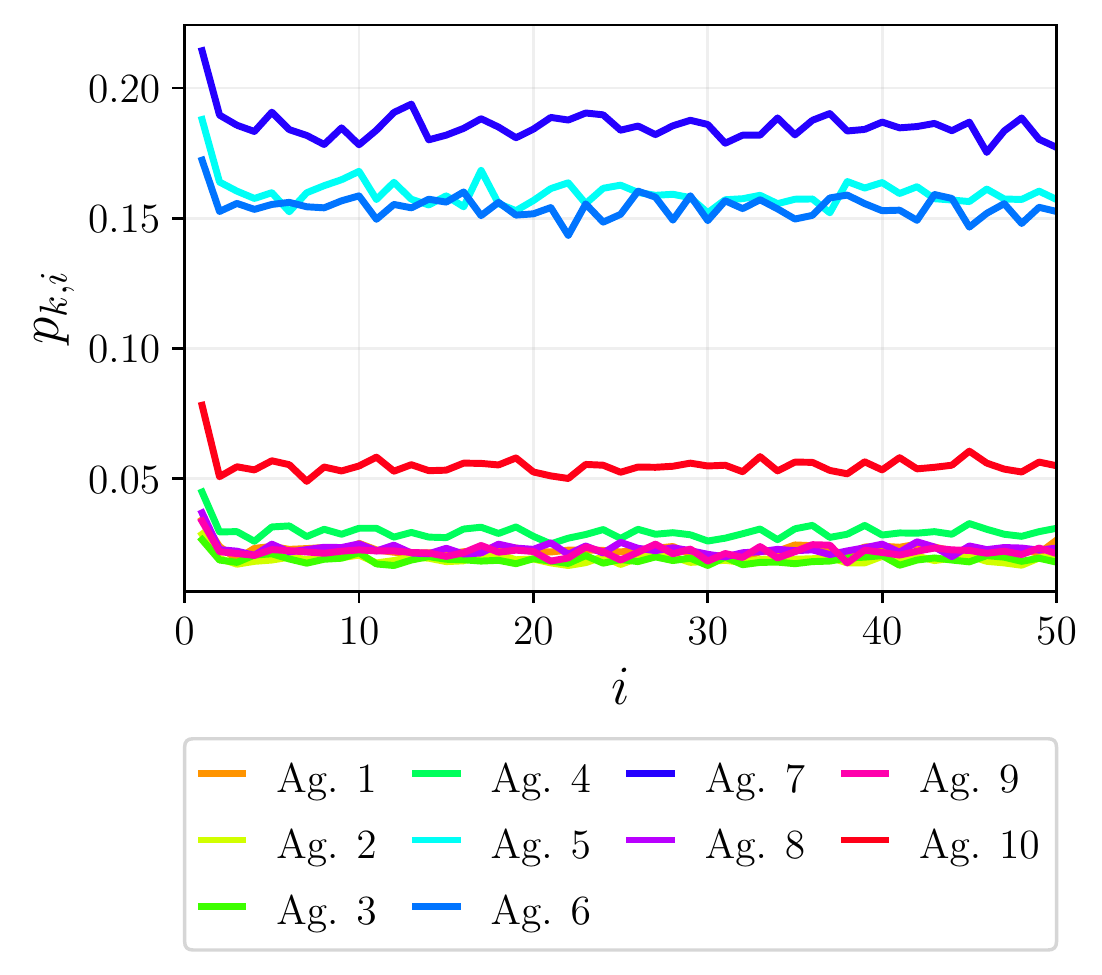}
	\caption{Probability of error across different agents over time.}
	\label{fig:different_agents_pe}
\end{figure}

The plots for error probability are based on 10000 Monte Carlo simulations. We first see in Fig. \ref{fig:different_agents_pe} that the error probabilities of agents rapidly converge, supporting Theorem \ref{theorem:asymptotic_probability}. Moreover, more central agents are better, i.e., wise, in tracking the state of nature compared to less central agents.

\begin{figure*} 
    \centering
  \subfloat[\label{fig:asl_dhmm_chmm}]{%
       \includegraphics[width=0.28\linewidth]{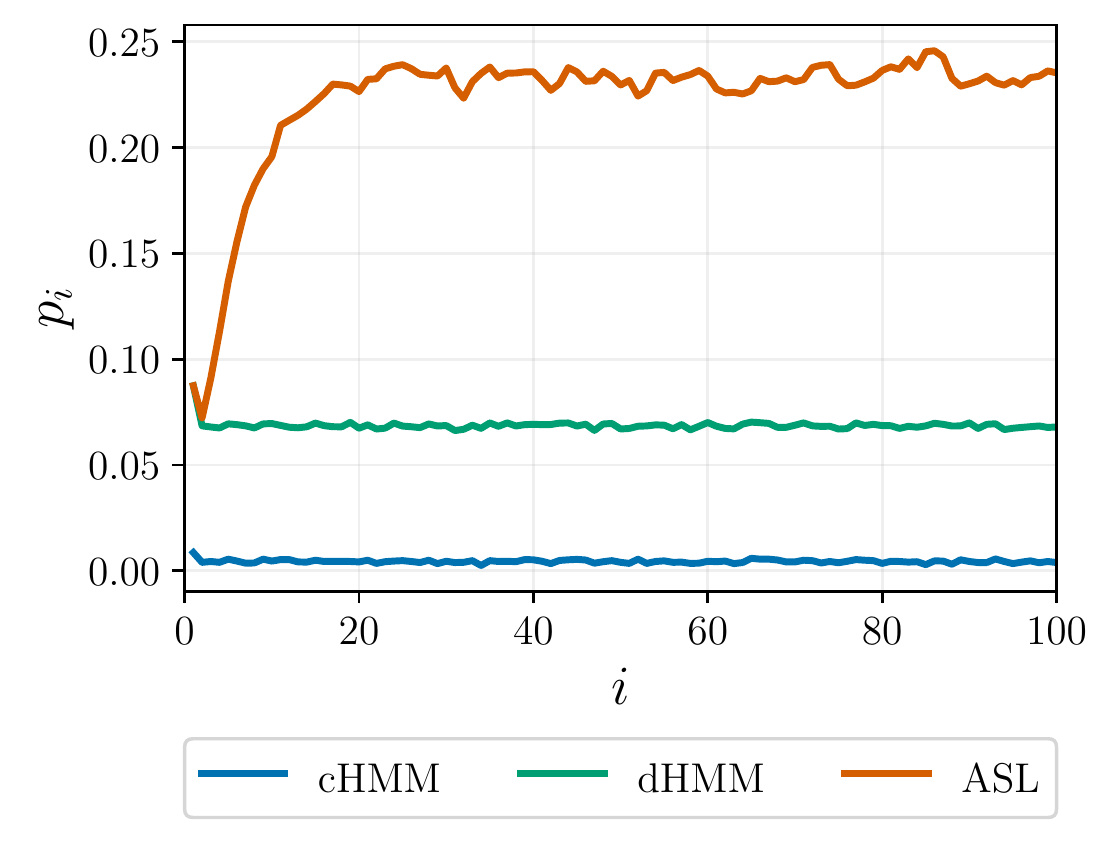}}
    \hfill
  \subfloat[\label{fig:diff_comb}]{%
        \includegraphics[width=0.31\linewidth]{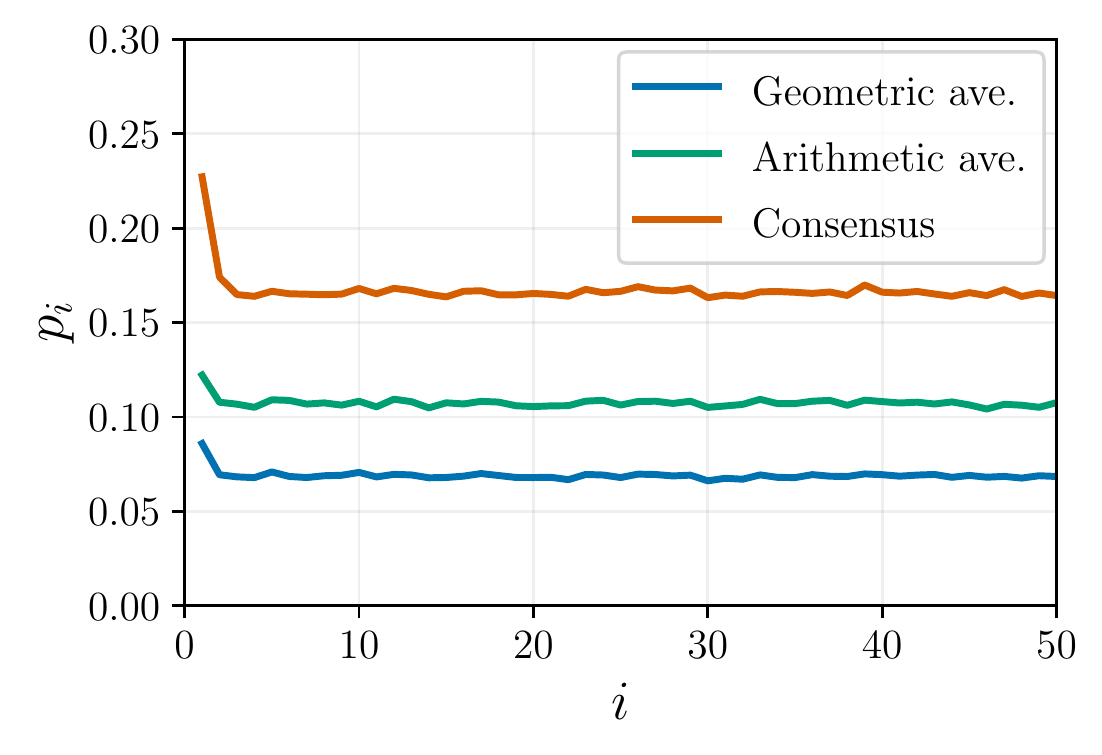}}
        \hfill
  \subfloat[\label{fig:diff_rho2}]{%
       \includegraphics[width=0.31\linewidth]{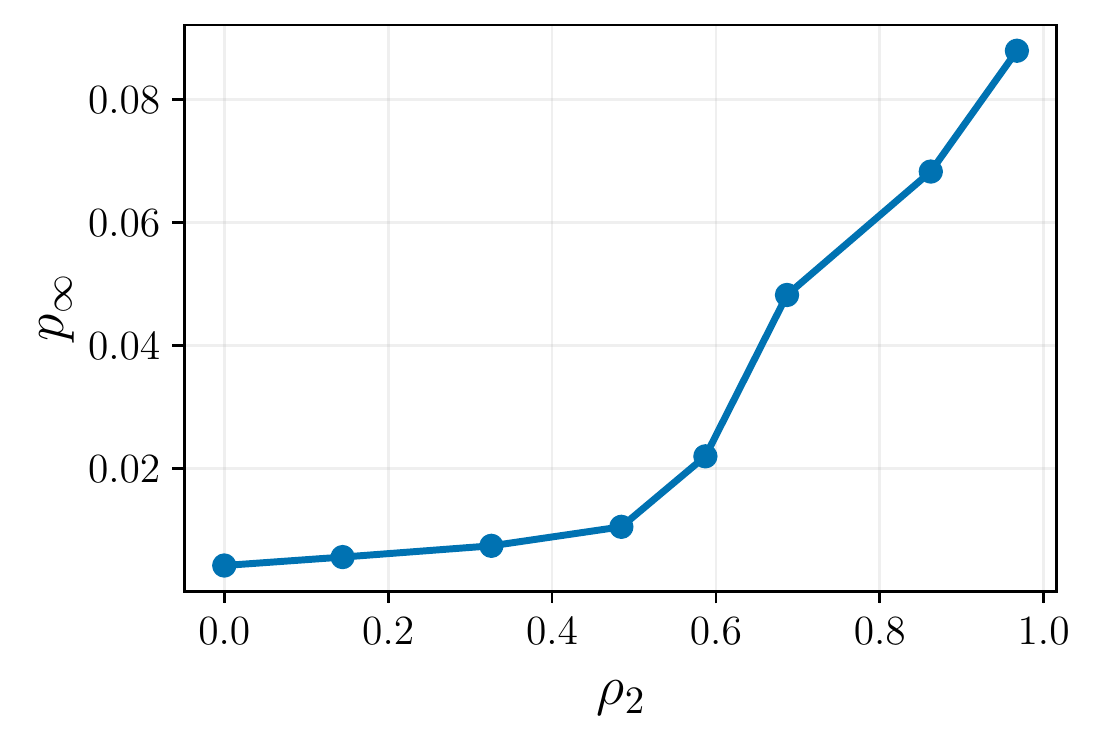}}
  \caption{ Average probability of error over network (a) for cHMM, dHMM and ASL, (b) for diffusion-GA, diffusion-AA, and consensus-GA, over time, (c) in steady-state with respect to different network connectivity.}
  \label{fig:pe} 
\end{figure*}

The network average error probability of diffusion-GA is compared to the (i) centralized and ASL strategies in Fig. \ref{fig:asl_dhmm_chmm} and to (ii) diffusion-AA and consensus-GA in Fig. \ref{fig:diff_comb}. It is seen that the diffusion-GA strategy \eqref{eq:dif_evolve_step}--\eqref{eq:dif_combine_step}  outperforms other distributed solutions, but there is still a gap to the centralized solution which can be removed completely only with fully-connected networks. In particular, diffusion-GA has smaller error probability than diffusion-AA, as opposed to the risk function case. A detailed comparison between these two algorithms can be an interesting future work. Finally, Fig. \ref{fig:diff_rho2} shows that the error probability decreases with increasing network connectivity which highlights the benefit of cooperation.

\section{Concluding Remarks}

In this work, we proposed a distributed and online state estimation algorithm for finite-state HMMs. Based on ergodicity of the underlying transition model, we provided asymptotic bounds on the disagreement between the distributed strategy and the optimal centralized strategy. We also examined the error probability in steady-state and established convergence in distribution under Gaussian observation models.

In addition to state estimation, the proposed algorithm can be used for the prediction of incoming data, by averaging the data-state likelihood functions with respect to the belief over states. More formally, for well-defined cases, agent \( k \) can estimate the incoming data at time \( i \) by using
\begin{align}
\widehat{\bxi}_{k,i} &\triangleq \argmax_{\xi_{k,i}} \sum_{\theta_{i}} L_k(\xi_{k,i}|\theta_i) \bmeta_{k,i}(\theta_i).
\end{align}
Furthermore, the algorithm can also be used for continuous state estimation. This would require changing the summations over the states to integrals. These can be numerically tractable under some conditions. For instance, the exponential family of observation models can lead to closed-form formulas for the integral expressions, as in \cite{dedecius_2017}. 

Finally, the theoretical analysis in the current work utilizes the ergodicity of the transition model to establish performance bounds. A future challenge is to incorporate the informativeness of the observations as well.

\appendices
\allowdisplaybreaks
\section{Proof of Theorem \ref{th:kl_without_net_assumption}}\label{appendix:risk_theorem}
The risk function can be written as
\begin{align}
J_{k,i}&=\e_{\f_i} \dkl(\bmu_i^\star || \bmu_{k,i}) \notag\\
&= \e_{\f_i} \Big [\sum_{\theta_i \in \Theta} \bmu_i^\star (\theta_i) \log \frac{\bmu_i^\star (\theta_i)}{\bmu_{k,i} (\theta_i)} \Big] \notag \\
&\stackrel{(a)}{=} \e_{\f_i} \Big [\sum_{\theta_i \in \Theta} \mathbb{P} (\btheta_i^\circ = \theta_i | \bmf_i) \log \frac{\bmu_i^\star (\theta_i)}{\bmu_{k,i} (\theta_i)} \Big] \notag \\
&\stackrel{(b)}{=}\e_{\f_i} \Big [ \e_{\theta_i^\circ | \f_i} \Big (\log \frac{\bmu_i^\star (\btheta_i^\circ)}{\bmu_{k,i} (\btheta_i^\circ)} \Big ) \Big ] \notag \\
&=\e_{\f_i,\theta_i^\circ}\Big [\log \frac{\bmu_i^\star (\btheta_i^\circ)}{\bmu_{k,i} (\btheta_i^\circ)} \Big ] \notag\\
&\stackrel{(c)}{=}\e_{\f_i,\theta_i^\circ}\Big [\log \bmu_i^\star (\btheta_i^\circ)-\sum_{\ell \in \mathcal{N}_k} a_{\ell k} \log \bpsi_{\ell,i} (\btheta_i^\circ) \Big ] \notag \\ 
&\qquad + \e_{\f_i} \Big [ \log \sum_{\theta_{i}^\prime \in \Theta} \text{exp}\Big \{{\sum_{\ell \in \mathcal{N}_k} a_{\ell k} \log \bpsi_{\ell,i} (\theta_{i}^\prime) \Big \}} \Big ] \notag\\
&=\sum_{\ell \in \mathcal{N}_k} a_{\ell k} \e_{\f_i,\theta_i^\circ} \Big[ \log \frac{\bmu_i^\star (\btheta_i^\circ)}{\bpsi_{\ell,i} (\btheta_i^\circ)} \Big] \notag\\
&\qquad + \e_{\f_i} \Big [ \log \sum_{\theta_{i}^\prime \in \Theta} \text{exp}\Big \{{\sum_{\ell \in \mathcal{N}_k} a_{\ell k} \log \bpsi_{\ell,i} (\theta_{i}^\prime) \Big \}} \Big ] \label{eq:4_term}
\end{align}
where \( (a) \) follows from definition \eqref{eq:true_posterior}, \( (b) \) follows from the definition of conditional expectation with respect to \( \theta_i^\circ \) given \( \f_i\), and \( (c) \) follows from the combine step \eqref{eq:dif_combine_step}. From the centralized update \eqref{eq:centralized_posterior} and the adapt step \eqref{eq:dif_adapt_step}, we have:
\begin{align}
\log \frac{\bmu_i^\star (\theta_i)}{\bpsi_{\ell,i} (\theta_i)}&=\log \frac{L(\bxi_i|\theta_i)}{(L_\ell(\bxi_{\ell,i}|\theta_i))^{\gamma}}+\log \frac{\bmeta_i^\star(\theta_i)}{\bmeta_{\ell,i}(\theta_i)} \notag\\ 
&\qquad -\log \frac{\m_i^\star(\bxi_i)}{\m_{\ell,i}(\bxi_{\ell,i})}. \label{eq:belief_int_belief_ratio}
\end{align}
where we are introducing the following marginal distribution for the new data given the past data:
\begin{align}
\m_i^\star(\xi_i) \triangleq \mathbb{P} (\bxi_i=\xi_i | \bmf_{i-1} ) &= \sum_{\theta_{i}^\prime \in \Theta} \mathbb{P} (\bxi_i=\xi_i, \btheta_i^\circ =\theta_{i}^\prime | \bmf_{i-1} ) \notag \\ &= \sum_{\theta_{i}^\prime \in \Theta} L(\xi_i|\theta_i^\prime)   \mathbb{P} (\btheta_i^\circ =\theta_{i}^\prime | \bmf_{i-1} ) \notag \\ &= \sum_{\theta_{i}^\prime \in \Theta} L(\xi_i|\theta_i^\prime) \bmeta_i^\star(\theta_i^\prime),
\end{align}
as well as the agent-specific pseudo-marginal distribution:
\begin{align}\label{eq:pseudo_marginal}
\m_{\ell,i}(\bxi_{\ell,i}) \triangleq \sum_{\theta_{i}^\prime \in \Theta}(L_\ell(\bxi_{\ell,i} | \theta_{i}^\prime))^{\gamma}\bmeta_{\ell,i} (\theta_{i}^\prime).
\end{align}
Note that \( \m_{\ell,i}(\bxi_{\ell,i}) \) is not a real distribution, i.e., it is not summing up to one because \( \gamma \neq 1 \), in general. To rewrite \eqref{eq:4_term} using \eqref{eq:belief_int_belief_ratio}, we first observe that
\begingroup
\allowdisplaybreaks
\begin{align}
\sum_{\ell \in \mathcal{N}_k} a_{\ell k} \e_{\xi_i,\theta_i^\circ} &\Big [\log \frac{L(\bxi_i|\btheta_i^\circ)}{(L_\ell(\bxi_{\ell,i}|\btheta_i^\circ))^{\gamma}} \Big ]\notag \\
& \stackrel{(a)}{=} \e_{\xi_i,\theta_i^\circ} \Big[ \sum_{\ell=1}^K \log L_\ell(\bxi_{\ell,i} | \btheta_{i}^\circ) \Big ] \notag \\ &\qquad- \sum_{\ell \in \mathcal{N}_k} a_{\ell k} \e_{\xi_{\ell,i},\theta_i^\circ} \Big[ \gamma  \log L_\ell(\bxi_{\ell,i} | \btheta_{i}^\circ) \Big]  \notag\\
& =  \e_{\xi_i,\theta_i^\circ} \Big[ \sum_{\ell=1}^K (1  - \gamma a_{\ell k}) \log L_\ell(\bxi_{\ell,i} | \btheta_{i}^\circ)  \Big] \label{eq:first_4_term}
\end{align}
\endgroup
where in \( (a) \) we used the independence from Assumption \ref{as:independence}. Moreover, the divergence of time-adjusted priors can be bounded as:
\begingroup
\allowdisplaybreaks
\begin{align}
&\sum_{\ell \in \mathcal{N}_k} a_{\ell k} \e_{\f_i,\theta_i^\circ} \Big[  \log \frac{\bmeta_i^\star(\btheta_i^\circ)}{\bmeta_{\ell,i}(\btheta_i^\circ)} \Big ] \notag\\
&\qquad = \sum_{\ell \in \mathcal{N}_k} a_{\ell k} \e_{\f_{i-1}, \theta_i^\circ} \Big[ \e_{\xi_i|\f_{i-1}, \theta_i^\circ}  \Big(  \log \frac{\bmeta_i^\star(\btheta_i^\circ)}{\bmeta_{\ell,i}(\btheta_i^\circ)} \Big ) \Big ] \notag\\
&\qquad \stackrel{(a)}{=} \sum_{\ell \in \mathcal{N}_k} a_{\ell k} \e_{\f_{i-1},\theta_i^\circ} \Big[  \log \frac{\bmeta_i^\star(\btheta_i^\circ)}{\bmeta_{\ell,i}(\btheta_i^\circ)} \Big ] \notag\\
&\qquad = \sum_{\ell \in \mathcal{N}_k} a_{\ell k} \e_{\f_{i-1}} \Big[ \e_{\theta_i^\circ | \f_{i-1}}  \Big(  \log \frac{\bmeta_i^\star(\btheta_i^\circ)}{\bmeta_{\ell,i}(\btheta_i^\circ)} \Big ) \Big ] \notag\\
&\qquad = \sum_{\ell \in \mathcal{N}_k} a_{\ell k} \e_{\f_{i-1}} \Big[\sum_{\theta_i \in \Theta} \mathbb{P} (\btheta_i^\circ = \theta_i | \bmf_{i-1}) \log \frac{\bmeta_i^\star(\theta_i)}{\bmeta_{\ell,i}(\theta_i)} \Big ] \notag\\
&\qquad \stackrel{(b)}{=} \sum_{\ell \in \mathcal{N}_k} a_{\ell k} \e_{\f_{i-1}} \Big[\sum_{\theta_i \in \Theta} \bmeta_i^\star(\theta_i) \log \frac{\bmeta_i^\star(\theta_i)}{\bmeta_{\ell,i}(\theta_i)} \Big ] \notag\\
&\qquad = \sum_{\ell \in \mathcal{N}_k} a_{\ell k} \e_{\f_{i-1}} \Big [\dkl (\bmeta_i^\star || \bmeta_{\ell,i} ) \Big ] \notag\\
&\qquad \stackrel{(c)}{\leq} \sum_{\ell \in \mathcal{N}_k} a_{\ell k} \kappa (\bT) \e_{\f_{i-1}} \Big [\dkl (\bmu_{i-1}^\star || \bmu_{\ell,i-1} ) \Big ] \label{eq:second_4_term}
\end{align}
\endgroup
where \( (a) \) follows from the fact that the time-adjusted priors evaluated at the true hypothesis are deterministic given the old history and the true hypothesis, \( (b) \) follows from definition \eqref{eq:ck_cent}, and \( (c) \) follows from the strong data processing inequality.

Combining \eqref{eq:4_term}, \eqref{eq:belief_int_belief_ratio}, \eqref{eq:first_4_term}, and \eqref{eq:second_4_term} yields:
\begin{align}
J_{k,i} &\leq  \e_{\xi_i,\theta_i^\circ} \Big[ \sum_{\ell=1}^K (1  - \gamma a_{\ell k}) \log L_\ell(\bxi_{\ell,i} | \btheta_{i}^\circ)  \Big]  \notag\\
&\qquad + \sum_{\ell \in \mathcal{N}_k} a_{\ell k} \kappa (\bT) \e_{\f_{i-1}} \Big [\dkl (\bmu_{i-1}^\star || \bmu_{\ell,i-1} ) \Big ] \notag\\
&\qquad +   \e_{\f_i} \Big [  \log \sum_{\theta_{i}^\prime \in \Theta} \text{exp}\Big \{{\sum_{\ell \in \mathcal{N}_k} a_{\ell k} \log \bpsi_{\ell,i} (\theta_{i}^\prime) \Big \}} \Big ] \notag\\
&\qquad - \e_{\f_i} \Big [\sum_{\ell \in \mathcal{N}_k} a_{\ell k} \log \frac{\m_i^\star(\bxi_i)}{\m_{\ell,i}(\bxi_{\ell,i})} \Big ]. \label{eq:risk_first_bound}
\end{align}
\begingroup
\allowdisplaybreaks
Furthermore, the normalization term satisfies:
\begin{align}
&\e_{\f_i} \bigg [  \log \sum_{\theta_{i}^\prime \in \Theta} \text{exp} \Big \{{\sum_{\ell \in \mathcal{N}_k} a_{\ell k} \log \bpsi_{\ell,i} (\theta_{i}^\prime) \Big \}} \bigg ] \notag\\
&\qquad=\e_{\f_i} \bigg [  \log \sum_{\theta_{i}^\prime \in \Theta} \prod_{\ell=1}^K \text{exp} \Big \{{ a_{\ell k} \log \bpsi_{\ell,i} (\theta_{i}^\prime) \Big \}} \bigg ] \notag\\
&\qquad = \e_{\f_i} \bigg [  \log \sum_{\theta_{i}^\prime \in \Theta}  \prod_{\ell=1}^K (\bpsi_{\ell,i} (\theta_{i}^\prime))^{a_{\ell k}} \bigg ] \notag\\
&\qquad \stackrel{(a)}{=}  \e_{\f_i} \bigg [ \log \sum_{\theta_{i}^\prime \in \Theta} \Big ( \prod_{\ell=1}^K (L_\ell(\bxi_{\ell,i} | \theta_{i}^\prime))^{\gamma a_{\ell k}} \prod_{\ell=1}^K  (\bmeta_{\ell,i} (\theta_i^\prime))^{a_{\ell k }} \Big ) \bigg] \notag\\ &\qquad \qquad \qquad -  \e_{\f_i} \bigg [ \sum_{\ell \in \mathcal{N}_k} a_{\ell k} \log  \m_{\ell,i}(\bxi_{\ell,i})   \bigg ] \label{eq:normal_term_exp}
\end{align}
\endgroup
where \( (a) \) follows from \eqref{eq:dif_adapt_step} and \eqref{eq:pseudo_marginal}. Therefore, the last two terms in \eqref{eq:risk_first_bound} can be bounded as:
\begingroup
\allowdisplaybreaks
\begin{align}
& \e_{\f_i} \Big [  \log \sum_{\theta_{i}^\prime \in \Theta} \text{exp} \Big \{{\sum_{\ell \in \mathcal{N}_k} a_{\ell k} \log \bpsi_{\ell,i} (\theta_{i}^\prime) \Big \}} \Big ] \notag\\
&\qquad - \e_{\f_i} \Big [\sum_{\ell \in \mathcal{N}_k} a_{\ell k} \log \frac{\m_i^\star(\bxi_i)}{\m_{\ell,i}(\bxi_{\ell,i})} \Big ] \notag\\
&\qquad \stackrel{(a)}{=} \e_{\f_i} \bigg [ \log \sum_{\theta_{i}^\prime \in \Theta} \Big( \prod_{\ell=1}^K (L_\ell(\bxi_{\ell,i} | \theta_{i}^\prime))^{\gamma a_{\ell k}} \prod_{\ell=1}^K  (\bmeta_{\ell,i} (\theta_i^\prime))^{a_{\ell k }} \Big ) \bigg ] \notag\\
&\qquad \qquad  - \e_{\f_i} \bigg [\sum_{\ell \in \mathcal{N}_k} a_{\ell k} \log \m_{\ell,i}(\bxi_{\ell,i})  \bigg ] \notag \\
&\qquad \qquad - \e_{\f_i} \Big [\sum_{\ell \in \mathcal{N}_k} a_{\ell k} \log \frac{\m_i^\star(\bxi_i)}{\m_{\ell,i}(\bxi_{\ell,i})}  \bigg ] \notag \\
&\qquad = \e_{\f_i} \bigg [ \log \sum_{\theta_{i}^\prime \in \Theta} \Big( \prod_{\ell=1}^K (L_\ell(\bxi_{\ell,i} | \theta_{i}^\prime))^{\gamma a_{\ell k}} \prod_{\ell=1}^K  (\bmeta_{\ell,i} (\theta_i^\prime))^{a_{\ell k }} \Big ) \bigg ] \notag\\
&\qquad \qquad - \e_{\f_i} \bigg [ \log  \m_i^\star(\bxi_i)  \bigg ] \notag \\
&\qquad \stackrel{(b)}{\leq} \e_{\f_i} \bigg [ \log \sum_{\theta_{i}^\prime \in \Theta} \Big ( \prod_{\ell=1}^K (L_\ell(\bxi_{\ell,i} | \theta_{i}^\prime))^{\gamma a_{\ell k}} \sum_{\ell=1}^K a_{\ell k} \bmeta_{\ell,i} (\theta_i^\prime) \Big ) \bigg ]\notag \\
&\qquad \qquad - \e_{\f_i} \bigg [ \log  \m_i^\star(\bxi_i)  \bigg ]  \notag\\
&\qquad  = \e_{\f_i} \bigg [ \log \sum_{\theta_{i}^\prime \in \Theta} \Big ( \prod_{\ell=1}^K (L_\ell(\bxi_{\ell,i} | \theta_{i}^\prime))^{\gamma a_{\ell k}} \sum_{\ell=1}^K a_{\ell k} \bmeta_{\ell,i} (\theta_i^\prime)\Big ) \bigg ] \notag\\
&\qquad \quad - \e_{\f_i} \bigg [ \log \sum_{\theta_{i}^\prime \in \Theta} \Big ( \prod_{\ell=1}^K L_\ell(\bxi_{\ell,i} | \theta_{i}^\prime) \sum_{\ell=1}^K a_{\ell k} \bmeta_{\ell,i} (\theta_i^\prime) \Big ) \bigg] \notag\\
&\qquad \quad + \e_{\f_i} \bigg [ \log \sum_{\theta_{i}^\prime \in \Theta} \Big ( \prod_{\ell=1}^K L_\ell(\bxi_{\ell,i} | \theta_{i}^\prime) \sum_{\ell=1}^K a_{\ell k} \bmeta_{\ell,i} (\theta_i^\prime) \Big ) \bigg ] \notag\\
&\qquad \quad - \e_{\f_i} \Big [\log  \m_i^\star(\bxi_i) \Big ]  \notag\\
&\qquad  \stackrel{(c)}{\leq} \e_{\f_i} \bigg [ \log \sum_{\theta_{i}^\prime \in \Theta} \Big ( \prod_{\ell=1}^K (L_\ell(\bxi_{\ell,i} | \theta_{i}^\prime))^{\gamma a_{\ell k}} \sum_{\ell=1}^K a_{\ell k} \bmeta_{\ell,i} (\theta_i^\prime)\Big ) \bigg ] \notag\\
&\qquad \quad - \e_{\f_i} \bigg [ \log \sum_{\theta_{i}^\prime \in \Theta} \Big ( \prod_{\ell=1}^K L_\ell(\bxi_{\ell,i} | \theta_{i}^\prime) \sum_{\ell=1}^K a_{\ell k} \bmeta_{\ell,i} (\theta_i^\prime) \Big ) \bigg ] \label{eq:nus_before_defined}
\end{align}
\endgroup
where \( (a) \) follows from inserting \eqref{eq:normal_term_exp}, \( (b) \) follows from the weighted arithmetic-geometric mean inequality, \( (c) \) follows from the fact that:
\begin{align}
 &- \e_{\f_i} \Bigg [\log \frac{ \m_i^\star(\bxi_i)}{\sum_{\theta_{i}^\prime \in \Theta} \Big [ \prod_{\ell=1}^K (L_\ell(\bxi_{\ell,i} | \theta_{i}^\prime))  \sum_{\ell=1}^K a_{\ell k} \bmeta_{\ell,i} (\theta_i^\prime)\Big ] }  \Bigg ] \notag\\
 &\qquad \qquad=-\e_{\f_{i-1}} \e_{\xi_i| \f_{i-1}} \Bigg [\log \frac{ \m_i^\star(\bxi_i)}{\m_i^\dagger(\bxi_i)}  \Bigg ] \notag\\
 &\qquad \qquad = - \e_{\f_{i-1}} \dkl(\m_i^\star(\bxi_i) || \m_i^\dagger(\bxi_i)) \notag\\
 &\qquad \qquad \leq 0
\end{align}
where we defined the probability density function:
\begin{align}
\m_i^\dagger(\bxi_i) \triangleq \sum_{\theta_{i}^\prime \in \Theta} \Big [ \prod_{\ell=1}^K (L_\ell(\bxi_{\ell,i} | \theta_{i}^\prime))  \sum_{\ell=1}^K a_{\ell k} \bmeta_{\ell,i} (\theta_i^\prime)\Big ] 
\end{align}
which can be verified to be a density as follows:
\begin{align}
    \int_{\xi_i} \m_i^\dagger(\xi_i) d\xi_i &= \int_{\xi_i} \sum_{\theta_{i}^\prime \in \Theta} \Big [ \prod_{\ell=1}^K (L_\ell(\xi_{\ell,i} | \theta_{i}^\prime))  \sum_{\ell=1}^K a_{\ell k} \bmeta_{\ell,i} (\theta_i^\prime)\Big ] d\xi_i \notag \\
    &=  \sum_{\theta_{i}^\prime \in \Theta} \Big [ \underbrace{\int_{\xi_i} \prod_{\ell=1}^K (L_\ell(\xi_{\ell,i} | \theta_{i}^\prime)) d\xi_i}_{1} \sum_{\ell=1}^K a_{\ell k} \bmeta_{\ell,i} (\theta_i^\prime)\Big ]  \notag \\
    &= \sum_{\theta_{i}^\prime \in \Theta} \Big [  \sum_{\ell=1}^K a_{\ell k} \bmeta_{\ell,i} (\theta_i^\prime)\Big ] \notag \\
    &= \sum_{\ell=1}^K a_{\ell k}  \Big [ \sum_{\theta_{i}^\prime \in \Theta}    \bmeta_{\ell,i} (\theta_i^\prime)\Big ] \notag \\
    &=1
\end{align}
Let us introduce the following vectors of dimension \( H \) over all hypotheses for notational convenience:
\begin{align}\label{eq:vi_plus_def}
 \bvart_i^+ \triangleq \mcl \Bigg \{ \log \Big (\prod_{\ell=1}^K (L_\ell(\bxi_{\ell,i} | \theta_i)^{\gamma a_{\ell k}} \sum_{\ell=1}^K a_{\ell k} \bmeta_{\ell,i} (\theta_i) \Big ) \Bigg \}_{\theta_i=0}^{H-1}
\end{align}
and
\begin{align}\label{eq:vi_neg_def}
 \bvart_i^- \triangleq \mcl \Bigg \{ \log \Big ( \prod_{\ell=1}^K L_\ell(\bxi_{\ell,i} | \theta_i) \sum_{\ell=1}^K a_{\ell k} \bmeta_{\ell,i} (\theta_i) \Big ) \Bigg \}_{\theta_i=0}^{H-1}.
\end{align}
Then, the bound in \eqref{eq:nus_before_defined} can be expressed as
\begin{align}\label{eq:vi_difference_ours}
&\e_{\f_i} \Big [ \log \sum_{\theta_{i}^\prime \in \Theta} \Big [ \prod_{\ell=1}^K (L_\ell(\bxi_{\ell,i} | \theta_{i}^\prime))^{\gamma a_{\ell k}} \sum_{\ell=1}^K a_{\ell k} \bmeta_{\ell,i} (\theta_i^\prime)\Big ] \notag\\
&\quad-\e_{\f_i} \Big [ \log \sum_{\theta_{i}^\prime \in \Theta} \Big [ \prod_{\ell=1}^K L_\ell(\bxi_{\ell,i} | \theta_{i}^\prime) \sum_{\ell=1}^K a_{\ell k} \bmeta_{\ell,i} (\theta_i^\prime) \Big ] \notag\\
&=\!\!\e_{\f_i} \Big [ \log \sum_{\theta_{i}^\prime \in \Theta}\!\! \exp \{ \bvart_i^+(\theta_i^\prime) \} \Big ] -\e_{\f_i} \Big [ \log \sum_{\theta_{i}^\prime \in \Theta}\!\! \exp \{ \bvart_i^-(\theta_i^\prime) \} \Big ] .
\end{align}
Note also that the difference of the vectors satisfy
\begin{align}\label{eq:upsilon_difference}
&  \bvart_i^+  -  \bvart_i^- = \mcl \Big \{  \sum_{\ell=1}^K (  \gamma a_{\ell k}-1) \log L_\ell(\bxi_{\ell,i} | \theta_i)  \Big\}_{\theta_i=0}^{H-1}.
\end{align}
It is useful to introduce the LogSumExp function \( f \):
\begin{align}\label{eq:f_logsumexp_def}
    f(\upsilon) \triangleq \log \sum_{\theta \in \Theta} \exp \{\upsilon (\theta) \},
\end{align}
whose gradient is given by
\begin{align}\label{eq:logsumexp_grad_def}
    \nabla_{\upsilon} f ( \upsilon) \triangleq \mcl \Big \{ \frac{\partial f(\upsilon)}{\partial  \upsilon (\theta)} \Big\}_{\theta \in \Theta}=\mcl \Big \{ \frac{\exp \{\upsilon(\theta)\}}{\sum_{\theta^\prime} \exp \{\upsilon(\theta^\prime)\}} \Big \}_{\theta \in \Theta}.
\end{align}
By applying the mean-value theorem (MVT) to function $f$ and taking the expectation we get
\begin{align}
&\e_{\f_i} \Big [ \log \sum_{\theta_{i}^\prime \in \Theta} \exp \{ \bvart_i^+(\theta_i^\prime) \}\Big ] -\e_{\f_i} \Big [ \log \sum_{\theta_{i}^\prime \in \Theta} \exp \{ \bvart_i^-(\theta_i^\prime) \} \Big ] \notag \\
&\quad \stackrel{\eqref{eq:f_logsumexp_def}}{=}\e_{\f_i} \Big [ f( \bvart_i^+) \Big ] -\e_{\f_i} \Big [ f( \bvart_i^-) \Big ] \notag \\
&\quad \stackrel{\text{(MVT)}}{=} \e_{\f_i} \bigg [ (\nabla_{\upsilon} f ( \bvart_i))^{\T}   \cdot (\bvart_i^+ - \bvart_i^-)\bigg ] \notag \\
&\quad \stackrel{\eqref{eq:upsilon_difference},\eqref{eq:logsumexp_grad_def}}{=}\e_{\f_i} \bigg [ \mcl \Big \{ \frac{\exp \{ \bvart_i(\theta_i) \}}{\sum_{\theta_{i}^\prime} \exp \{ \bvart_i(\theta_{i}^\prime) \} } \Big \}^{\T}  \notag \\& \quad \qquad \quad \cdot \mcl \Big \{  \sum_{\ell=1}^K (  \gamma a_{\ell k}-1) \log L_\ell(\bxi_{\ell,i} | \theta_i)  \Big\}_{\theta_i=0}^{H-1} \bigg ] \label{eq:first_mean_val}
\end{align}
for some  \( \bvart_i \) lying on the line segment between \( \bvart_i^-  \) and \( \bvart_i^+  \). The absolute value of \eqref{eq:first_mean_val} can be bounded for any time instant \( i \):
\begin{align}
 \Bigg | \e_{\f_i}& \Bigg [ \mcl \Big \{ \frac{\exp \{ \bvart_i(\theta_i) \}}{\sum_{\theta_{i}^\prime} \exp \{ \bvart_i(\theta_{i}^\prime) \} } \Big \}^{\T} \notag \\ &\qquad \qquad\cdot \mcl \Big \{  \sum_{\ell=1}^K (  \gamma a_{\ell k}-1) \log L_\ell(\bxi_{\ell,i} | \theta_i)  \Big\}_{\theta_i=0}^{H-1} \Bigg ] \Bigg | \notag\\
 &\stackrel{(a)}{\leq}\e_{\f_i} \Bigg | \mcl \Big \{ \frac{\exp \{ \bvart_i(\theta_i) \}}{\sum_{\theta_{i}^\prime} \exp \{ \bvart_i(\theta_i^\prime)\} } \Big \}^{\T}  \notag \\ &\qquad \qquad \cdot \mcl \Big \{  \sum_{\ell=1}^K (  \gamma a_{\ell k}-1) \log L_\ell(\bxi_{\ell,i} | \theta_i)  \Big\}_{\theta_i=0}^{H-1} \Bigg | \notag \\
 &\stackrel{(b)}{\leq} \e_{\xi_i} \Bigg \| \mcl \Big \{  \sum_{\ell=1}^K (  \gamma a_{\ell k}-1) \log L_\ell(\bxi_{\ell,i} | \theta_i)  \Big\}_{\theta_i=0}^{H-1}  \Bigg \|_{\infty} \notag\\
&\stackrel{(c)}{\leq} \sqrt{K} \gamma \lambda \e_{\xi_i} \Big \| \mcl \Big \{\log L_\ell(\bxi_{\ell,i} | \theta_i)  \Big\}_{\theta_i=0}^{H-1} \Big \|_{\infty} \notag \\
&\stackrel{(d)}{\leq} \sqrt{K} \gamma \lambda C_L \label{eq:normalization_term_bound_small}
\end{align}
where \( (a) \) follows from Jensen's inequality, \( (b) \) follows from Hölder's inequality and the fact that
\begin{align}
    \Big \| \mcl \Big \{ \frac{\exp \{ \bvart_i(\theta_i) \}}{\sum_{\theta_{i}^\prime} \exp \{ \bvart_i(\theta_{i}^\prime) \} } \Big \} \Big \|_1 = 1,
\end{align}
the last step \( (d) \) follows from Assumption~\ref{as:likelihood_functions}, and the step \( (c) \) follows from
\begin{align}
\sum_{\ell=1}^K  |  \gamma a_{\ell k}-1 | &\leq  \gamma   \Big \|A - \frac{1}{\gamma} \mathds{1}_{K}\mathds{1}_{K}^{\T} \Big  \|_{1}  \notag \\
&\leq  \gamma \sqrt{K}  \Big \|A - \frac{1}{\gamma} \mathds{1}_{K}\mathds{1}_{K}^{\T} \Big  \|_{2} \notag \\
&\stackrel{(e)}{=} \gamma \sqrt{K} \lambda. \label{eq:abs_val_of_graph_sum}
\end{align}
Step \( (e) \) follows from the fact that for symmetric matrices, their \( \ell_2 \)-induced norm is equal to the spectral radius. Here, we also use the fact that since \( A \) is primitive and doubly-stochastic, it has a unique eigenvalue at 1, and all other eigenvalues lie inside the unit circle. Moreover, it is simultaneously diagonalizable (i.e., they have the same eigenvectors) with the all ones matrix \( \mathds{1}_{K}\mathds{1}_{K}^{\T} \) \cite[Chapter 8]{horn2012matrix}. As a result, the corresponding spectral radius becomes the maximum absolute difference between the eigenvalues of \( A \) and  \( \frac{1}{\gamma} \mathds{1}_{K}\mathds{1}_{K}^{\T} \), i.e., \(\lambda\). Next, we combine \eqref{eq:risk_first_bound}, \eqref{eq:nus_before_defined}, and \eqref{eq:normalization_term_bound_small} to obtain the bound on the risk function:
\begin{align}
J_{k,i}  &\leq \sum_{\ell=1}^K (1  - \gamma a_{\ell k}) \e_{\xi_{\ell,i},\theta_i^\circ} \Big[  \log L_\ell(\bxi_{\ell,i} | \btheta_{i}^\circ)  \Big]  \notag\\
&\quad + \kappa (\bT) \sum_{\ell \in \mathcal{N}_k} a_{\ell k}  \underbrace{\e_{\f_{i-1}} \Big [\dkl (\bmu_{i-1}^\star || \bmu_{\ell,i-1} ) \Big ]}_{J_{\ell,i-1}} \notag\\
&\qquad + \sqrt{K} \gamma \lambda C_L. \label{eq:combine_risk_mean}
\end{align}
Iterating this bound over time for all agents results in
\begin{align}
J_{k,i} &\leq \sum_{\ell=1}^K (1  - \gamma a_{\ell k}) \e_{\xi_{\ell,i},\theta_i^\circ} \Big[  \log L_\ell(\bxi_{\ell,i} | \btheta_{i}^\circ)  \Big]  \notag\\
&\quad + \kappa (\bT) \sum_{\ell \in \mathcal{N}_k} a_{\ell k} \sum_{m \in \mathcal{N}_\ell} \Big ( (1  - \gamma a_{m \ell}) \notag \\ &\qquad \qquad \quad  \times \e_{\xi_{m,i-1},\theta_{i-1}^\circ} \Big[  \log L_m(\bxi_{m,i-1} | \btheta_{i-1}^\circ)  \Big] \Big ) \notag\\
&\quad + \kappa (\bT) \sum_{\ell \in \mathcal{N}_k} a_{\ell k} \kappa (\bT) \sum_{m \in \mathcal{N}_\ell} a_{m \ell} J_{m,i-2} \notag\\
&\quad + \kappa (\bT) \sum_{\ell \in \mathcal{N}_k} a_{\ell k}  \sqrt{K} \gamma \lambda C_L \notag\\
&\qquad + \sqrt{K} \gamma \lambda C_L \notag \\
&\stackrel{(a)}{\leq} \sum_{\ell=1}^K (1  - \gamma a_{\ell k}) \e_{\xi_{\ell,i},\theta_i^\circ} \Big[  \log L_\ell(\bxi_{\ell,i} | \btheta_{i}^\circ)  \Big]  \notag\\
&\qquad \quad + \kappa (\bT) \sum_{m=1}^K \Big ( (1  - \gamma [A^{2}]_{m k}) \notag \\ &\qquad \qquad \quad  \times \e_{\xi_{m,i-1},\theta_{i-1}^\circ} \Big[  \log L_m(\bxi_{m,i-1} | \btheta_{i-1}^\circ)  \Big] \Big ) \notag\\
&\qquad \quad  + \kappa (\bT)^2  \sum_{m=1}^K [A^{2}]_{m k} J_{m,i-2} \notag\\
&\qquad \quad  + (1+ \kappa (\bT) ) \sqrt{K} \gamma \lambda C_L \notag \\
&\leq \sum_{j=0}^{i-1} (\kappa (\bT))^{j}  \sum_{\ell=1}^K (1-  \gamma [A^{j+1}]_{\ell k})\notag \\&\qquad \qquad \qquad \quad \cdot \e_{\xi_{\ell,i-j},\theta_{i-j}^\circ} \Big[  \log L_\ell(\bxi_{\ell,i-j} | \btheta_{i-j}^\circ)  \Big] \notag\\
& \qquad \quad+ \sum_{j=0}^{i-1} (\kappa (\bT))^{j} \sqrt{K} \gamma \lambda C_L \notag \\
& \qquad \qquad + (\kappa (\bT))^{i} \sum_{\ell =1}^K [A^{i}]_{\ell k} J_{\ell,0} \label{eq:bound_recursion_temp},
\end{align}
where \( (a) \) follows from the fact that
\begin{align}
    &\sum_{\ell \in \mathcal{N}_k}\!\!\! a_{\ell k} \!\!\!\sum_{m \in \mathcal{N}_\ell}\!\!\! \Big ( (1  - \gamma a_{m \ell})  \e_{\xi_{m,i-1},\theta_{i-1}^\circ} \Big[  \log L_m(\bxi_{m,i-1} | \btheta_{i-1}^\circ)  \Big] \Big ) \notag \\
    &=\!\!\!\sum_{m=1 }^K\e_{\xi_{m,i-1},\theta_{i-1}^\circ} \Big[  \log L_m(\bxi_{m,i-1} | \btheta_{i-1}^\circ)  \Big]  \!\sum_{\ell=1}^K \! a_{\ell k}   (1  - \gamma a_{m \ell})  \notag \\
    &=\!\!\!\sum_{m=1 }^K(1  - \gamma [A^{2}]_{m k})\e_{\xi_{m,i-1},\theta_{i-1}^\circ} \Big[  \log L_m(\bxi_{m,i-1} | \btheta_{i-1}^\circ)  \Big].  
\end{align}
The first summation in the bound \eqref{eq:bound_recursion_temp} can be further bounded by the inequality
\begin{align}
\Bigg | &\sum_{\ell = 1}^K  (1-  \gamma [A^{j+1}]_{\ell k} ) \e_{\xi_{\ell,i-j},\theta_{i-j}^\circ} \Big [  \log L_\ell (\bxi_{\ell,i-j} | \btheta_{i-j}^\circ)  \Big ] \Bigg | \notag\\
&\stackrel{(a)}{\leq} \sum_{\ell = 1}^K \Big | 1-  \gamma [A^{j+1}]_{\ell k} \Big | \times \Big |\e_{\xi_{\ell,i-j},\theta_{i-j}^\circ} \Big [  \log L_\ell (\bxi_{\ell,i-j} | \btheta_{i-j}^\circ)  \Big ] \Big | \notag\\
&\stackrel{(b)}{\leq} \sum_{\ell = 1}^K \Big | 1-  \gamma [A^{j+1}]_{\ell k} \Big | C_L  \notag\\
&\stackrel{(c)}{\leq} \sqrt{K} \gamma \lambda_j C_L \label{eq:abs_val_of_graph_sum_w_norm}
\end{align}
where \( \lambda_j > \max \{ |1-\frac{K}{\gamma}|, \rho_2^{j+1} \} \)  is a positive constant, \( (a) \) follows from Jensen's inequality, \( (b) \) follows from Assumption~\ref{as:likelihood_functions}, and \( (c) \) follows from \eqref{eq:abs_val_of_graph_sum} applied to \(A^{j+1}\) instead of \( A \). Inserting \eqref{eq:abs_val_of_graph_sum_w_norm} into \eqref{eq:bound_recursion_temp} we can bound the risk function as:
\begin{align}
J_{k,i} &\leq \sum_{j=0}^{i-1} (\kappa (\bT))^{j}   \sqrt{K} \gamma  \lambda_j C_L  + \sum_{j=0}^{i-1} (\kappa (\bT))^{j} \sqrt{K} \gamma  \lambda C_L \notag\\
&\quad  + (\kappa (\bT))^{i} \sum_{\ell =1}^K [A^{i}]_{\ell k} J_{\ell,0} \notag \\
&\stackrel{(a)}{\leq}  2  \sum_{j=0}^{i-1} (\kappa (\bT))^{j} \sqrt{K} \gamma \lambda C_L + (\kappa (\bT))^{i} \sum_{\ell =1}^K [A^{i}]_{\ell k} J_{\ell,0} \notag \\
&\leq   2   \sum_{j=0}^{i-1} (\kappa (\bT))^{j} \sqrt{K}  \gamma \lambda C_L+ (\kappa (\bT))^{i} \sum_{\ell =1}^K [A^{i}]_{\ell k} J_{\ell,0} \notag \\
&=  2  \frac{1-(\kappa (\bT))^{i}}{1-\kappa (\bT)}  \sqrt{K} \gamma  \lambda C_L + (\kappa (\bT))^{i} \sum_{\ell =1}^K [A^{i}]_{\ell k} J_{\ell,0} 
\end{align}
where \( (a) \) follows from \( \lambda_j \leq \lambda \) for all \( j \). For \( \kappa (\bT) < 1 \), the risk function is asymptotically bounded:
\begin{align}
\limsup_{i \rightarrow \infty} J_{k,i} \leq   \frac{2 \sqrt{K} \gamma  \lambda C_L}{1-\kappa (\bT)}  .
\end{align}
This also means that
\begin{align}
     \widetilde{J}_{k,i} &\stackrel{(a)}{\leq}  \kappa (\bT) J_{k,i-1} \notag \\
    \Longrightarrow \limsup_{i \rightarrow \infty} \widetilde{J}_{k,i} &\leq \limsup_{i \rightarrow \infty} \kappa (\bT) J_{k,i-1} \notag \\
      &\leq \kappa (\bT) \frac{2 \sqrt{K} \gamma  \lambda C_L}{1-\kappa (\bT)}  
\end{align}
where \( (a) \) follows from the strong-data processing inequality, for any time instant \( i \).

\section{An Auxiliary Lemma}\label{appendix:matrix}



We present a general matrix result in the following lemma.

\begin{lemma}[{\bf Lower dimensional representation}]\label{lemma:matrix}
Consider the \(K\times K\) doubly stochastic and symmetric combination matrix \(A\). Let \(r=\rank(A)\). Then, for any positive-definite diagonal covariance matrix \(\Sigma\), there  exists an \( r\times K \) matrix \( Q \) such that:
\begin{itemize}
    \item \( A^{\T} \Sigma A = Q^{\T} Q \),
    \item for any vector \( v \in \mathbb{R}^K \), there exists a unique vector \( v_{Q} \in \mathbb{R}^r \) that satisfies:
    \begin{align}
        A^{\T} v = Q^{\T} v_{Q}.
    \end{align}
    In other words, \( Q \) has full row rank and 
    \begin{align}
        v_{Q} = (Q^{\T} )^\dagger A^{\T} v,
    \end{align}
    where \( (Q^{\T} )^\dagger \) is the pseudo-inverse matrix 
    \begin{align}
     (Q^{\T} )^\dagger  \triangleq (Q Q^{\T})^{-1} Q.
    \end{align}
\end{itemize}
\end{lemma}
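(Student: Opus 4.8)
The plan is to exploit the symmetry of $A$, which turns $A^{\T}\Sigma A = A\Sigma A$ into a symmetric positive semidefinite matrix, and then to read off $Q$ directly from its reduced spectral decomposition. First I would record the rank bookkeeping that fixes the dimension $r$: since $\Sigma$ is diagonal with strictly positive entries it is invertible, so writing $\Sigma = \Sigma^{1/2}\Sigma^{1/2}$ gives $A\Sigma A = (\Sigma^{1/2}A)^{\T}(\Sigma^{1/2}A)$; as $\Sigma^{1/2}$ is invertible, $\rank(\Sigma^{1/2}A)=\rank(A)=r$, and therefore $\rank(A\Sigma A)=r$ as well.

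For the first bullet, I would take the reduced eigendecomposition of the rank-$r$ symmetric PSD matrix $A\Sigma A = V D V^{\T}$, where $D$ is $r\times r$ diagonal with the $r$ strictly positive eigenvalues and $V$ is $K\times r$ with orthonormal columns ($V^{\T}V = I_r$). Setting
\begin{align}
Q \triangleq D^{1/2}V^{\T}
\end{align}
produces an $r\times K$ matrix with $Q^{\T}Q = V D^{1/2}D^{1/2}V^{\T} = V D V^{\T} = A\Sigma A$, the desired factorization. Moreover $QQ^{\T} = D^{1/2}V^{\T}V D^{1/2} = D$ is invertible, so $Q$ has full row rank $r$ and $(QQ^{\T})^{-1}Q$ is a well-defined left inverse of $Q^{\T}$.

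For the second bullet, the crux is the range identity $\range(Q^{\T})=\range(A)$, which I would argue in two steps. On one hand, for any matrix $B$ one has $\range(BB^{\T})=\range(B)$ (containment is clear, and $\mynull(BB^{\T})=\mynull(B^{\T})$ forces equal ranks), so with $B=Q^{\T}$ we get $\range(Q^{\T})=\range(Q^{\T}Q)=\range(A\Sigma A)$. On the other hand, $\range(A\Sigma A)\subseteq\range(A)$ trivially, and since both spaces have dimension $r$ they coincide. Combining these and using $A^{\T}=A$ yields $\range(Q^{\T})=\range(A)=\range(A^{\T})$. Hence for every $v\in\mathbb{R}^K$ the vector $A^{\T}v$ lies in $\range(Q^{\T})$, giving existence of some $v_Q$ with $A^{\T}v=Q^{\T}v_Q$; uniqueness follows because $Q^{\T}$ has full column rank; and applying $(Q^{\T})^\dagger=(QQ^{\T})^{-1}Q$ to both sides produces the explicit formula $v_Q=(Q^{\T})^\dagger A^{\T}v$.

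The only delicate point is the range equality, specifically the step $\range(A\Sigma A)=\range(A)$: this is where positive definiteness of $\Sigma$ is indispensable, since an only semidefinite middle factor could annihilate directions in $\range(A)$, lower the rank, and break the existence claim. Everything else reduces to routine linear algebra once the reduced eigendecomposition is in hand.
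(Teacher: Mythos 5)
Your proof is correct and takes essentially the same route as the paper's: the same rank bookkeeping via $\Sigma^{1/2}$, the same construction $Q = D^{1/2}V^{\T}$ from the reduced eigendecomposition of $A^{\T}\Sigma A$, and an equivalent subspace identity --- where you chain ranges with a dimension count, the paper chains null spaces, $\mynull(A)=\mynull(\Sigma^{1/2}A)=\mynull(A^{\T}\Sigma A)=\mynull(Q)$, which is the same argument in dual form. Your explicit verification of uniqueness via the full column rank of $Q^{\T}$ is a small point the paper leaves implicit.
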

\begin{proof}
Observe that 
\begin{align}
    \rank (A) &\stackrel{(a)}{=} \rank (\Sigma^{1/2} A) \notag\\
    &= \rank \big ((\Sigma^{1/2} A)^{\T} \Sigma^{1/2} A \big) \notag \\
    &=\rank (A^{\T} \Sigma A) = r
\end{align}
where \( (a) \) follows from the fact that \( \Sigma \) is positive-definite and $\Sigma^{1/2}$ is its square-root. Moreover, since \( A^{\T} \Sigma A \) is a real symmetric matrix, it can be decomposed as
\begin{align}\label{eq:ata_decomposition}
    A^{\T} \Sigma A = U \Lambda U^{\T}
\end{align}
where \( U \) is \( K\times r\) with orthonormal columns and \( \Lambda \) is \( r\times r \) with the positive eigenvalues of \( A^{\T} \Sigma A \).  Let \( Q=\Lambda^{1/2} U^{\T} \), which has full row rank. Then, \( A^{\T}\Sigma A = Q^{\T}Q \). Note that \( Q \) is not unique since we can modify it by any orthonormal transformation. It is also obvious that, in terms of null ($\mynull$) and range ($\range$) spaces,  
\begin{align}
\mynull(A)&=\mynull (\Sigma^{1/2} A) \notag \\
&=\mynull(A^{\T}\Sigma A) \notag \\
&=\mynull(Q^{\T} Q) \notag \\
&=\mynull(Q)
\end{align}
and, hence, \( \range(A^{T})=\range(Q^{\T})\). It follows that for any vector \( v \in \mathbb{R}^K \), there exists a vector \( v_{Q} \in \mathbb{R}^r \) such that
    \begin{align}\label{eq:a_q_equivalence}
        A^{\T} v = Q^{\T} v_{Q}.
    \end{align}
Multiplying both sides of \eqref{eq:a_q_equivalence} from the left by the pseudo-inverse of $Q^{\T}$ gives
\begin{align}
    v_{Q} = (Q^{\T} )^\dagger A^{\T}  v.
\end{align}
\end{proof}

\section{Error Recursion for Diffusion}\label{appendix:diffusion_recursion}

In light of Lemma \ref{lemma:matrix} from Appendix~\ref{appendix:matrix}, there exist vectors in \( \mathbb{R}^{r}\) such that:
\begin{align}\label{eq:w_wq_transform}
    \w_{i} &=  A^{\T} ( \bm{\nu}_{i} +  \bm{\chi}_{i} ) \notag \\
    &=Q^{\T} (\bm{\nu}_{Q,i} +  \bm{\chi}_{Q,i}) \notag \\
    &=Q^{\T} \w_{Q,i}
\end{align}
where
\begin{align}
    \w_{Q,i} &\triangleq (Q^{\T} )^\dagger  \w_{i} \label{eq:wqi_def}\\
    \bm{\chi}_{Q,i}&\triangleq (Q^{\T} )^\dagger A^{\T} \bm{\chi}_{i} \label{eq:chiq_def} \\
    \bm{\nu}_{Q,i} &\triangleq (Q^{\T} )^\dagger A^{\T} \bm{\nu}_{i} \label{eq:vqi_def}.
\end{align}
Then, it follows from \eqref{eq:v_gaussian} and \eqref{eq:vqi_def} that 
\begin{align}\label{eq:nu_q_i_gaussian}
   \bm{\nu}_{Q,i} \sim \mathcal{G} \Big (  (Q^{\T} )^\dagger A^{\T} \bm{\beta}^{(\theta_{i}^\circ)} , I_{r\times r} \Big ).
\end{align}
where the covariance term follows from Lemma~\ref{lemma:matrix}:
\begin{equation}
 (Q^{\T} )^\dagger A^{\T} \Sigma A Q^\dagger \stackrel{}{=} (Q^{\T} )^\dagger( Q^{\T} Q) Q^\dagger = I_{r\times r}.
\end{equation}
Moreover, from the definition \eqref{eq:chi_def} of \( \bm{\chi}_{i} \) in terms of \( \w_{i-1} \) and \eqref{eq:w_wq_transform} we can alternatively write
\begin{align}\label{eq:xi_wqi1_relation}
    \bm{\chi}_{i} = \text{col} \Bigg \{  \log \frac{\bT(1|0)+\bT(1|1) \text{exp}\{[Q^{\T}\w_{Q,i-1}]_{\ell}\}}{\bT(0|0)+\bT(0|1)\text{exp}\{[Q^{\T}\w_{Q,i-1}]_{\ell}\} } \Bigg \}_{\ell=1}^K.
\end{align}
Now note from \eqref{eq:wqi_def}--\eqref{eq:vqi_def} that
\begin{align}
    \w_{Q,i} &=  \bm{\nu}_{Q,i} +  \bm{\chi}_{Q,i} \notag \\
    &\stackrel{\eqref{eq:chiq_def}}{=} \bm{\nu}_{Q,i} + (Q^{\T} )^\dagger A^{\T} \bm{\chi}_{i}
\end{align}
which represents a transformation from \(\w_{Q,i-1}  \) to \( \w_{Q,i} \) directly in light of \eqref{eq:xi_wqi1_relation}. This indicates that we can work with \( \w_{Q,i} \) over time and transform it into the original vector \( \w_i \) via \eqref{eq:w_wq_transform} without any loss of information. Repeating arguments similar to \eqref{eq:joint_time_density} and \eqref{eq:si_definition} and replacing \( \w_i \) by \( \w_{Q,i} \) we obtain:
\begin{align}\label{eq:si_q_definition}
    &S_{Q,i}^{(\theta)} (w_Q,w_Q^{\prime})dW_Q \notag \\
    &\qquad \triangleq \mathbb{P}(\w_{Q,i} \in (w_Q,w_Q+dw_Q)|\btheta_{i}^\circ =\theta , \w_{Q,i-1} = w_Q^{\prime}) \notag \\
    &S_{Q,i}^{(\theta)} (w_Q,w_Q^{\prime}) \stackrel{(a)}{=} \frac{1}{(2\pi)^{r/2}} \exp{\Big \{ -\frac{1}{2} \|w_Q - \rho_Q^{(\theta)} (w_Q^{\prime}) \|^2 \Big \}}
\end{align}
where \( (a) \) follows from \eqref{eq:nu_q_i_gaussian} and
\begin{align}
    \rho_Q^{(\theta)}  (w_Q^{\prime}) \triangleq (Q^{\T} )^\dagger A^{\T} \bm{\beta}^{(\theta_{i}^\circ)} +  \bm{\chi}_{Q,i} \Big |_{\btheta_{i}^\circ = \theta, \w_{Q,i-1} = w_Q^{\prime}}.
\end{align}
Observe that the density \eqref{eq:si_q_definition} of \( \w_{Q,i}\) always exists in \( \mathbb{R}^r \), even if \( \w_{i} \) does not admit a density in \( \mathbb{R}^K \). Furthermore, the effective temporal recursion becomes
\begin{align}\label{eq:fqialpha_recursion}
   f_{Q,i}(\theta,w_Q)\!\!=\!\!\sum_{\theta^\prime} \bT(\theta| \theta^\prime)\!\! \Bigg [\!\! \int_{w_Q^{\prime}}  \!\!\!\!\! S_{Q,i}^{(\theta)} (w_Q,w_Q^{\prime}) f_{Q,i-1} (\theta^\prime,w_Q^{\prime})dW_Q^\prime \!\Bigg ]
\end{align}
Using this information along with \eqref{eq:w_wq_transform}, which allows us to recover the agent-specific log-belief ratio \( \w_{k,i} \) from the low-dimensional representation \( \w_{Q,i} \), namely,
\begin{align}
    \w_{k,i} = q_{k}^{\T} \w_{Q,i},
\end{align}
where \( q_{k}\) is the \( k \)th column of \( Q \), we arrive at the following probability of error calculation for the diffusion HMM strategy:
\begin{align}
p_{k,i} =     \underset{q_{k}^{\T} w_{Q} \leq 0}{\int ... \int}  f_{Q,i} (1,w_{Q})dW_{Q} + \underset{q_{k}^{\T} w_{Q} > 0}{\int ... \int} f_{Q,i} (0,w_{Q})dW_{Q}.
\end{align}

\section{Proof of Theorem \ref{theorem:asymptotic_probability}}\label{appendix:asymptotic_probability}
First, observe from \eqref{eq:chi_def} that for a given transition model \( 0 < \bT(\theta |\theta^\prime) < 1\) \( \forall \theta,\theta^\prime \in \Theta \), $\bm{\chi}_{i}$ is bounded in norm. Moreover, the Gaussian mean $\bm{\beta}^{(\theta_{i}^\circ)}$ is also bounded---see \eqref{eq:v_gaussian_mean}. These in turn imply:
\begin{equation}
  \| \rho^{(\theta)} (w^{\prime}) \|_{\Sigma^{-1}} \leq  \widetilde{\rho}
\end{equation}
for some constant $\widetilde{\rho}>0$.
Let us define the spherical region \( \mathcal{R} \triangleq \{w: \| w\|_{\Sigma^{-1}} \leq \widetilde{\rho} \}\). For any vector \( w \) that satisfies  \(  \| w\|_{\Sigma^{-1}}  \geq \widetilde{\rho}  \), the projection to this region is given by \( \widetilde{\rho} \frac{w}{\|w\|_{\Sigma^{-1}}} \), which can be verified by following the same steps for finding a vector's projection into \( \ell_2\)-ball. This implies that for any \( w \) outside the region \( \mathcal{R}\), and for any \( w^{\prime} \):
\begin{align}
    \min_{w_p \in \mathcal{R}} \|w-w_p \|_{\Sigma^{-1}} &= \Big \| w - \widetilde{\rho} \frac{w}{\|w\|_{\Sigma^{-1}} } \Big \|_{\Sigma^{-1}} \notag \\ &\leq \| w-\rho^{(\theta)} (w^{\prime}) \|,
\end{align}
since \( \rho^{(\theta)} (w^{\prime}) \in \mathcal{R}\) for any \( w^{\prime}\). Incorporating this with \eqref{eq:si_consensus_expression}, observe that for consensus we have
\begin{align}
    0 < S_{i}^{(\theta)} (w,w^{\prime}) \leq  \widetilde{S} (w)
\end{align}
for a Lebesgue integrable function $\widetilde{S} (w)$:
\begin{equation}
  \widetilde{S} (w) \triangleq  \begin{dcases}
    \frac{1 }{\sqrt{(2\pi)^{K} \text{det}( \Sigma  ) }} , \qquad \|w \|_{\Sigma^{-1}} < \widetilde{\rho}\\
   \frac{\text{exp} \Big \{ -\frac{1}{2} \Big \| w - \widetilde{\rho} \frac{w}{\|w\|_{\Sigma^{-1}} } \Big \|_{\Sigma^{-1}}^2 \Big \} }{\sqrt{(2\pi)^{K} \text{det}( \Sigma  ) }}, \text{elsewhere}
    \end{dcases}.
\end{equation}
 Therefore, the kernel \( \bT(\theta |\theta^\prime) S_{i}^{(\theta)} (w,w^{\prime})\) of the recursion \eqref{eq:fialpha_recursion} satisfies the conditions required by \cite[Theorem 5.7.4]{lasota1998chaos} and we conclude that: 
\begin{align}
    & \lim_{i \to \infty} \big \| f_{i}  - f_{\infty}\big \|_\textup{TV} = 0. \label{eq:app_eq_stability_noq}
\end{align}
 Consider time independent random variables \( \{ \btheta^\circ_{\infty}, \w_{\infty} \} \) whose joint pdf is given by \( f_{\infty} \).  
The convergence in total variation \eqref{eq:app_eq_stability_noq} implies convergence in distribution (defined in \eqref{eq:def_in_dist_conv}), which means that  \( \{ \btheta_{i}^\circ, \w_{i} \}\) converge to limiting random variables \( \{ \btheta^\circ_{\infty}, \w_{\infty} \}\) in distribution, i.e., \( \{ \btheta_{i}^\circ, \w_{i} \} \stackrel{d}{\rightsquigarrow} \{ \btheta^\circ_{\infty}, \w_{\infty} \} \). As a result, if we define for consensus
\begin{align}
     p_{k,\infty} &=\int_{w_k=- \infty}^{0} f_{k,\infty} (1,w_k)dw_k + \int_{w_k=0}^{\infty} f_{k,\infty} (0,w_k)dw_k,
\end{align}
where
\begin{equation}
     f_{k,\infty} (\theta,w_k) =\int \!\!\cdots \!\!\int \! \!  f_{\infty} (\theta,w)dw_1\cdots dw_{k-1}dw_{k+1}\cdots dw_K,
\end{equation}
we obtain the convergence to the steady-state error probability
\begin{align}
    \lim_{i \to \infty} p_{k,i} = p_{k, \infty}.
\end{align}
Similarly for diffusion.

\section{Proof of Lemma~\ref{prop:network_disagreement}}\label{sec:disagreement_example}
We verify that in general there is no network agreement by providing a counter-example. Consider the following special case of the binary hypothesis testing problem described in Sec. \ref{sec:probability_error}: \\

\begin{figure}[ht]
	\centering
	\includegraphics[width=2.9in]{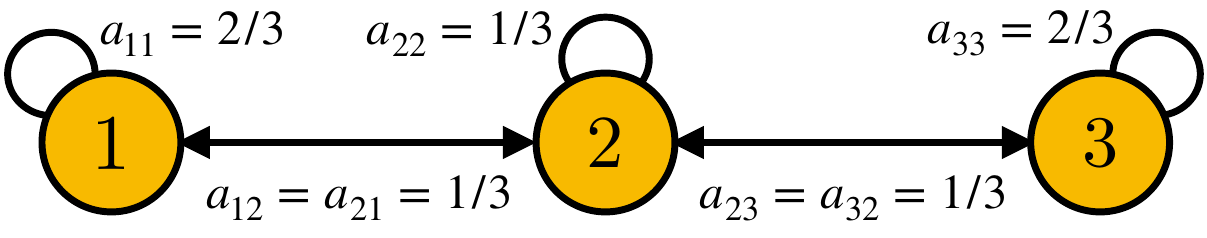}
	\caption{Network of \( K=3\) agents.}
	\label{fig:three_network}
\end{figure}

\noindent\textbf{The model}: Consider the network of \( K=3 \) agents in Fig.~\ref{fig:three_network}. Observe that the network is strongly-connected. Moreover, the combination weights satisfy the doubly-stochastic and symmetric matrix Assumption~\ref{as:network_top}. Assume for simplicity of notation that all agents have the same Gaussian observation models with \( \sigma_k^2=\sigma^2 \). The transition Markov chain is a binary symmetric channel with transition probability \( \alpha = 0.5 \). \\

\noindent\textbf{System equilibrium}: It follows that for each agent \( k \), regardless of the belief \( \bmu_{k,i-1}(\theta_{i-1}) \),
\begin{align}\label{eq:ck_cent_example}
\bmeta_{k,i} (\theta_i) &= \sum_{\theta_{i-1}\in\Theta} \bT(\theta_{i}| \theta_{i-1}) \bmu_{k,i-1}(\theta_{i-1}) \notag \\
&=\alpha \bmu_{k,i-1}(0) + \alpha \bmu_{k,i-1}(1) = 0.5.
\end{align}
This also means that, for each agent \( k \),
\begin{align}
     \log \frac{\bmeta_{k,i}(1)}{\bmeta_{k,i}(0)} = 0,
\end{align}
and according to \eqref{eq:logbelief_agent_rec},
\begin{align}\label{eq:logbelief_agent_rec_example}
    \w_{k,i} = \sum_{\ell \in \mathcal{N}_k} a_{\ell k} &\gamma \log \frac{L_{\ell}(\bxi_{\ell,i} | 1) }{L_{\ell}(\bxi_{\ell,i} | 0)}.
\end{align}
Moreover, since the likelihood functions of the agents are identical, the entries 
\begin{align}
    \bm{\nu}_{k,i} \triangleq \gamma \log \frac{L_{k}(\bxi_{k,i} | 1) }{L_{k}(\bxi_{k,i} | 0)}
\end{align}
of \( \bm{\nu}_{i} \in \mathbb{R}^3 \) are i.i.d. Gaussian random variables (with mean in \eqref{eq:v_gaussian_mean} and variance in \eqref{eq:covar_definition}) given the true hypothesis. For the peripheral agent \( 1 \), it holds that
\begin{align}
     \w_{1,i} = \sum_{\ell \in \mathcal{N}_k} a_{\ell k} \bm{\nu}_{\ell,i} = \frac{2}{3} \bm{\nu}_{1,i} + \frac{1}{3} \bm{\nu}_{2,i},
\end{align}
and for the central agent \( 2 \),
\begin{align}
    \w_{2,i} =  \frac{1}{3} \bm{\nu}_{1,i} + \frac{1}{3} \bm{\nu}_{2,i} + \frac{1}{3} \bm{\nu}_{3,i}.
\end{align}
Observe that the log-belief ratio \(  \w_{k,i} \) for each agent \( k \) is a Gaussian random variable, with mean and variance parameters that do not depend on time. However, even though \( \w_{1,i} \) and \(  \w_{2,i} \) have the same mean, their variances and hence their distributions are not the same. This proves that agents do not converge to the same random variable and do not have the same steady-state error probability. In this particular example, the central agent has less error probability due to smaller variance. For the consensus case, a similar counter-example can be formed by allowing agents to have different likelihoods.

\bibliographystyle{IEEEbib}
\bibliography{refs}

\begin{thebibliography}{10}

\bibitem{kayaalp2022dslw}
M.~Kayaalp, V.~Bordignon, S.~Vlaski, and A.~H. Sayed,
\newblock ``Hidden {Markov} modeling over graphs,''
\newblock in {\em Proc. IEEE Data Science and Learning Workshop (DSLW)},
  Singapore, Singapore, 2022, pp. 1--6.

\bibitem{olfati_2007}
R.~Olfati-Saber,
\newblock ``Distributed {Kalman} filtering for sensor networks,''
\newblock in {\em Proc. IEEE Conference on Decision and Control}, New Orleans,
  LA, USA, 2007, pp. 5492--5498.

\bibitem{khan_2008}
U.~A. Khan and J.~M.~F. Moura,
\newblock ``Distributing the {Kalman} filter for large-scale systems,''
\newblock {\em IEEE Transactions on Signal Processing}, vol. 56, no. 10, pp.
  4919--4935, 2008.

\bibitem{cattivelli_2010}
F.~S. Cattivelli and A.~H. Sayed,
\newblock ``Diffusion strategies for distributed {Kalman} filtering and
  smoothing,''
\newblock {\em IEEE Transactions on Automatic Control}, vol. 55, no. 9, pp.
  2069--2084, 2010.

\bibitem{talebi2021}
S.~P. Talebi, S.~Werner, V.~Gupta, and Y.-F. Huang,
\newblock ``On stability and convergence of distributed filters,''
\newblock {\em IEEE Signal Processing Letters}, vol. 28, pp. 494--498, 2021.

\bibitem{qian2022}
J.~Qian, P.~Duan, Z.~Duan, G.~Chen, and L.~Shi,
\newblock ``Consensus-based distributed filtering with fusion step analysis,''
\newblock {\em Automatica}, vol. 142, pp. 110408, 2022.

\bibitem{moradi2022}
A.~Moradi, N.~K.~D. Venkategowda, S.~P. Talebi, and S.~Werner,
\newblock ``Privacy-preserving distributed {Kalman} filtering,''
\newblock {\em IEEE Transactions on Signal Processing}, vol. 70, pp.
  3074--3089, 2022.

\bibitem{hlinka_2012}
O.~Hlinka, O.~Slučiak, F.~Hlawatsch, P.~M. Djurić, and M.~Rupp,
\newblock ``Likelihood consensus and its application to distributed particle
  filtering,''
\newblock {\em IEEE Transactions on Signal Processing}, vol. 60, no. 8, pp.
  4334--4349, 2012.

\bibitem{battistelli_2014}
G.~Battistelli and L.~Chisci,
\newblock ``{Kullback–Leibler} average, consensus on probability densities,
  and distributed state estimation with guaranteed stability,''
\newblock {\em Automatica}, vol. 50, no. 3, pp. 707--718, 2014.

\bibitem{dedecius_2017}
K.~Dedecius and P.~M. Djurić,
\newblock ``Sequential estimation and diffusion of information over networks: A
  {Bayesian} approach with exponential family of distributions,''
\newblock {\em IEEE Transactions on Signal Processing}, vol. 65, no. 7, pp.
  1795--1809, 2017.

\bibitem{bandyopadhyay_2018}
S.~Bandyopadhyay and S.~Chung,
\newblock ``Distributed {Bayesian} filtering using logarithmic opinion pool for
  dynamic sensor networks,''
\newblock {\em Automatica}, vol. 97, pp. 7--17, 2018.

\bibitem{krishnamurthy_2016}
V.~Krishnamurthy,
\newblock {\em Partially Observed Markov Decision Processes: From Filtering to
  Controlled Sensing},
\newblock Cambridge University Press, 2016.

\bibitem{ustebay11}
D.~Üstebay, M.~Coates, and M.~Rabbat,
\newblock ``Distributed auxiliary particle filters using selective gossip,''
\newblock in {\em Proc. IEEE ICASSP}, 2011, pp. 3296--3299.

\bibitem{papa2019}
G.~Papa, R.~Repp, F.~Meyer, P.~Braca, and F.~Hlawatsch,
\newblock ``Distributed {Bernoulli} filtering using likelihood consensus,''
\newblock {\em IEEE Transactions on Signal and Information Processing over
  Networks}, vol. 5, no. 2, pp. 218--233, 2019.

\bibitem{bordin2022}
C.~J. Bordin, C.~G. {de Figueredo}, and M.~G.~S. Bruno,
\newblock ``Distributed particle filters for state tracking on the {Stiefel}
  manifold using tangent space statistics,''
\newblock in {\em Proc. IEEE ICASSP}, 2022, pp. 5488--5492.

\bibitem{calvo2022distributed}
M.~Calvo-Fullana and J.~P. How,
\newblock ``Distributed filtering with value of information censoring,''
\newblock {\em arXiv:2204.00474}, 2022.

\bibitem{bruno2018}
M.~G.S. Bruno and S.~S. Dias,
\newblock ``A {Bayesian} interpretation of distributed diffusion filtering
  algorithms [lecture notes],''
\newblock {\em IEEE Signal Processing Magazine}, vol. 35, no. 3, pp. 118--123,
  2018.

\bibitem{hua2019}
J.~Hua and C.~Li,
\newblock ``Distributed robust {Bayesian} filtering for state estimation,''
\newblock {\em IEEE Transactions on Signal and Information Processing over
  Networks}, vol. 5, no. 3, pp. 428--441, 2019.

\bibitem{acemoglu_2011}
D.~Acemoglu, M.~A. Dahleh, I.~Lobel, and A.~Ozdaglar,
\newblock ``Bayesian learning in social networks,''
\newblock {\em The Review of Economic Studies}, vol. 78, no. 4, pp. 1201--1236,
  2011.

\bibitem{krishnamurthy_2013}
V.~Krishnamurthy and H.~V. Poor,
\newblock ``Social learning and {Bayesian} games in multiagent signal
  processing: how do local and global decision makers interact?,''
\newblock {\em IEEE Signal Processing Magazine}, vol. 30, no. 3, pp. 43--57,
  2013.

\bibitem{bordignon_2021}
V.~Bordignon, V.~Matta, and A.~H. Sayed,
\newblock ``Adaptive social learning,''
\newblock {\em IEEE Transactions on Information Theory}, vol. 67, no. 9, pp.
  6053--6081, 2021.

\bibitem{ping2022}
P.~Hu, V.~Bordignon, S.~Vlaski, and A.~H. Sayed,
\newblock ``Optimal combination policies for adaptive social learning,''
\newblock in {\em Proc. IEEE ICASSP}, 2022, pp. 5842--5846.

\bibitem{jadbabaie_2012}
A.~Jadbabaie, P.~Molavi, A.~Sandroni, and A.~Tahbaz-Salehi,
\newblock ``Non-{Bayesian} social learning,''
\newblock {\em Games and Economic Behavior}, vol. 76, no. 1, pp. 210--225,
  2012.

\bibitem{nedic_2017}
A.~Nedić, A.~Olshevsky, and C.~A. Uribe,
\newblock ``Fast convergence rates for distributed non-{Bayesian} learning,''
\newblock {\em IEEE Transactions on Automatic Control}, vol. 62, no. 11, pp.
  5538--5553, 2017.

\bibitem{hare2020}
J.~Z. Hare, C.~A. Uribe, L.~Kaplan, and A.~Jadbabaie,
\newblock ``Non-{Bayesian} social learning with uncertain models,''
\newblock {\em IEEE Transactions on Signal Processing}, vol. 68, pp.
  4178--4193, 2020.

\bibitem{zhao_2012}
X.~Zhao and A.~H. Sayed,
\newblock ``Learning over social networks via diffusion adaptation,''
\newblock in {\em Proc. Asilomar Conference on Signals, Systems and Computers},
  2012, pp. 709--713.

\bibitem{lalitha_2018}
A.~Lalitha, T.~Javidi, and A.~D. Sarwate,
\newblock ``Social learning and distributed hypothesis testing,''
\newblock {\em IEEE Transactions on Information Theory}, vol. 64, no. 9, pp.
  6161--6179, 2018.

\bibitem{valentina2021}
V.~Shumovskaia, K.~Ntemos, S.~Vlaski, and A.~H. Sayed,
\newblock ``Online graph learning from social interactions,''
\newblock in {\em Proc. Asilomar Conference on Signals, Systems and Computers},
  Pacific Grove, CA, USA, 2021, pp. 1263--1267.

\bibitem{inan2022social}
Y.~Inan, M.~Kayaalp, E.~Telatar, and A.~H. Sayed,
\newblock ``Social learning under randomized collaborations,''
\newblock {\em arXiv:2201.10957}, 2022.

\bibitem{acemoglu_2008}
D.~Acemoglu, A.~Nedic, and A.~Ozdaglar,
\newblock ``Convergence of rule-of-thumb learning rules in social networks,''
\newblock in {\em IEEE Conference on Decision and Control}, Cancun, Mexico,
  2008, pp. 1714--1720.

\bibitem{frongillo_2011}
R.~M. Frongillo, G.~Schoenebeck, and O.~Tamuz,
\newblock ``Social learning in a changing world,''
\newblock in {\em Internet and Network Economics}, A.~Saberi, Ed., pp.
  146--157. Springer Berlin Heidelberg, 2011.

\bibitem{shahrampour_2013}
S.~Shahrampour, S.~Rakhlin, and A.~Jadbabaie,
\newblock ``Online learning of dynamic parameters in social networks,''
\newblock in {\em Advances in Neural Information Processing Systems}, Lake
  Tahoe, NV, USA, 2013, vol.~26, pp. 1--9.

\bibitem{dasaratha2018learning}
K.~Dasaratha, B.~Golub, and N.~Hak,
\newblock ``Learning from neighbors about a changing state,''
\newblock {\em arXiv:1801.02042}, 2018.

\bibitem{sayed_2014}
A.~H. Sayed,
\newblock ``{Adaptation, learning, and optimization over networks},''
\newblock {\em Foundations and Trends in Machine Learning}, vol. 7, no. 4-5,
  pp. 311--801, July 2014.

\bibitem{viterbi_1967}
A.~Viterbi,
\newblock ``Error bounds for convolutional codes and an asymptotically optimum
  decoding algorithm,''
\newblock {\em IEEE Transactions on Information Theory}, vol. 13, no. 2, pp.
  260--269, 1967.

\bibitem{sayedProc}
A.~H. Sayed,
\newblock ``Adaptive networks,''
\newblock {\em Proc. IEEE}, vol. 102, no. 4, pp. 460--497, 2014.

\bibitem{koliander2022}
G.~Koliander, Y.~El-Laham, P.~M. Djurić, and F.~Hlawatsch,
\newblock ``Fusion of probability density functions,''
\newblock {\em Proc. IEEE}, vol. 110, no. 4, pp. 404--453, 2022.

\bibitem{heskes97}
T.~Heskes,
\newblock ``Selecting weighting factors in logarithmic opinion pools,''
\newblock in {\em Advances in Neural Information Processing Systems}, Denver,
  CO, USA, 1997, vol.~10, pp. 1--7.

\bibitem{matta_2020}
V.~Matta, V.~Bordignon, A.~Santos, and A.~H. Sayed,
\newblock ``Interplay between topology and social learning over weak graphs,''
\newblock {\em IEEE Open Journal of Signal Processing}, vol. 1, pp. 99--119,
  2020.

\bibitem{kayaalp2022_aa_ga}
M.~Kayaalp, Y.~Inan, E.~Telatar, and A.~H. Sayed,
\newblock ``On the arithmetic and geometric fusion of beliefs for distributed
  inference,''
\newblock {\em arXiv:2204.13741}, 2022.

\bibitem{shahrampour_2016}
S.~Shahrampour, A.~Rakhlin, and A.~Jadbabaie,
\newblock ``Distributed detection: Finite-time analysis and impact of network
  topology,''
\newblock {\em IEEE Transactions on Automatic Control}, vol. 61, no. 11, pp.
  3256--3268, 2016.

\bibitem{resnick2002}
S.~I. Resnick,
\newblock {\em Adventures in Stochastic Processes},
\newblock Birkhäuser, 2002.

\bibitem{polyanskiy_2017}
Y.~Polyanskiy and Y.~Wu,
\newblock ``Strong data-processing inequalities for channels and {Bayesian}
  networks,''
\newblock in {\em Convexity and Concentration}, E.~Carlen, M.~Madiman, and
  E.~M. Werner, Eds., pp. 211--249. Springer New York, New York, NY, 2017.

\bibitem{dobrushin_1956}
R.~L. Dobrushin,
\newblock ``Central limit theorem for nonstationary {Markov} chains. {I},''
\newblock {\em Theory of Probability \& Its Applications}, vol. 1, no. 1, pp.
  65--80, 1956.

\bibitem{chigansky2009intrinsic}
P.~Chigansky, R.~Liptser, and R.~Van~Handel,
\newblock ``Intrinsic methods in filter stability,''
\newblock in {\em Handbook of Nonlinear Filtering}, D.~Crisan and B.~Rozovskii,
  Eds. Oxford University Press, 2009.

\bibitem{shue98}
L.~Shue, B.D.O. Anderson, and S.~Dey,
\newblock ``Exponential stability of filters and smoothers for hidden {Markov}
  models,''
\newblock {\em IEEE Transactions on Signal Processing}, vol. 46, no. 8, pp.
  2180--2194, 1998.

\bibitem{mcdonald2020}
C.~McDonald and S.~Yüksel,
\newblock ``{Exponential filter stability via Dobrushin’s coefficient},''
\newblock {\em Electronic Communications in Probability}, vol. 25, no. none,
  pp. 1 -- 13, 2020.

\bibitem{csiszar11}
I.~Csiszár and J.~Körner,
\newblock {\em Information Theory: Coding Theorems for Discrete Memoryless
  Systems},
\newblock Cambridge University Press, 2nd edition, 2011.

\bibitem{shue_2001}
L.~Shue, S.~Dey, B.D.O. Anderson, and F.~De~Bruyne,
\newblock ``On state-estimation of a two-state hidden {Markov} model with
  quantization,''
\newblock {\em IEEE Transactions on Signal Processing}, vol. 49, no. 1, pp.
  202--208, 2001.

\bibitem{tao2011introduction}
T.~Tao,
\newblock {\em An Introduction to Measure Theory}, vol. 126,
\newblock American Mathematical Society, 2011.

\bibitem{kalos2009monte}
M.~H. Kalos and P.~A. Whitlock,
\newblock {\em {Monte Carlo} Methods},
\newblock John Wiley \& Sons, 2009.

\bibitem{williams_1991}
D.~Williams,
\newblock {\em Probability with Martingales},
\newblock Cambridge University Press, 1991.

\bibitem{lasota1998chaos}
A.~Lasota and M.~C. Mackey,
\newblock {\em Chaos, Fractals, and Noise: Stochastic Aspects of Dynamics},
  vol.~97,
\newblock Springer, 1998.

\bibitem{mann1943}
H.~B. Mann and A.~Wald,
\newblock ``On stochastic limit and order relationships,''
\newblock {\em The Annals of Mathematical Statistics}, vol. 14, no. 3, pp.
  217--226, 1943.

\bibitem{acemoglu13}
D.~Acemoglu, G.~Como, F.~Fagnani, and A.~Ozdaglar,
\newblock ``Opinion fluctuations and disagreement in social networks,''
\newblock {\em Mathematics of Operations Research}, vol. 38, no. 1, pp. 1--27,
  2013.

\bibitem{yildiz13}
E.~Yildiz, A.~Ozdaglar, D.~Acemoglu, A.~Saberi, and A.~Scaglione,
\newblock ``Binary opinion dynamics with stubborn agents,''
\newblock {\em ACM Trans. Econ. Comput.}, vol. 1, no. 4, pp. 1--30, 2013.

\bibitem{axelrod97}
R.~Axelrod,
\newblock ``The dissemination of culture: A model with local convergence and
  global polarization,''
\newblock {\em The Journal of Conflict Resolution}, vol. 41, no. 2, pp.
  203--226, 1997.

\bibitem{blondel09}
V.~D. Blondel, J.~M. Hendrickx, and J.~N. Tsitsiklis,
\newblock ``On {Krause's} multi-agent consensus model with state-dependent
  connectivity,''
\newblock {\em IEEE Transactions on Automatic Control}, vol. 54, no. 11, pp.
  2586--2597, 2009.

\bibitem{kayaalp2022random}
M.~Kayaalp, V.~Bordignon, and A.~H. Sayed,
\newblock ``Random information sharing over social networks,''
\newblock {\em arXiv:2203.02466}, 2022.

\bibitem{metropolis_1953}
N.~{Metropolis}, A.~W. {Rosenbluth}, M.~N. {Rosenbluth}, A.~H. {Teller}, and
  E.~{Teller},
\newblock ``Equation of state calculations by fast computing machines,''
\newblock {\em The Journal of Chemical Physics}, vol. 21, no. 6, pp.
  1087--1092, Jun 1953.

\bibitem{horn2012matrix}
R.~A. Horn and C.~R. Johnson,
\newblock {\em Matrix Analysis},
\newblock Cambridge University Press, 2012.

\end{thebibliography}






\vfill

\end{document}